\def\submitteddate{April 18, 2010}
\def\reviseddate{August 9, 2010}
\begin{document}

\newcommand{\creationtime}{\today \ \ @ \theampmtime}

\pagestyle{fancy}
\renewcommand{\headrulewidth}{0cm}
\chead{\footnotesize{Appuswamy-Franceschetti-Karamchandani-Zeger}}
\rhead{\footnotesize{\reviseddate}}
\lhead{}
\cfoot{Page \arabic{page} of \pageref{LastPage}} 

\renewcommand{\qedsymbol}{$\blacksquare$} 


\newtheorem{theorem}              {Theorem}     [section]
\newtheorem{lemma}      [theorem] {Lemma}
\newtheorem{corollary}  [theorem] {Corollary}
\newtheorem{proposition}[theorem] {Proposition}

\theoremstyle{remark}
\newtheorem{algorithm}  [theorem] {Algorithm}
\newtheorem{conjecture} [theorem] {Conjecture}

\theoremstyle{definition}         
\newtheorem{remark}     [theorem] {Remark}
\newtheorem{definition} [theorem] {Definition}
\newtheorem{example}    [theorem] {Example}
\newtheorem*{claim}  {Claim}
\newtheorem*{notation}  {Notation}

\newcommand{\Comment}[1]{& [\mbox{from  #1}]}
\newcommand{\mc}{\mathcal}
\newcommand{\mb}{\mathbf}
\newcommand{\abs}[1]{\left\lvert #1 \right\rvert}
\newcommand{\card}[1]{\abs{#1}}
\newcommand{\Network}{\mathcal{N}}
\newcommand{\node}{v}
\newcommand{\edge}{e}
\newcommand{\nodes}{\mathcal{V}}
\newcommand{\vertices}[1]{\nodes}
\newcommand{\inNodes}[1]{V_{#1}}
\newcommand{\Capacity}[1]{w(#1)}
\newcommand{\edges}{\mathcal{E}}
\newcommand{\inEdges}[1]{\mathcal{E}_i(#1)}
\newcommand{\outEdges}[1]{\mathcal{E}_o(#1)}
\newcommand{\encodingFunction}[2]{h_{#1,#2}}
\newcommand{\Dimension}[1]{\text{Dim}(#1)}
\newcommand{\TransVar}[1]{\hat{#1}}
\newcommand{\RecVar}[1]{\tilde{#1}}
\newcommand{\Code}{\mathcal{C}}
\newcommand{\Directed}[1]{\vec{#1}}
\newcommand{\alphabet}{\mathcal{A}}
\newcommand{\sources}{S}
\newcommand{\decodeAlphabet}{\mathcal{B}}
\newcommand{\delay}{\delta}
\newcommand{\integer}{\mathbb{Z}}
\newcommand{\remove}[1]{#1}
\newcommand{\removeTrue}[1]{}
\newcommand{\sourceSymbol}{\alpha}
\newcommand{\sourceVec}[1]{\sourceSymbol\!\left(#1\right)}
\newcommand{\sumVec}{p}
\newcommand{\edgeVar}[1]{z_{#1}}
\newcommand{\edgeSet}{C}
\newcommand{\cut}[1]{{min-cut{$\left(#1\right)$}}}
\newcommand{\cuts}[1]{\Lambda({#1})}
\newcommand{\minCut}[1]{{min-cut{$\left(#1\right)$}}}
\newcommand{\maxRange}[1]{\hat{R}_{#1}}
\newcommand{\maxRangeB}[1]{R_{#1}}
\newcommand{\receiver}{\rho}
\newcommand{\source}{\sigma}
\newcommand{\decodFunct}{\psi}
\newcommand{\setOfMessages}{x}
\newcommand{\vecInst}{\mathbf{\alpha}}
\newcommand{\messageVecInst}{\mathbf{w}}
\newcommand{\encodeMatrix}[1]{M_{#1}}
\newcommand{\VecComp}[2]{#1_{#2}}
\newcommand{\Tree}{\mathcal{T}}
\newcommand{\tree}{T}
\newcommand{\treeIndex}[1]{J_{#1}}
\newcommand{\gap}[1]{\mathcal{C}_{\mbox{{\scriptsize gap}}}\!\left(#1\right)}
\newcommand{\codCap}[1]{\mathcal{C}_{\mbox{{\scriptsize cod}}}\!\left(#1\right)}
\newcommand{\linCodCap}[1]{\mathcal{C}_{\mbox{{\scriptsize lin}}}\!\left(#1\right)}
\newcommand{\routCap}[1]{\mathcal{C}_{\mbox{{\scriptsize rout}}}\!\left(#1\right)}
\newcommand{\edgeFunct}[1]{g_{#1}}

\newcommand{\NbinaryBlocks}[1]{{#1}^{(M)}}
\newcommand{\sumset}[1]{Q\!\left(#1\right)}
\newcommand{\zeroAt}[1]{h^{(#1)}}
\newcommand{\invZeroAt}[1]{h_{#1}^{-1}}
\newcommand{\maxWeight}[1]{\card{#1}_{H}}
\newcommand{\cardConst}[1]{\gamma(#1)}
\newcommand{\hammingWeight}[1]{\card{#1}_{H}}
\newcommand{\cardsources}{s}
\newcommand{\NOPROCESS}[1]{}
\newcommand{\FunIdentity}{f_{id}}
\newcommand{\FunMajority}{f_{maj}}
\newcommand{\FunSum}{f_{sum}}
\newcommand{\FunParity}{f_{parity}}
\newcommand{\FunMaximum}{f_{max}}
\newcommand{\FunMinimum}{f_{min}}
\newcommand{\footprintsize}{footprint size }
\newcommand{\footprintsizes}{footprint sizes }
\newcommand{\IndicesToIndex}{h}
\newcommand{\path}{P}
\newcommand{\steinerNumber}[1]{\Pi\!\left(#1\right)}
\newcommand{\lb}[1]{l\!\left(#1\right)}
\newcommand{\entropy}{\mathcal{H}}
\newcommand{\field}[1]{\mathbb{F}_{#1}}
\newcommand{\change}[1]{{\color{blue} #1}}
\newcommand{\defn}[1]{\textit{ #1}}
\begin{titlepage}

\setcounter{page}{0}

\title{Network Coding for Computing: Cut-Set Bounds%
\thanks{This work was supported by the National Science Foundation 
        and the UCSD Center for Wireless Communications.\newline
\indent The authors are with the Department of Electrical and Computer Engineering, 
        University of California, San Diego, La Jolla, CA 92093-0407. \ \ 
 (rathnam@ucsd.edu,\ massimo@ece.ucsd.edu,\ nikhil@ucsd.edu,\ zeger@ucsd.edu)
}}

\author{Rathinakumar Appuswamy \and Massimo Franceschetti \and Nikhil Karamchandani \and Kenneth Zeger}

\date{
\bigskip\bigskip
\textit{IEEE Transactions on Information Theory\\
Submitted: \submitteddate\\
Revised:  \reviseddate \\
}}

\maketitle
\begin{abstract}
The following \textit{network computing} problem is considered.
Source nodes in a directed acyclic network generate independent messages and a single receiver node
computes a target function $f$ of the messages.
The objective is to
maximize the average number of times $f$ can be computed per network usage, i.e., 
the ``computing capacity''.
The \textit{network coding} problem for a single-receiver network
is a special case of the network computing problem
in which all of the source messages must be reproduced at the receiver.
For network coding with a single receiver,
routing is known to achieve the capacity by achieving the network \textit{min-cut} upper bound.
We extend the definition of min-cut to the network computing problem
and show that the min-cut is still an upper bound on the maximum achievable rate and is tight  for computing 
(using coding)
any target function in multi-edge tree networks and for computing linear target functions in any network. 
We also 
study the bound's tightness for different classes of target functions.
\NOPROCESS{
As examples,
we demonstrate the existence of networks with irrational capacities
and networks which require non-linear coding to achieve their capacity.}
In particular, 
we give a lower bound on the computing capacity 
in terms of the Steiner tree packing number
and a differnet bound for symmetric functions.
We also show that for certain networks and target functions, the computing
capacity can be  less than an arbitrarily
small fraction of the min-cut bound.
\end{abstract}

\thispagestyle{empty}
\end{titlepage}

\clearpage

\section{Introduction}
We consider networks where source nodes 
generate independent messages and a single receiver node computes a 
target function $f$ of these messages.
The objective is to characterize the
maximum rate of computation, 
that is the maximum number of times
$f$ can be computed per network usage.
 
Giridhar and Kumar~\cite{Giridhar05} have recently stated: 
\begin{quote}
``In its most general form, computing a function in a network involves
communicating possibly correlated messages, to a specific destination,
at a desired fidelity with respect to a joint distortion criterion
dependent on the given function of interest.
This combines the complexity of
source coding of correlated sources, with rate distortion, different
possible network collaborative strategies for computing and
communication, and the inapplicability of the separation theorem
demarcating source and channel coding." 
\end{quote}
The overwhelming complexity of network computing suggests that simplifications 
be examined in order to obtain some understanding of the field.

We present a natural model of network
computing that is closely related to the network coding model of
Ahlswede, Cai, Li, and Yeung~\cite{Ahlswede-Cai-Li-Yeung-IT-Jul00,Yeung-book}.
Network coding is a
widely studied communication mechanism in the context of network
information theory.
In network coding, some nodes in the network are
labeled as sources and some as receivers.
Each receiver needs to
reproduce a subset of the messages generated by the source nodes,
and all nodes can act as relays and encode the information they
receive on in-edges, together with the information they generate if they are
sources, into codewords which are sent on their out-edges.
In
existing computer networks, the encoding operations are purely routing:
at each node, the codeword sent over an out-edge consists of
a symbol either received by the node, or generated by it if is a source.
It is known that
allowing more complex encoding than routing can in general
be advantageous in terms of communication
rate~\cite{Ahlswede-Cai-Li-Yeung-IT-Jul00,Harvey06, Ngai04}.
Network coding with a
single receiver is equivalent to a special case of our function
computing problem, namely when the function to be computed is the
identity, that is when the receiver wants to reproduce all the
messages generated by the sources.
In this paper, we study
network computation for target functions  different than the identity.

Some other approaches to network computation have also appeared in the
literature. 
In \cite{KornerMarton79,Alon-01,Vishal-Devavrat-06,Vishal-Devavrat-07,Ma08,Cuff09} 
network computing was considered as an extension of distributed source coding, 
allowing the sources to have a joint distribution and requiring that a function 
be computed with small error probability. 
A rate-distortion approach to the problem has been studied in \cite{Yamamoto82,Feng04,Vishal-07}. 
However, the complexity of network computing has restricted prior work to the analysis of elementary networks. 
Networks with noisy links were studied in
\cite{ElGamal87,Gallager88,ying07,goyal08,dutt08,Karam09,Ayaso07,Nazer-07,Ma09}
and distributed computation in networks using gossip algorithms was studied in
\cite{Kempe03, Boyd06, Aayoma08, Ayaso08, Dimakis06, Benezit07}. 
   
In the present paper,
our approach is somewhat (tangentially) related to the field of communication complexity 
\cite{Nisan-book, Yao79} 
which studies the minimum number of messages that two nodes need to exchange in 
order to compute a function of their inputs with zero error. 
Other studies of computing in networks have been considered in 
\cite{Giridhar05,Sundar07}, 
but these were restricted to the wireless communication 
protocol model of Gupta and Kumar~\cite{Gupta00}. 

In contrast, our approach is more closely associated with wired networks with independent noiseless links.
Our work is closest in spirit to the recent work of 
\cite{Ramamoorthy08,Langberg09,Rai10,Rai09} on computing the sum
(over a finite field) of source messages in networks. 
We note that in independent work,
Kowshik and Kumar\cite{Kowshik09} obtain the asymptotic maximum rate of computation 
in tree networks and present bounds for computation in networks where all nodes are sources.
   
Our main contributions are summarized in
Section~\ref{Sec:Contributions}, 
after formally introducing the network model.
\NOPROCESS{
A cut-set upper bound on the maximum rate of computing any target function in an arbitrary single-receiver network is
given in Section~\ref{Sec:UpperBound}.
In Section~\ref{Sec:LowerBounds}, 
this bound is proven to be achievable for
computing arbitrary target functions in tree networks and for computing linear target functions in arbitrary networks. 
We also provide lower bounds on the computing capacity for various target functions.
The tightness of these bounds is explored for different classes of target functions and
networks in Section~\ref{Sec:Tightness}. 
We show that there are target functions for which a computing rate of at least a constant
fraction of the min-cut can be achieved on any network, 
while for other target functions there exist networks such that a computing rate which is only an
arbitrarily small fraction of the min-cut can be achieved.
We discuss an example network and target function in detail in Section~\ref{Sec:Example}.
Finally, 
conclusions  are given in Section~\ref{Sec:conclusions}
and various lemmas are proven in an Appendix.
}
\subsection{Network model and definitions} \label{Sec:NetworkModel}
In this paper, a \textit{network} $\Network$ consists of a finite,
directed acyclic multigraph $G= (\nodes,\edges)$, 
a set of \textit{source nodes} $\sources = \{\source_1, \dots, \source_\cardsources \} \subseteq \ \nodes$, 
and a \textit{receiver} $\receiver \in \nodes$. 
Such a network is denoted by $ \Network = (G,\sources,\receiver)$.
We will assume that $\receiver \not \in \sources$
and that the graph%
\footnote{Throughout the paper, we will use ``graph" to mean a directed acyclic multigraph, 
and ``network" to mean a single-receiver network. 
We may sometimes write $\edges(G)$ to denote the edges of graph $G$.} 
$G$ contains a directed path from every node in $\nodes$ to the receiver $\receiver$. 
For each node $u \in \nodes$, 
let $\inEdges{u}$ and $\outEdges{u}$ denote the set of in-edges and out-edges of $u$ respectively. 
We will also assume
(without loss of generality)
that if a network node has no in-edges,
then it is a source node.

An \textit{alphabet} $\alphabet$ is a finite set of size at least two. 
For any positive integer $m$,
any vector $x \in \alphabet^{m}$, 
and any $i \in \{1,2,\ldots,m\}$,
let $\VecComp{x}{i}$ denote the $i$-th component of $x$.
For any index set $I = \{i_1,i_2,\ldots,i_q\} \subseteq \{1,2,\ldots,m\}$ 
with $i_1 < i_2 < \ldots < i_q$, let $\VecComp{x}{I}$ denote the vector 
$(\VecComp{x}{i_1},\VecComp{x}{i_2},\ldots,\VecComp{x}{i_q}) \in \alphabet^{\card{I}}$. 

The \textit{network computing} problem consists of a network $\Network$ and a \textit{target function} $f$ of the form
$$
f : \alphabet^{\cardsources} \longrightarrow \decodeAlphabet 
$$
(see Definition~\ref{Defn:ExTargetFns} for some examples). 
We will also assume that any target function depends on all network sources
(i.e. they cannot be constant functions of any one of their arguments).
Let $k$ and $n$ be positive integers. 
Given a network $\Network$ with source set $\sources$ and alphabet $\alphabet$, 
a \textit{message generator} is any mapping 
$$
\sourceSymbol \ : \ \sources \longrightarrow \alphabet^k .
$$
For each source $\source_i$, 
$\sourceVec{\source_i}$ is called a \textit{message vector}
and its components  
$\sourceVec{\source_i}_1, \dots, \sourceVec{\source_i}_k$
are called \textit{messages}.%
\footnote{
For simplicity, we assume that each source has exactly one message vector associated with it,
but all of the results in this paper can readily be extended to the more general case.}
\begin{definition}
A $(k,n)$ \textit{network code} for computing a target function $f$ in a network $\Network$ consists of the following: 
\begin{itemize}
\item[(i)] 
For any node $\node \in \nodes - \receiver$ and any out-edge $\edge \in \outEdges{\node}$, 
an {\it encoding function}:
\begin{align*}
h^{(e)} : \begin{cases}
&\displaystyle \left(\prod_{\hat{\edge} \in \inEdges{\node}} \alphabet^{n} \right) \times 
\alphabet^{k} \longrightarrow \alphabet^{n}  \quad \mbox{if $\node$ is a source node}\\
&\displaystyle \prod_{\hat{\edge} \in \inEdges{\node}} \alphabet^{n} \longrightarrow \alphabet^{n}  
\hspace{.9in} \mbox{otherwise}
\end{cases}
\end{align*}
\item[(ii)] 
A {\it decoding function}:
$$
\decodFunct
: \ \prod_{j=1}^{\card{\inEdges{\receiver}}} \alphabet^{n} \longrightarrow \decodeAlphabet^{k} .
$$
\end{itemize}
\end{definition}
Given a $(k,n)$ network code,
every edge $\edge \in \edges$ \textit{carries a vector} $z_{\edge}$ of at most $n$ alphabet symbols,%
\footnote{By default, we will  assume that edges carry exactly $n$ symbols.}
which is obtained by evaluating the encoding function 
$h^{(e)}$ on the set of vectors carried by the in-edges to the node and the node's message vector if it is a source. 
The objective of the
receiver is to compute the target function $f$ of the source messages, 
for any arbitrary message generator $\sourceSymbol$.
More precisely, the receiver constructs a vector of $k$ alphabet symbols such that for each $i \in \{1, 2,\ldots, k\}$,
the $i$-th component of the receiver's computed vector equals the value of the desired 
target function $f$ applied to the $i$-th components of the source message vectors,
for any choice of message generator $\sourceSymbol$.
Let $e_1, e_2, \ldots, e_{\card{\inEdges{\receiver}}}$ denote the in-edges of the receiver. 
\begin{definition}
A $(k,n)$ network code is called  \textit{a solution for computing $f$ in $\Network$} (or simply {\it a $(k, n)$ solution})
if the decoding function $\decodFunct$ is such that
for each $j \in \{1, 2,\ldots, k\}$ and for every message generator $\sourceSymbol$,
we have
\begin{align}
\decodFunct\left(\edgeVar{e_1},\cdots,\edgeVar{e_{\card{\inEdges{\receiver}}}}\right)_j 
&= f\left(\sourceVec{\source_1}_j,\cdots,\sourceVec{\source_{\cardsources}}_j\right). \label{Eq:decodingFunction}
\end{align}
If there exists a $(k, n)$ solution, 
we say the rational number $k/n$ is an {\it achievable computing rate}.
\end{definition}
In the network coding literature, 
one definition of the \textit{coding capacity} of a network is the supremum of all achievable coding rates 
\cite{Cannons-Dougherty-Freiling-Zeger05, Dougherty-Freiling-Zeger06}. 
We adopt an analogous definition for computing capacity. 
\begin{definition}
The \textit{computing capacity} of a network $\Network$ with respect to target function $f$ is
$$
\codCap{\Network,f} \; = \; \sup  \Big\{ \frac{k}{n} \ : \ 
    \mbox{$\exists$  $(k,n)$ network code for computing $f$ in $\Network$}\Big\}.
$$
\label{def:cap}
\end{definition}
Thus, the computing capacity is the supremum of all achievable computing rates 
for a given network $\Network$ and a target function $f$. 
Some example target functions are defined below.
\begin{definition} \label{Defn:ExTargetFns}\ \\
\begin{table}[hht] 
\begin{center}
\renewcommand{\arraystretch}{1.2} 
\begin{tabular}{|c||c|c|c|}
\hline 
Target function  $f$			& 			Alphabet $\mathcal{A}$					&			 $f\left(x_{1}, \ldots , x_{s} \right)$	& 		Comments 	 \\
\hline
\hline 
\defn{identity}				& 		arbitrary   				& $\left(x_{1},  \ldots , x_{s} \right)$	& 	 \\
\hline 
\defn{arithmetic sum}	& 	$\{0,1,\ldots,q-1\}$	& $x_1 + x_2 + \cdots + x_s$	& `$+$' is ordinary integer addition	  \\
\hline
\defn{mod $r$ sum}     &   $\{0,1,\ldots,q-1\}$	& $x_{1} \oplus x_{2} \oplus \ldots \oplus x_{s}$	& $\oplus$ is $\bmod$ $r$ addition	 \\
\hline 
\defn{histogram}				& 	$\{0,1,\ldots,q-1\}$	& $\left(c_0,c_1,\ldots,c_{q - 1}\right)$	& $c_i = \left|\left\{j : x_j = i\right\}\right|$ for each $i \in \mathcal{A}$	  \\
\hline
\defn{linear}					&  any finite field 								& $a_{1} x_{1} + a_{2} x_2 + \ldots + a_{s} x_{s}$	& 	arithmetic performed in the field \\
\hline 
\defn{maximum}					&  any ordered set 				& $\max \left\{x_{1}, \ldots , x_{s} \right\}$	& 	 \\
\hline
\end{tabular}
\end{center}
\label{Tab:checkFunction}
\end{table} 
\end{definition}
\begin{definition}
\label{Defn:EquivRelation}
For any target function 
$f  :  \alphabet^{\cardsources} \longrightarrow \decodeAlphabet$, 
any index set 
$I \subseteq \{1,2,\ldots,\cardsources\}$, 
and 
any $a, b \in \alphabet^{\card{I}}$, 
we write $a \equiv b$ if for every 
$x, y \in \alphabet^{\cardsources}$, 
we have $f(x) = f(y)$
whenever
$\VecComp{x}{I} = a$, $\VecComp{y}{I} = b$, 
and
$\VecComp{x}{j} = \VecComp{y}{j}$ for all $j\not\in I$. 
\end{definition}
It can be verified that $\equiv$ is an equivalence relation%
\footnote{ 
Witsenhausen \cite{Witsenhausen76} represented this equivalence relation 
in terms of the independent sets of a characteristic graph
and his representation has been used in various problems related to function computation 
\cite{Vishal-Devavrat-06, Vishal-Devavrat-07, Alon-01}. 
Although $\equiv$ is defined with respect to a particular index set $I$ and a function $f$, 
we do not make this dependence explicit -- the values of $I$ and $f$ will be clear from the context. } 
for every $f$ and $I$. 
\begin{definition}
\label{Defn:NoOfEquivClasses}
For every $f$ and $I$,
let $\maxRangeB{I,f}$ denote the total number of equivalence classes induced by $\equiv$
and let
$$
\Phi_{I, f} : \alphabet^{\card{I}} \longrightarrow \left\{1,2,\ldots,\maxRangeB{I,f}\right\}
$$
be any function such that $\Phi_{I, f}(a) = \Phi_{I, f}(b)$ iff $a \equiv b$. 
\end{definition}
That is, 
$\Phi_{I, f}$ assigns a unique index to each equivalence class, and
$$
\maxRangeB{I, f} = \card{ \left\{ \Phi_{I,f}(a) : a \in \alphabet^{\card{I}} \right\} } .
$$
The value of $\maxRangeB{I, f}$ is independent of the choice of $\Phi_{I,f}$.
We call $\maxRangeB{I, f}$ the \textit{\footprintsize} of $f$ with respect to $I$. 
\remove{
\begin{figure}[ht]
\begin{center}
\psfrag{X}{\Large$X$}
\psfrag{Y}{\Large$Y$}
\psfrag{x}{\large$x$}
\psfrag{y}{\large$y$}
\psfrag{g}{\large$g(x)$}
\psfrag{f}{\large$f(x, y)$}
\scalebox{.7}{\includegraphics{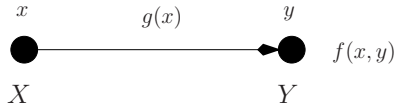}}
\end{center}
\caption{$X$, $Y$ are two sources with messages $x$ and $y$ respectively. 
$X$ communicates $g(x)$ to $Y$ so that $Y$ can compute a function $f$ of 
$x$ and $y$.}
\label{Fig:EquivClass}
\end{figure}
}

\begin{remark} 
Let $I^{c} = \{1,2,\ldots,\cardsources\} - I$. 
The \footprintsize $\maxRangeB{I, f}$ has the following interpretation
(see Figure~\ref{Fig:EquivClass}). 
Suppose a network has two nodes, $X$ and $Y$,
and both are sources.
A single directed edge connects $X$ to $Y$.
Let $X$ generate $x \in \alphabet^{\card{I}}$ and 
$Y$ generate $y \in \alphabet^{\card{I^{c}}}$. 
$X$ communicates a function $g(x)$ of its input, 
to $Y$ so that $Y$ can compute 
$f(a)$ where $a \in \alphabet^{\cardsources}$, $\VecComp{a}{I} = x$, and $\VecComp{a}{I^{c}} = y$. 
Then for any $x, \hat{x} \in \alphabet^{\card{I}}$ such that $x \not\equiv \hat{x}$, 
we need $g(x) \ne g(\hat{x})$. 
Thus $\card{ g\left( \alphabet^{\card{I}}\right) } \ge \maxRangeB{I, f}$, 
which implies a lower bound on a certain amount of ``information" that $X$ needs to send to $Y$ to 
ensure that it can compute the function $f$. 
Note that $g = \Phi_{I, f}$ achieves the lower bound.   
We will use this intuition to establish a cut-based upper bound on the computing capacity 
$\codCap{\Network, f}$ of any network $\Network$ with respect to any target function 
$f$,
and to devise a capacity-achieving scheme for computing any target function in multi-edge tree networks.
\end{remark}    

\begin{definition}
A set of edges $C \subseteq \edges$ in network $\Network$
 is said to \textit{separate} 
sources $\source_{m_1}, \ldots, \source_{m_d}$
from the receiver $\receiver$, 
if for each $i \in \{1, 2,\ldots, d\}$,
every directed path from
$\source_{m_i}$ to $\receiver$ contains at least one edge in $C$.
The set $C$ is said to be a \textit{cut} 
in $\Network$ if it separates at least one
source from the receiver.
For any network $\Network$,
define $\cuts{\Network}$ to be the collection of all cuts in $\Network$.
For any cut $C \in \cuts{\Network}$ and any target function $f$, 
define
\begin{align} 
I_C &= \left\{i : \mbox{$C$ separates $\source_i$ from the receiver} \right\}\notag\\
\maxRangeB{C,f} &= \maxRangeB{I_C, f} \label{Eq:maxRangeDef}.
\end{align}
\end{definition}
Since target functions depend on all sources,
we have $\maxRangeB{C,f} \ge 2$ for any cut $C$ and any target function $f$.
The \footprintsizes $\maxRangeB{C,f}$ for some example target functions are computed below. 
A \textit{multi-edge tree} is a graph such that for every node 
$\node \in \nodes$, 
there exists a node $u$ such that all the out-edges of $\node$ are in-edges to $u$, i.e., 
$\outEdges{\node} \subseteq \inEdges{u}$
(e.g. see Figure~\ref{Fig:multiEdgeTreeExample}).

\begin{figure}[ht]
\begin{center}
\scalebox{.6}{\includegraphics{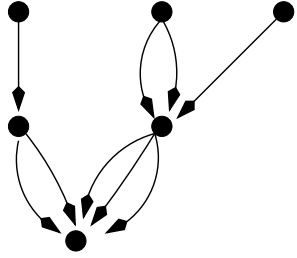}}
\end{center}
\caption{An example of a multi-edge tree.} \label{Fig:multiEdgeTreeExample}
\end{figure}
\subsection{Classes of target functions}
We study the following four classes of target functions: 
(1) divisible,
(2) symmetric,
(3) $\lambda$-exponential,
(4) $\lambda$-bounded.
\begin{definition}\label{Defn:DivisibleFunc}
A target function $f : \alphabet^{\cardsources} \longrightarrow \decodeAlphabet$
is \textit{divisible} if 
for every index set $I \subseteq \{1, \dots, \cardsources\}$,
there exists a
finite set $\decodeAlphabet_{I}$
and a function
$f^{I} : \alphabet^{\card{I}} \longrightarrow \decodeAlphabet_{I}$
such that the following hold:
\begin{itemize}
\item[(1)]
$f^{\{1, \dots, \cardsources\}} = f$ 
\item[(2)]
$\card{ f^{I}\left(\alphabet^{\card{I}}\right) }  \le  \card{ f\left(\alphabet^{\cardsources}\right) }$ 
\item[(3)]
For every partition $\{I_1, \ldots, I_{\gamma}\}$ of $I$,
there exists a function \\
$g: \decodeAlphabet_{I_1} \times \dots \times \decodeAlphabet_{I_{\gamma}} \longrightarrow \decodeAlphabet_I$ 
such that for every
$\setOfMessages \in \alphabet^{\card{I}}$, we have\\
$f^{I}\!\left(\setOfMessages\right)
= g \left(f^{I_1}\!\left(\VecComp{\setOfMessages}{I_1}\right),
  \ldots,
  f^{I_{\gamma}}\!\left(\VecComp{\setOfMessages}{I_{\gamma}}\right)\right)$.
\end{itemize}
\end{definition}
Examples of divisible target functions include the
identity, maximum, $\bmod$ $r$ sum, and arithmetic sum. 

Divisible functions have been studied previously%
\footnote{The definitions 
in \cite{Giridhar05, Sundar07}
are similar to ours but slightly more restrictive.} 
by
Giridhar and Kumar \cite{Giridhar05} 
and
Subramanian, Gupta, and Shakkottai \cite{Sundar07}.
Divisible target functions can be computed in networks in a divide-and-conquer fashion as follows.
For any arbitrary partition
$\{I_1,\ldots,I_{\gamma}\}$ of the source indices $\{1, \dots, \cardsources\}$,
the receiver $\receiver$ can evaluate the target function $f$ by
combining evaluations of
$f^{I_1}, \ldots, f^{I_{\gamma}}$.
Furthermore, 
for every $i=1, \ldots, \gamma$,
the target function $f^{I_i}$ can be evaluated
similarly by partitioning $I_i$ and this process can be repeated until the function value is obtained.

\begin{definition}\label{Defn:SymmetricFunc}
A target function $f : \alphabet^{\cardsources} \longrightarrow \decodeAlphabet$
is \textit{symmetric} if for any permutation
$\pi$ of $\{1,2,\ldots,\cardsources\}$ and
any vector $x \in \alphabet^{\cardsources}$,  
$$
f(\VecComp{x}{1},\VecComp{x}{2},\ldots,\VecComp{x}{\cardsources}) = 
f(\VecComp{x}{\pi(1)},\VecComp{x}{\pi(2)},\ldots,\VecComp{x}{\pi(\cardsources)}).
$$
\end{definition}
That is, 
the value of a symmetric target function is invariant with respect to the order of its arguments and hence, 
it suffices to evaluate the histogram target function for computing any symmetric target function.
Examples of symmetric functions include the arithmetic sum, maximum, and $\bmod$ $r$ sum.
Symmetric functions have been studied in the context of computing in networks by
Giridhar and Kumar \cite{Giridhar05},
Subramanian, Gupta, and Shakkottai \cite{Sundar07},
Ying, Srikant, and Dullerud \cite{ying07},
and \cite{Karam09}. 
\NOPROCESS{
For any target function $f : \alphabet^{\cardsources} \longrightarrow \decodeAlphabet$
and any index set $I \subseteq \{1, 2,\ldots, \cardsources\}$,
it can be verified that the \footprintsize satisfies
\begin{align}
\maxRangeB{I, f} \le \card{\alphabet}^{\card{I}} .
\label{eq:2}
\end{align}
}
\begin{definition}
\label{Defn:LambdaExpFunc}
Let $\lambda \in (0, 1]$.
A target function $f : \alphabet^{\cardsources} \longrightarrow \decodeAlphabet$
is said to be
\textit{$\lambda$-exponential} if its \footprintsize satisfies
$$
\maxRangeB{I, f} \ge \card{\alphabet}^{\lambda \card{I}} \ \mbox{for every } I \subseteq \{1, 2,\ldots, \cardsources\} .
$$
%
Let $\lambda \in (0,\infty)$.
A target function $f  :  \alphabet^{\cardsources} \longrightarrow \decodeAlphabet$
is said to be
\textit{$\lambda$-bounded} if its \footprintsize satisfies
$$
\maxRangeB{I, f} \le \card{\alphabet}^{\lambda} \ \mbox{for every } I \subseteq \{1, 2,\ldots, \cardsources\} .
$$
\end{definition}
\begin{example}
The following facts are easy to verify:
\begin{itemize}
\item
The identity function is $1$-exponential.

\item
Let $\alphabet$ be an ordered set.
The maximum (or minimum) function is $1$-bounded.

\item
Let $\alphabet = \{0,1,\ldots,q-1\}$ where  $q \ge 2$. 
The $\bmod$ $r$ sum target function with $q \ge r \ge 2$ is $\log_{q} r$-bounded.

\end{itemize}
\end{example}
\begin{remark}
Giridhar and Kumar \cite{Giridhar05} defined two classes of functions: 
\emph{type-threshold} and \emph{type-sensitive} functions. 
Both are sub-classes of symmetric functions. 
In addition, type-threshold functions are also divisible and $c$-bounded, 
for some constant $c$ that is independent of the network size. 
However, \cite{Giridhar05} uses a model of interference for simultaneous transmissions 
and their results do not directly compare with ours. 
\end{remark}
%
Following the notation in Leighton and Rao \cite{LeightonRao99},
the \textit{min-cut} of any network $\Network$ with unit-capacity edges is
\begin{align}\label{Eq:min-cut}
\mbox{\minCut{\Network}} = \underset{C \in \cuts{\Network}}{\min} \ \frac{\card{C}}{\card{I_{C}}} .
\end{align}
A more general version of the network min-cut plays a fundamental role in the field of multi-commodity flow 
\cite{LeightonRao99, Vazirani04}. 
The min-cut provides an upper bound on the maximum flow for any multi-commodity flow problem. 
The min-cut is also referred to as ``sparsity" by some authors, 
such as
Harvey, Kleinberg, and Lehman \cite{Harvey06}
and 
Vazirani \cite{Vazirani04}.
We next generalize the definition in \eqref{Eq:min-cut} to the network computing problem.
\begin{definition}
If $\Network$ is a network and $f$ is a target function, then define
\begin{align}
\label{Eq:min-f-cut}
\mbox{\cut{\Network,f}} = \underset{C \in \cuts{\Network}}{\min} 
 \ \frac{\card{C}}{\log_{\card{\alphabet}}\maxRangeB{C,f}} .
\end{align}
\end{definition}
\begin{example}
\
\begin{itemize}
\item If $f$ is the identity target function, 
then 
$$
\mbox{\cut{\Network,f}} = \underset{C \in \cuts{\Network}}{\min} 
 \ \frac{\card{C}}{\card{I_C}}.
$$
Thus for the identity function, 
the definition of min-cut in \eqref{Eq:min-cut} and \eqref{Eq:min-f-cut} coincide.
\item
Let $\alphabet = \{0,1,\ldots,q -1\}$.
If $f$ is the arithmetic sum target function, 
then 
\begin{equation}
\label{Eq:min-cutSum}
\mbox{\cut{\Network,f}} = \underset{C \in \cuts{\Network}}{\min} 
 \ \frac{\card{C}}{\log_{q} \left((q - 1)\card{I_C} + 1\right)}.
\end{equation}
\item
Let $\alphabet$ be an ordered set. 
If $f$ is the maximum target function, 
then 
$$
\mbox{\cut{\Network,f}} = \underset{C \in \cuts{\Network}}{\min} \card{C}.
$$
\end{itemize}
\end{example}

\subsection{Contributions}
\label{Sec:Contributions}

The main results of this paper are as follows. 
In Section~\ref{Sec:UpperBound},
we show
(Theorem~\ref{Th:upperBoundOnCodingCapacity})
that for any network $\Network$ and any target function $f$, 
the quantity \cut{\Network,f} is an upper bound on the computing capacity $\codCap{\Network,f}$.
In Section~\ref{Sec:LowerBounds},
we note that the computing capacity for any network with respect to the identity target function is equal 
to the min-cut upper bound 
(Theorem~\ref{Th:AchievabilityMinCutIdentity}).
We show that the min-cut bound on computing capacity can also be achieved 
for all networks with linear target functions over finite fields
(Theorem~\ref{Th:ModuloSumCodCap})
and for all multi-edge tree networks with any target function
(Theorem~\ref{Th:codCapTree}).
For any network and any target function, 
a lower bound on the computing capacity is given in terms of the Steiner tree packing number
(Theorem~\ref{Th:LowerBndGeneral}).
Another lower bound is given for networks 
with symmetric target functions
(Theorem~\ref{Th:SymmFuncCodCap}).
In Section~\ref{Sec:Tightness}, 
the tightness of the above-mentioned bounds is analyzed 
for divisible (Theorem~\ref{thm:divisible}), 
symmetric (Theorem~\ref{Th:SymmFuncGap}), 
$\lambda$-exponential (Theorem~\ref{Th:GapLambdaExp}),
and
$\lambda$-bounded (Theorem~\ref{Th:GapLambdaBded}) target functions.
For $\lambda$-exponential target functions,
the computing capacity is at least $\lambda$ times the min-cut.
If every non-receiver node in a network is a source,
then for $\lambda$-bounded target functions 
the computing capacity is at least a constant times the min-cut divided by $\lambda$.
It is also shown, with an example target function, that 
there are networks for which the computing capacity is less than an arbitrarily small fraction of the min-cut bound 
(Theorem~\ref{Th:UnboundedGap}). 
In Section~\ref{Sec:Example}, 
we discuss an example network and target function in detail to illustrate the above bounds. 
In Section~\ref{Sec:conclusions}, conclusions are given 
and various lemmas are proven in the Appendix.



\section{Min-cut upper bound on computing capacity}
\label{Sec:UpperBound}
The following shows that the maximum rate of computing a target function $f$ in a network $\Network$ 
is at most \cut{\Network,f}.
\begin{theorem} \label{Th:upperBoundOnCodingCapacity}
If $\Network$ is a network with target function $f$, then
$$
\codCap{\Network,f} \leq \mbox{\textup{\cut{\Network,f}}} .
$$
\end{theorem}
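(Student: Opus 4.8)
The plan is to prove the bound one cut at a time: since \cut{\Network,f} is a minimum over $\cuts{\Network}$, it suffices to show that every achievable rate $k/n$ satisfies $k/n \le \card{C}/\log_{\card{\alphabet}}\maxRangeB{C,f}$ for each cut $C$, and then take the minimum over $C$ and the supremum over solutions. So I would fix a $(k,n)$ solution, fix a cut $C$, and write $I = I_C$ for the set of sources it separates.

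The first ingredient is an information-flow lemma: the receiver's decoded vector is a function of the $\card{C}$ symbol-vectors carried on the edges of $C$ together with the message vectors of the sources \emph{not} separated by $C$. I would prove this by deleting the edges of $C$ from $G$ and processing the remaining nodes in topological order; since $C$ separates every source in $I$ from $\receiver$, no in-edge of $\receiver$ can be influenced by a source in $I$ except through a symbol already carried on $C$, so all of $\receiver$'s inputs, and hence the value of $\decodFunct$, are determined by the cut symbols and the remaining (non-separated) source messages.

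The second ingredient is a counting/fooling-set argument. Each edge carries a vector in $\alphabet^n$, so the $\card{C}$ edges of $C$ jointly realize at most $(\card{\alphabet}^n)^{\card{C}} = \card{\alphabet}^{n\card{C}}$ distinct configurations. On the other hand I would build, for each coordinate $j \in \{1,\dots,k\}$, a system of representatives of the $\maxRangeB{I,f}$ equivalence classes of $\equiv$ on $\alphabet^{\card{I}}$, and assemble these coordinatewise into $\maxRangeB{I,f}^{\,k}$ distinct message assignments to the separated sources. Using Definition~\ref{Defn:EquivRelation} (two representatives that are inequivalent in some coordinate are told apart by $f$ after a suitable completion of the remaining coordinates) together with the information-flow lemma, I would argue that any two such assignments must produce different cut configurations: otherwise the decoder, fed identical cut symbols and identical non-separated messages, would return the same value in a coordinate where $f$ is forced to differ. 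This yields $\maxRangeB{I,f}^{\,k} \le \card{\alphabet}^{n\card{C}}$; taking $\log_{\card{\alphabet}}$ and rearranging gives $k/n \le \card{C}/\log_{\card{\alphabet}}\maxRangeB{C,f}$, and minimizing over $C$ and taking the supremum over solutions yields $\codCap{\Network,f} \le \mbox{\cut{\Network,f}}$.

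The hard part is the fooling-set step, and specifically reconciling it with the fact that the decoder also has direct access to the non-separated source messages. The subtlety is that two inequivalent separated-source inputs are distinguished by $f$ only for \emph{some} completion on the non-separated coordinates, so I must choose the message generators so that a common assignment to the non-separated sources exposes the relevant differences across all $k$ coordinates; making the bound come out to the full \footprintsize $\maxRangeB{C,f}$ (rather than a weaker quantity) is exactly where the equivalence-class structure of Definition~\ref{Defn:EquivRelation}, and the freedom to vary the non-separated messages, must be used carefully. By contrast, the information-flow lemma is routine, and I would relegate it to a supporting lemma.
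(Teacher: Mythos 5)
Your overall strategy is the same as the paper's: fix a cut $C$, argue that the $\card{C}$ cut edges must carry at least $\maxRangeB{C,f}^{\,k}$ distinct configurations, conclude $\card{\alphabet}^{n\card{C}} \ge \maxRangeB{C,f}^{\,k}$, and minimize over cuts. Your information-flow lemma is correct and is indeed routine (the paper leaves it implicit). The difference is that you explicitly flag, and leave open, the one step that the paper disposes of in a single sentence --- namely, that coordinatewise-inequivalent assignments to the separated sources force distinct cut configurations. That step is a genuine gap in your write-up, and the paper's own proof asserts the claim without addressing it either.

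Concretely, two things go wrong. First, with a single completion $c^*$ of the non-separated coordinates shared by all $\maxRangeB{I,f}^{\,k}$ representative assignments, the relation $a \not\equiv b$ only guarantees that \emph{some} completion separates $a$ from $b$ under $f$; it may well happen that $f(x)=f(y)$ when $\VecComp{x}{I}=a$, $\VecComp{y}{I}=b$, and both are completed by $c^*$. So a fixed completion can certify strictly fewer than $\maxRangeB{I,f}$ distinguishable values per coordinate, and the count $\maxRangeB{I,f}^{\,k}$ does not follow. Second, if you instead choose the completion pair-by-pair, you only learn that $Z(A,c)\neq Z(B,c)$ for that pair's particular completion $c$, where $Z(A,c)$ denotes the cut configuration; this shows that the \emph{functions} $c\mapsto Z(A,c)$ are pairwise distinct, not that there are $\maxRangeB{I,f}^{\,k}$ distinct cut configurations --- unless $Z(A,c)$ does not depend on $c$. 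That independence holds when no source outside $I_C$ lies upstream of an edge of $C$ (for instance, for the cuts $\outEdges{\node}$ used in the proof of Theorem~\ref{Th:codCapTree}), but for an arbitrary cut in an arbitrary network a non-separated source can feed into a cut edge, so the cut symbols genuinely vary with the completion. To finish your argument you would therefore need either to restrict attention to cuts whose edge symbols are determined by the separated sources alone, or to supply a new idea for the general case; ``choosing the message generators carefully,'' as you propose, addresses the first difficulty but not the second.
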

\begin{proof}
Let the network alphabet be $\alphabet$ and
consider any $(k,n)$ solution for computing $f$ in $\Network$.
%
Let $C$ be a cut and for each $i \in \{1,2,\ldots,k\}$, let
$a^{(i)}, b^{(i)} \in \alphabet^{\card{I_{C}}}$.
Suppose $j \in \{1,2,\ldots,k\}$  is such that $a^{(j)} \not\equiv b^{(j)}$,
where $\equiv$ is the equivalence relation from
Definition~\ref{Defn:EquivRelation}.
Then there exist $x, y \in \alphabet^{\cardsources}$ satsifying:
$f(x) \neq f(y)$,
$\VecComp{x}{I_C} = a^{(j)}$, 
$\VecComp{y}{I_C} = b^{(j)}$, 
and
$\VecComp{x}{i} = \VecComp{y}{i}$ for every $i \not\in I_{C}$.

The receiver $\receiver$ can compute the target function $f$ only if, 
for every such pair $\left\{a^{(1)},\ldots,a^{(k)}\right\}$ and $\left\{b^{(1)},\ldots,b^{(k)}\right\}$ 
corresponding to the message vectors generated by the sources in $I_C$,
the edges in cut $C$ carry distinct vectors. 
Since the total number of equivalence classes for the relation $\equiv$ equals the \footprintsize $\maxRangeB{C,f}$,
the edges in cut $C$ should carry at least $\left(\maxRangeB{C,f}\right)^{k}$ distinct vectors. 
Thus, 
we have 
$$
\alphabet^{n\card{C}} \ge \left(\maxRangeB{C,f}\right)^{k} 
$$
and hence for any cut $C$,
$$
\frac{k}{n} \le \frac{\card{C}}{\log_{\card{\alphabet}}\maxRangeB{C,f}} .
$$
Since the cut $C$ is arbitrary, the result follows from
Definition~\ref{def:cap} and \eqref{Eq:min-f-cut}.
\end{proof}
The min-cut upper bound has the following intuition.
Given any cut $C \in \cuts{\Network}$,
at least $\log_{\card{\alphabet}}\maxRangeB{C,f}$
units of information need to be sent across the cut
to successfully compute a target function $f$. 
In subsequent sections,
we study the tightness of this bound for different classes of functions and networks. 


\section{Lower bounds on the computing capacity}
\label{Sec:LowerBounds}
%
\NOPROCESS{
First in Theorems~\ref{Th:AchievabilityMinCutIdentity} - \ref{Th:codCapTree}, 
we obtain lower bounds on the computing capacity $\codCap{\Network, f}$ 
for three special cases, which coincide with the \textup{\cut{\Network,f}} upper bound, 
thus demonstrating that the computing capacity is equal to \textup{\cut{\Network,f}} for these cases. 
We then establish a general lower bound on the computing capacity for arbitrary networks and target functions 
(Theorem~\ref{Th:LowerBndGeneral}) and another lower bound specifically for symmetric target functions 
(Theorem~\ref{Th:SymmFuncCodCap}). 
In Section~\ref{Sec:Tightness}, 
we use Theorem~\ref{Th:LowerBndGeneral} and Theorem~\ref{Th:SymmFuncCodCap} 
to establish further results on the tightness of the \textup{\cut{\Network,f}} 
upper bound for different classes of target functions. 
}
The following result shows that the computing capacity of any network 
$\Network$ with respect to the identity target function equals the coding capacity for ordinary network coding. 
\begin{theorem}\label{Th:AchievabilityMinCutIdentity}
If $\Network$ is a network with the identity target function $f$, then
$$
\codCap{\Network,f} = \mbox{\textup{\cut{\Network,f}}} 
= \mbox{\textup{\minCut{\Network}}} .
$$
\end{theorem}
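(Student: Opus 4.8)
The upper bound $\codCap{\Network,f} \le \mbox{\cut{\Network,f}}$ is already supplied by Theorem~\ref{Th:upperBoundOnCodingCapacity}, so two things remain: identifying \cut{\Network,f} with \minCut{\Network} for the identity function, and establishing the matching achievability bound $\codCap{\Network,f} \ge \mbox{\minCut{\Network}}$.

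For the first point, I would note that under the identity target function no two distinct vectors of $\alphabet^{\card{I}}$ are equivalent under $\equiv$: if $a \ne b$, then choosing $x,y \in \alphabet^{\cardsources}$ that agree off $I$ and equal $a,b$ on $I$ yields $f(x)=x\ne y=f(y)$. Hence every singleton forms its own equivalence class, so $\maxRangeB{I,f}=\card{\alphabet}^{\card{I}}$ and $\log_{\card{\alphabet}}\maxRangeB{C,f}=\card{I_C}$. Consequently the defining minimum in \eqref{Eq:min-f-cut} collapses to $\min_{C}\card{C}/\card{I_C}$, which is exactly \minCut{\Network}; this is the first bullet of the example following \eqref{Eq:min-f-cut}.

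The substance is the achievability bound, and the plan is to reduce the identity-computing problem to a single-commodity integral max-flow problem. I would introduce a virtual super-source $\sigma^{*}$ joined to each source $\source_i$ by a new edge, assign capacity $n$ to every original edge and capacity $k$ to every new edge, and seek an integral flow of value $\cardsources\, k$ from $\sigma^{*}$ to $\receiver$ (which must saturate all super-source edges). Any $\sigma^{*}$--$\receiver$ cut is described by the set $T$ of sources whose super-source edges are left uncut together with the set $C$ of original edges it cuts; because the sources of $T$ lie on the source side and $\receiver$ on the far side, every path from a source in $T$ to $\receiver$ crosses $C$, so $C$ separates $T$ and hence $T\subseteq I_C$. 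The cut value is $(\cardsources-\card{T})k + n\card{C}$, and the min-cut hypothesis $k/n \le \card{C}/\card{I_C}$ gives $n\card{C}\ge k\card{I_C}\ge k\card{T}$, so every cut has value at least $\cardsources\, k$. By the integral max-flow/min-cut theorem the required flow exists.

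Finally I would convert the flow into a $(k,n)$ routing solution: decomposing it into $\sigma^{*}$--$\receiver$ paths (discarding cycles) assigns each source $\source_i$ exactly $k$ unit paths to $\receiver$ along which its $k$ messages are forwarded, and integrality ensures at most $n$ symbols traverse any original edge. The receiver then reads off all $\cardsources\, k$ messages and computes the identity, so every rate $k/n \le \mbox{\minCut{\Network}}$ is achievable, giving $\codCap{\Network,f}\ge \mbox{\minCut{\Network}}$ and closing the chain of equalities. The only delicate step is the cut bookkeeping in the augmented graph—correctly matching augmented cuts to pairs $(T,C)$ and invoking $T\subseteq I_C$—together with verifying that the flow-to-code translation respects the per-edge limit of $n$ symbols; both are routine once the super-source reduction is in place.
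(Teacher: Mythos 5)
Your proof is correct, but it takes a genuinely different (and more self-contained) route than the paper. The paper's entire proof is a citation: it invokes Rasala Lehman and Lehman's theorem that for any single-receiver network the ordinary coding capacity equals \minCut{\Network}, observes that for the identity function the computing problem \emph{is} the coding problem, and notes that the two min-cut definitions coincide. You instead supply the substance of that cited result yourself: the observation that $\maxRangeB{I,f}=\card{\alphabet}^{\card{I}}$ for the identity function (so that \cut{\Network,f} collapses to \minCut{\Network}) matches the paper's first bullet after the definition, and your achievability argument --- super-source $\sigma^{*}$, capacities $k$ on the new edges and $n$ on the original ones, the bookkeeping $T\subseteq I_C$ showing every augmented cut has value at least $\cardsources k$, integral max-flow/min-cut, and path decomposition into a $(k,n)$ routing solution --- is a correct, complete derivation of the fact the paper outsources. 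One small point worth making explicit is that \minCut{\Network} is a minimum of finitely many rationals, so the supremum over achievable $k/n$ actually attains it, which closes the chain of equalities rather than merely approaching it. What the paper's approach buys is brevity; what yours buys is a self-contained argument that also makes transparent that pure routing (no coding) suffices to achieve the min-cut in single-receiver networks.
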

\begin{proof}
Rasala Lehman and Lehman \cite[p.6, Theorem~4.2]{AprilLehman-EricLehman-04} showed that for any single-receiver network, 
the conventional coding capacity 
(when the receiver demands the messages generated by all the sources) always equals the \minCut{\Network}.
Since the target function $f$ is the identity, the computing
capacity is the coding capacity and \minCut{\Network,f} = \minCut{\Network}, so the result follows.
\end{proof}
\begin{theorem}\label{Th:ModuloSumCodCap}
If $\Network$ is a network with a finite field alphabet and with a linear target function $f$, then
$$
\codCap{\Network, f} = \mbox{\textup{\cut{\Network,f}}}.
$$
\end{theorem}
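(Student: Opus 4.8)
The upper bound $\codCap{\Network,f} \le \cut{\Network,f}$ is already supplied by Theorem~\ref{Th:upperBoundOnCodingCapacity}, so the plan is to prove the matching lower bound. First I would pin down $\cut{\Network,f}$ explicitly. Write $f(x) = \sum_i c_i\VecComp{x}{i}$ over $\field{q}$ with every $c_i \ne 0$ (as $f$ depends on all sources). For a nonempty index set $I$, the equivalence relation of Definition~\ref{Defn:EquivRelation} identifies $a,b \in \alphabet^{\card{I}}$ exactly when $\sum_{i\in I} c_i\VecComp{a}{i} = \sum_{i\in I} c_i\VecComp{b}{i}$; since $a \mapsto \sum_{i\in I} c_i\VecComp{a}{i}$ is a surjection onto $\field{q}$, its fibers are precisely the equivalence classes, so $\maxRangeB{I,f} = q = \card{\alphabet}$. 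Hence $\log_{\card{\alphabet}}\maxRangeB{C,f} = 1$ for every cut $C$, and
$$\cut{\Network,f} = \min_{C} \card{C} = \min_{i} \lambda_i =: m,$$
where $\lambda_i$ is the least number of edges separating $\source_i$ from $\receiver$, which by the max-flow/min-cut (Menger) theorem equals the maximum number of edge-disjoint $\source_i$-to-$\receiver$ paths. In particular $m \in \integer$, and it remains to exhibit codes of rate equal to $m$.

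Next I would reduce the linear function to the sum and set up vector-linear coding. Since each $c_i \ne 0$, a source may premultiply its message by $c_i$, so it suffices to compute $\sum_i X_i$. I would use vector-linear codes over the extension field $\field{q^n}$, identified with $\alphabet^{n} = \field{q}^{n}$: each edge carries one symbol of $\field{q^n}$ (that is, $n$ symbols of $\alphabet$) and each source emits $m$ symbols of $\field{q^n}$, giving an $(mn,n)$ code of rate $m$. Because the additive group of $\field{q^n}$ is $\field{q}^{n}$ and scaling by $c_i \in \field{q} \subseteq \field{q^n}$ acts coordinatewise, an $\field{q^n}$-linear scheme delivering $\sum_i X_i$ computes the $\field{q}$-target function on each of the $mn$ message coordinates, as required by \eqref{Eq:decodingFunction}. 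In such a code every edge $e$ carries $z_e = \sum_i G_{e,i} X_i$ for row vectors $G_{e,i} \in \field{q^n}^{1\times m}$; stacking the in-edges of $\receiver$, the receiver observes $\sum_i G_i X_i$, where $G_i$ is the $\card{\inEdges{\receiver}}\times m$ matrix of these vectors, and it recovers $\sum_i X_i$ precisely when there is a single decoding matrix $D$ with $D G_i = \mathrm{Id}_m$ for all $i$.

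The crux is therefore to construct a linear code satisfying $D G_i = \mathrm{Id}_m$ simultaneously for every source, and my plan is to obtain this by edge-reversal duality. Reversing all edges of $G$ turns $\receiver$ into a single source and each $\source_i$ into a sink; since edge reversal preserves the value of every cut, the min-cut from $\receiver$ to each $\source_i$ in the reversed network is $\ge m$, so by the multicast theorem of Ahlswede--Cai--Li--Yeung~\cite{Ahlswede-Cai-Li-Yeung-IT-Jul00} (taking $n$ large enough that $q^{n} > \cardsources$) there is a linear code over $\field{q^n}$ multicasting $m$ symbols from $\receiver$ to every $\source_i$, in which each sink's received matrix has rank $m$. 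Transposing the local coding coefficients edge-by-edge yields a linear code on $G$, and the rank-$m$ decodability at each sink $\source_i$ in the reversed network should translate into the common-left-inverse condition $D G_i = \mathrm{Id}_m$ in $G$. The main obstacle I anticipate is exactly this transposition: verifying that a multicast solution of the reversed network dualizes to a \emph{sum}-computing solution of $G$, i.e.\ that the per-sink rank conditions collapse into one shared decoding map $D$ aligning all sources onto their sum, rather than resolving the sources individually (which the min-cut does not permit). Once that correspondence is in hand, the resulting $(mn,n)$ code achieves rate $m$, giving $\codCap{\Network,f} \ge m = \cut{\Network,f}$ and hence equality. Equivalently, I could phrase achievability purely algebraically: decodability of the sum is a single polynomial (determinant) condition in the coding coefficients, the reversed-multicast construction witnesses that this polynomial is not identically zero, and a Schwartz--Zippel argument over a large $\field{q^n}$ then produces a valid assignment.
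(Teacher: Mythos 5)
Your proposal is correct in substance, but it is worth noting that the paper does not actually prove this theorem internally: its entire proof is the citation ``Follows from \cite[Theorem~2]{Rai10}.'' What you have written is essentially a self-contained reconstruction of that cited result, and it follows the same route (reduce a linear target to the $\field{q}$-sum by premultiplying at the sources, compute $\maxRangeB{C,f}=q$ so that \cut{\Network,f} $=\min_C\card{C}=m$, then achieve rate $m$ by dualizing a rate-$m$ linear multicast on the edge-reversed network). The one step you flag as the ``main obstacle'' does close cleanly, and you already have all the pieces: if in the reversed network the single source $\receiver$ encodes via $E$, the edge-to-edge transfer to sink $\source_i$ is $T_i$, and sink $\source_i$ decodes via $D_i$ with $D_iT_iE=\mathrm{Id}_m$, then transposing all local coefficients makes the source-$\source_i$-to-receiver transfer in $G$ equal to $T_i^{T}$; setting the encoder at $\source_i$ to $F_i=D_i^{T}$ and the receiver's decoder to $D=E^{T}$ gives $DG_i=E^{T}T_i^{T}D_i^{T}=(D_iT_iE)^{T}=\mathrm{Id}_m$ for every $i$ simultaneously. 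The reason the per-sink conditions ``collapse'' into one shared $D$ is exactly that the reversed network has a \emph{single} source, whose one encoding matrix transposes into the one decoding matrix, while the many sink decoders transpose into the many source encoders. With that identity verified, your $(mn,n)$ code over $\field{q^n}$ is legitimate (the $\field{q^n}$-sum is coordinatewise the $\field{q}$-sum, and scaling by $c_i\in\field{q}$ acts coordinatewise), so the lower bound $\codCap{\Network,f}\ge m$ holds and matches Theorem~\ref{Th:upperBoundOnCodingCapacity}. In short: your argument is sound and supplies the content the paper delegates to \cite{Rai10}; the only thing to add is the two-line transposition computation above, or alternatively the Schwartz--Zippel variant you sketch, which also works since the reversed multicast witnesses that the relevant determinant polynomial is not identically zero.
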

\begin{proof}
Follows from \cite[Theorem~2]{Rai10}. 
\end{proof}
Theorems~\ref{Th:AchievabilityMinCutIdentity} and \ref{Th:ModuloSumCodCap} 
demonstrate the achievability of the min-cut bound for arbitrary networks with particular target functions. 
In contrast, the following result demonstrates the achievability of the min-cut bound for arbitrary 
target functions and a particular class of networks.
%
%
The following theorem concerns multi-edge tree networks, which were defined 
in Section~\ref{Sec:NetworkModel}.
\begin{theorem}\label{Th:codCapTree}
If $\Network$ is a multi-edge tree network with 
target function $f$, then
$$
\codCap{\Network,f} = \mbox{\textup{\cut{\Network,f}}} .
$$
\end{theorem}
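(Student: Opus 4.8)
The bound $\codCap{\Network,f}\le\cut{\Network,f}$ is already supplied by Theorem~\ref{Th:upperBoundOnCodingCapacity}, so the plan is to prove a matching achievability bound and then close the gap by a short sandwich argument. The structural fact I would exploit is that a multi-edge tree is an in-tree on its condensation: since $G$ is acyclic and every node has a path to $\receiver$, the receiver is the unique sink, and every non-receiver node $\node$ has a unique ``parent'' node $u$ with $\outEdges{\node}\subseteq\inEdges{u}$; iterating the parent map gives a tree rooted at $\receiver$. For each node $\node$ put $C_\node=\outEdges{\node}$; this is a cut, and the sources $I_{C_\node}$ it separates are exactly the sources lying in the subtree rooted at $\node$ (a source's unique node-path to $\receiver$ passes through $\node$ iff $\node$ is its ancestor). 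I would build a recursive code in which each node forwards, for each of the $k$ message indices, the label $\Phi_{I_{C_\node},f}$ of the equivalence class of the messages produced by the sources in its subtree.

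The crux is a composition lemma for the relation $\equiv$: if $\{I_1,\dots,I_\gamma\}$ partitions $I$ and $a,b\in\alphabet^{\card{I}}$ satisfy $\VecComp{a}{I_t}\equiv\VecComp{b}{I_t}$ with respect to $\equiv$ on $I_t$ for every $t$, then $a\equiv b$. I would prove this by telescoping: fix any completion $z$ on $I^{c}$ and form hybrids by replacing, one block at a time, the $I_1$-block of $a$ by that of $b$, then the $I_2$-block, and so on. Consecutive hybrids agree off $I_t$ and differ on $I_t$ only by $\VecComp{a}{I_t}$ versus $\VecComp{b}{I_t}$, so Definition~\ref{Defn:EquivRelation} and $\VecComp{a}{I_t}\equiv\VecComp{b}{I_t}$ keep $f$ constant across each step; chaining gives $f(a,z)=f(b,z)$ for all $z$, i.e.\ $a\equiv b$. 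Consequently $\Phi_{I,f}(a)$ is a well-defined function $g$ of the sub-block labels $\Phi_{I_t,f}(\VecComp{a}{I_t})$ alone, and this $g$ is precisely the local merging map each node applies to the labels arriving from its children (with its own message symbol, if it is a source, treated as an equal singleton block).

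Given the lemma, the recursion is routine. Choose $k,n$ with $k/n\le m':=\min_{\node}\ \card{C_\node}/\log_{\card{\alphabet}}\maxRangeB{C_\node,f}$; then $\card{\alphabet}^{\,n\card{C_\node}}\ge(\maxRangeB{C_\node,f})^{k}$ at every node, so the $k$-tuple of class labels fits injectively onto the $\card{C_\node}$ out-edges. Processing nodes in topological order: a source leaf encodes $\Phi$ of its own messages; an internal node decodes its children's labels, applies $g$ to obtain its own subtree labels, and re-encodes them; and $\receiver$ applies $g$ at the root to recover $\Phi_{\{1,\dots,\cardsources\},f}$, whose classes are exactly the level sets of $f$, from which it reads off $f$ componentwise. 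Since $k/n$ may be taken arbitrarily close to $m'$ from below, this yields $\codCap{\Network,f}\ge m'$.

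I would finish with a sandwich. The node cuts $\{C_\node\}$ form a subfamily of all cuts, so $\cut{\Network,f}\le m'$; combined with $m'\le\codCap{\Network,f}\le\cut{\Network,f}$ this forces all three quantities to coincide. A pleasant by-product is that the global min-cut is attained at a single node's out-edge set, so no separate combinatorial reduction of cuts is needed. I expect the composition lemma to be the main obstacle: it is the statement that distributed knowledge of the per-subtree equivalence classes is exactly what is needed to reconstruct the global class, and everything else (the injective per-node encoding and the topological bookkeeping) is straightforward once that is in hand.
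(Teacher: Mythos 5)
Your proposal is correct and follows essentially the same route as the paper's proof: the per-node cuts $\outEdges{\node}$, the injective encoding of the $k$-tuple of equivalence-class labels $\Phi_{I_{\outEdges{\node}},f}$ onto each node's out-edges, the disjoint-union decomposition of $I_{\outEdges{\node}}$ over the children, and the final sandwich between the node-minimum and \cut{\Network,f}. The only difference is that you explicitly state and prove (via the hybrid/telescoping argument) the composition lemma for $\equiv$ that the paper asserts without proof; your argument for it is valid.
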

\begin{proof}
Let $\alphabet$ be the network alphabet. 
From Theorem~\ref{Th:upperBoundOnCodingCapacity}, 
it suffices to show that
$\codCap{\Network,f} \ge \mbox{\cut{\Network,f}}$.
Since $\outEdges{\node}$ is a cut for node $\node \in \nodes - \receiver$, 
and using \eqref{Eq:maxRangeDef},
we have
\begin{align}
\mbox{\cut{\Network,f}} & \le \underset{\node \ \in \ \nodes - \receiver }{\min} \ 
   \frac{ \card{\outEdges{\node}} }{ \log_{\card{\alphabet}}\maxRangeB{\outEdges{\node},f} }.
%
\label{eq:4}
\end{align}
%
\NOPROCESS{
To evaluate $f(x)$ for $x \in \alphabet^{\cardsources}$, 
it suffices for receiver $\receiver$ to identify the equivalence class 
$\Phi_{I_{\inEdges{\receiver}}, f}\left(x\right)$ of $x$.
\begin{align} \label{Eq:propertyMultiedge}
I_{\node} = \bigcup_{(u, \node) \in \edges} I_{u} .
\end{align}
}
%
Consider any positive integers $k, n$ such that 
\begin{equation}
\label{Eq:ChosenRate}
 \frac{k}{n} \le \underset{\node \ \in \ \nodes - \receiver }{\min} \ 
     \frac{ \card{\outEdges{\node}} }{  \log_{\card{\alphabet}} \maxRangeB{I_{\outEdges{\node}}, f} } \ .
\end{equation}
Then we have 
\begin{equation}
\label{Eq:SurjCondn}
\card{\alphabet}^{\card{\outEdges{v}}n} \ge \maxRangeB{I_{\outEdges{\node}}, f}^k \ \mbox{ for every node } 
v \in \nodes - \receiver . 
\end{equation}
We outline a $(k,n)$ solution for computing $f$ in the multi-edge tree 
network $\Network$. 
Each source $\source_i \in \sources$ generates a message vector $\sourceVec{\source_i} \in \alphabet^{k}$.
Denote the vector of $i$-th components of the source messages by 
$$
x^{(i)} = \left(\VecComp{\sourceVec{\source_{1}}}{i},\cdots,\VecComp{\sourceVec{\source_{\cardsources}}}{i}\right).
$$
Every node $\node \in \nodes - \{\receiver\}$ sends out a unique index 
(as guaranteed by \eqref{Eq:SurjCondn})
over  $A^{\card{\outEdges{\node}}n}$ corresponding to the set of equivalence classes 
\begin{align} \label{Eq:Classes}
\Phi_{I_{\outEdges{\node}},f} (x^{(l)}_{I_{\outEdges{\node}}})  \ \mbox{ for }\ l\in\{1,\cdots, k\}.
\end{align}

If $\node$ has no in-edges, then by assumption, it is a source node, say $\source_j$.
The set of equivalence classes in \eqref{Eq:Classes} is a function of its own messages 
$\VecComp{\sourceVec{\source_{j}}}{l}$ for $l\in\{1,\ldots,k\}$.
On the other hand if $\node$ has in-edges,  
then let $u_1, u_2, \cdots, u_j$ be the nodes with out-edges to $\node$.
For each $i \in \{1,2,\cdots,j\}$, using the uniqueness of the index received from $u_i$,
node $\node$ recovers the equivalence classes
\begin{align} \label{Eq:incomingClasses}
\Phi_{I_{\outEdges{u_i}},f} (x^{(l)}_{I_{\outEdges{u_i}}})   \ \mbox{ for }\ l\in\{1,\cdots, k\}.
\end{align}
Furthermore, the equivalence classes in \eqref{Eq:Classes}
can be identified by $\node$ from  
the equivalance classes in \eqref{Eq:incomingClasses} 
(and $\sourceVec{\node}$ if $\node$ is a source node) using the fact that for a multi-edge tree network $\Network$,
we have a disjoint union
\begin{align*} 
I_{\outEdges{\node}} = \bigcup_{i=1}^{j} I_{\outEdges{u_i}}. 
\end{align*}
%

If each node $\node$ follows the above steps, then the receiver $\receiver$ can identify the equivalence classes 
$\Phi_{I_{\inEdges{\receiver}}, f}\left(x^{(i)}\right)$ for $i\in\{1,\ldots,k\}$. 
The receiver can evaluate $f(x^{(l)})$ for each $l$ from these equivalence classes.
The above solution achieves a computing rate of  $k/n$.
From \eqref{Eq:ChosenRate}, it follows that 
\begin{equation}\label{eq:CodCapTree}
\codCap{\Network,f} \ge \underset{\node \ \in \ \nodes - \receiver }{\min} \ 
   \frac{ \card{\outEdges{\node}} }{  \log_{\card{\alphabet}} \maxRangeB{I_{\outEdges{\node}},f} } .
\end{equation}
\end{proof}
%
We next establish a general lower bound on the computing capacity for arbitrary target functions 
(Theorem~\ref{Th:LowerBndGeneral}) and then another lower bound specifically for symmetric target functions 
(Theorem~\ref{Th:SymmFuncCodCap}). 

For any network 
$\Network = \left(G, \sources, \receiver\right)$ with $G = (\nodes,\edges)$, 
define a 
\textit{Steiner tree}%
\footnote{Steiner trees are well known in the literature for undirected graphs. 
For directed graphs a ``Steiner tree problem''
has been studied and our definition is consistent with 
such work (e.g., see \cite{Jain-03}).} 
of $\Network$ to be a minimal (with respect to nodes and edges) subgraph of $G$ containing $\sources$
and $\receiver$ such that every source in $\sources$ has a directed path to the receiver 
$\receiver$. 
Note that every non-receiver node in a Steiner tree has exactly one out-edge. 
Let $\Tree(\Network)$ denote the collection of all Steiner trees in $\Network$. 
For each edge $e \in \edges(G)$, 
let $\treeIndex{e} = \{i \, : \, t_i \in  \Tree(\Network) \mbox{ and } e \in \edges(t_i)\}$.
The \textit{fractional Steiner tree packing number} $\steinerNumber{\Network}$ 
is defined as the linear program
\begin{equation}
\label{Eq:FracSteiner}
\steinerNumber{\Network} = \max \sum_{t_i \in \Tree(\Network)} u_{i} \quad \mbox{subject to} \quad  
\begin{cases}
& \!\!\!\displaystyle u_i \ge 0 \ \ \ \forall \ t_i \in \Tree(\Network) \ , \\ 
& \!\!\!\displaystyle \sum_{i \in \treeIndex{e}} u_{i} \le 1 \ \ \ \forall \ e \in \edges(G).
\end{cases}
\end{equation}
Note that $\steinerNumber{\Network} \ge 1$ for any network $\Network$, 
and the maximum value of the sum in \eqref{Eq:FracSteiner} is attained at 
one or more vertices of the closed polytope 
corresponding to the linear constraints. 
Since all coefficients in the constraints are rational, 
the maximum value in \eqref{Eq:FracSteiner} can be attained with 
rational $u_i$'s. 
The following theorem provides a lower bound%
\footnote{
In order to compute the lower bound, 
the fractional Steiner tree packing number 
$\steinerNumber{\Network}$ can be evaluated using linear programming. 
Also note that if we construct the \textit{reverse multicast network} 
by letting each source in the original network $\Network$ become a receiver, 
letting the receiver in the $\Network$ become the only source, 
and reversing the direction of each edge, 
then it can be verified that the routing capacity 
for the reverse multicast network is equal to $\steinerNumber{\Network}$.}
on the computing capacity for any network $\Network$ with respect to a target function $f$ and uses the quantity $\steinerNumber{\Network}$.  
In the context of computing functions, $u_i$  in the above linear program indicates the fraction of the time 
the edges in tree $t_i$ are used to compute the desired function. 
The fact that every edge in the network has unit capacity implies  
$\sum_{i \in \treeIndex{e}} u_{i} \le 1$.
%
%
\begin{lemma}\label{Lemma:equivalenceOfCuts}
For any Steiner tree $G'$ of a network $\Network$, 
let $\Network'= (G', \sources, \receiver)$.
Let $C'$ be a cut in $\Network'$.
Then there exists a cut $C$ in $\Network$ such 
that $I_{C} = I_{C'}$.
\end{lemma}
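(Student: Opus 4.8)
The plan is to realize the desired cut $C$ in $\Network$ as the edge-boundary of the set of vertices that can still reach the receiver after the edges of $C'$ are deleted from the Steiner tree. The naive guess $C = C'$ fails: since $G$ has more edges than the Steiner tree $G'$, a source separated by $C'$ in $\Network'$ may have an alternate directed path to $\receiver$ in $G$ that avoids $C'$ entirely. So $C$ must contain enough extra edges of $G$ to block every such detour, while taking care not to separate any \emph{additional} source.

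First I would work inside $G'$. Delete the edges of $C'$ and let $R'$ denote the set of vertices of $G'$ from which $\receiver$ is still reachable in $G' - C'$. I would record two facts about $R'$: (i) $R'$ is closed under taking suffixes of surviving paths, i.e. every vertex lying on a $\receiver$-reaching path in $G' - C'$ is itself in $R'$; and (ii) by the definition of $I_{C'}$, a source $\source_i$ lies outside $R'$ precisely when $i \in I_{C'}$ --- if $C'$ separates $\source_i$ then no surviving path exists, so $\source_i \notin R'$, and otherwise a $C'$-avoiding path survives, so $\source_i \in R'$. Note also that $\receiver \in R'$, and that every source lies in $\nodes(G')$ because a Steiner tree by definition contains $\sources$.

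Next I would lift this partition to all of $\Network$. Put $\bar U = R'$ and $U = \nodes(G) \setminus R'$ (so vertices of $G$ outside the tree fall into $U$), and define
\[
  C = \left\{\, (a,b) \in \edges(G) \ : \ a \in U, \ b \in \bar U \,\right\},
\]
the set of edges of $G$ entering $R'$ from outside. Since $\receiver \in \bar U$, any source in $U$ is forced across $C$: every directed path from it to $\receiver$ must leave $U$, and the first edge on which the path enters $\bar U$ belongs to $C$. Conversely, any source in $R'$ retains, by fact (i), a directed path to $\receiver$ lying entirely within $R' = \bar U$; such a path uses only edges with both endpoints in $\bar U$ and hence meets no edge of $C$.

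Combining this with fact (ii), a source $\source_i$ is separated by $C$ in $\Network$ exactly when $\source_i \in U$, i.e. exactly when $i \in I_{C'}$, which gives $I_C = I_{C'}$; and since $C'$ is a cut, $I_{C'} \neq \emptyset$, so $C$ is a genuine cut in $\Network$. The one real subtlety --- the step I expect to be the main obstacle --- is guaranteeing that enlarging $C'$ to the full edge-boundary of $R'$ does not inadvertently separate some source $\source_i$ with $i \notin I_{C'}$. This is precisely what the closure property (i) secures, since the surviving path for such a source never leaves $\bar U$ and therefore never touches $C$.
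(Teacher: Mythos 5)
Your proof is correct, but it constructs the cut differently from the paper. You take $C$ to be the edge-boundary in $G$ of the set $R'$ of vertices that can still reach $\receiver$ in $G'-C'$; the paper instead sets $C=\bigcup_{i'\in I_{C'}}\outEdges{\source_{i'}}$, the union over the separated sources of \emph{all} their out-edges in the full graph $G$, without ever referring to reachability sets. The two constructions yield different edge sets in general (yours omits edges with both endpoints outside $R'$ but includes boundary edges leaving non-source vertices; the paper's does the opposite), and the burden of proof falls in opposite places. In the paper's version the inclusion $I_{C'}\subseteq I_C$ is immediate (every path out of a separated source starts with one of its own out-edges), and the work is in showing $I_C\subseteq I_{C'}$, via the suffix argument: any $G'$-path from a $C$-separated source must pass through some $\source_{i'}$ with $i'\in I_{C'}$, and the suffix from $\source_{i'}$ must meet $C'$. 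In your version the roles reverse: separating all of $I_{C'}$ is the standard first-crossing-edge argument, and the delicate direction --- that no additional source gets cut off --- is handled by the closure of $R'$ under path suffixes, exactly the point you flag. Your construction is the more generic partition-to-cut recipe and identifies $I_C$ as precisely the complement of the reachable set; the paper's is shorter to state and exploits the special fact that the out-edges of a source always separate it. Both arguments are sound.
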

(Note that $I_{C'}$ is the set indices of sources separated 
in $\Network'$ by $C'$. The set $I_{C'}$  may differ from
the indices of sources separated in $\Network$ by $C'$.)
\begin{proof}
Define the cut  
\begin{equation}
\label{Eq:DefinedCut}
C = \bigcup_{i' \in I_{C'}} \outEdges{\source_{i'}} .
\end{equation}
$C$ is the collection of out-edges in $\Network$
of a set of sources disconnected by the cut $C'$ in $\Network'$. 
If $i \in I_{C'}$, then, by 
\eqref{Eq:DefinedCut}, $C$ disconnects $\source_i$ from $\receiver$ in $\Network$, and thus $I_{C'} \subseteq I_{C}$.

Let $ \source_i$ be a source.
 such that $i \in I_{C}$ and 
Let $\path$ be a path  from $\source_i$ to $\receiver$ in $\Network$.
From $\eqref{Eq:DefinedCut}$, it follows that there exists 
$i' \in I_{C'}$ such that 
$\path$ contains at least one edge in $\outEdges{\source_{i'}}$.
If $\path$ also lies in $\Network'$ and does not contain any edge
in $C'$, then $\source_{i'}$ has a path to $\receiver$
in $\Network'$ that does not contain any edge in $C'$, thus contradicting 
the fact that $\source_{i'} \in I_{C'}$. 
Therefore, either $\path$ does not lie in $\Network'$ or 
$\path$ contains an edge in $C'$. 
Thus $\source_i \in I_{C'}$, i.e., $I_C \subseteq I_{C'}$.
\end{proof}
\begin{theorem}\label{Th:LowerBndGeneral}
If $\Network$ is a network with alphabet $\alphabet$ and target function 
$f$, then
$$
\codCap{\Network,f} \ge \steinerNumber{\Network} 
\cdot \underset{C \in \cuts{\Network} }{\min} \ \frac{1}{\log_{\card{\alphabet}} \maxRangeB{C,f}} . 
$$
\end{theorem}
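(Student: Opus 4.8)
The plan is to run a capacity-achieving code on each Steiner tree and then superimpose these codes according to an optimal solution of the packing linear program \eqref{Eq:FracSteiner}. Throughout, write $r = \min_{C \in \cuts{\Network}} 1/\log_{\card{\alphabet}}\maxRangeB{C,f}$, so that the goal becomes $\codCap{\Network,f} \ge \steinerNumber{\Network}\cdot r$. The first observation is that each Steiner tree $t_i \in \Tree(\Network)$, regarded as the network $\Network_i = (t_i, \sources, \receiver)$, is a multi-edge tree: every non-receiver node of $t_i$ has exactly one out-edge $(v,w)$, so $\outEdges{v} \subseteq \inEdges{w}$ holds trivially (and $\receiver$ has no out-edges). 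Hence Theorem~\ref{Th:codCapTree} applies and gives $\codCap{\Network_i,f} = \cut{\Network_i,f}$.

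Next I would lower-bound each tree's capacity by $r$. Let $C'$ be any cut of $\Network_i$. By Lemma~\ref{Lemma:equivalenceOfCuts} there is a cut $C$ of $\Network$ with $I_C = I_{C'}$, hence $\maxRangeB{C',f} = \maxRangeB{C,f} \le \max_{C \in \cuts{\Network}}\maxRangeB{C,f}$. Since $\card{C'} \ge 1$, this shows $\card{C'}/\log_{\card{\alphabet}}\maxRangeB{C',f} \ge r$ for every cut $C'$ of $\Network_i$, and therefore $\cut{\Network_i,f} \ge r$.

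The main step is to overlay the per-tree codes into a single $(K,N)$ solution for $\Network$. I would take a rational optimal solution $\{u_i\}$ of \eqref{Eq:FracSteiner} and write $u_i = p_i/q$ over a common denominator $q$, so that $\sum_i p_i = q\,\steinerNumber{\Network}$ while $\sum_{i \in \treeIndex{e}} p_i \le q$ for every edge $e$. Fixing a large integer $M$ and setting $N = qM$, on each tree I would run the $(k_i,n_i)$ solution with $n_i = p_i M$ and $k_i = \lfloor r\, p_i M\rfloor$; this exists because $k_i/n_i \le r \le \cut{\Network_i,f} = \codCap{\Network_i,f}$ by the previous two steps. I would then partition the $K = \sum_i k_i$ output instances among the trees, with tree $t_i$ computing $k_i$ of the function values, so each source simply feeds the appropriate subvector of its length-$K$ message into each tree's code. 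Feasibility reduces to a per-edge symbol count: edge $e$ carries $\sum_{i \in \treeIndex{e}} n_i = M\sum_{i \in \treeIndex{e}} p_i \le Mq = N$ symbols, which is exactly the content of the LP constraint $\sum_{i\in\treeIndex{e}} u_i \le 1$.

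Finally, letting $M \to \infty$ yields the rate, since $K = \sum_i \lfloor r\, p_i M\rfloor \ge r M\sum_i p_i - \card{\Tree(\Network)} = r\, q M\, \steinerNumber{\Network} - \card{\Tree(\Network)}$, so $K/N \ge r\,\steinerNumber{\Network} - \card{\Tree(\Network)}/(qM) \to r\,\steinerNumber{\Network}$. The step I expect to be most delicate is this overlaying: one must argue that the only obstruction to superimposing the tree codes is the per-edge symbol budget, with no additional real-time scheduling constraint across edges. This is justified by the fact that the codes are block codes on an acyclic graph computing disjoint sets of output instances, so each edge's usage is simply additive and is controlled precisely by $\sum_{i \in \treeIndex{e}} u_i \le 1$.
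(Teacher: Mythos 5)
Your proposal is correct and follows essentially the same route as the paper's proof: lower-bound each Steiner tree's computing capacity by $\min_{C \in \cuts{\Network}} 1/\log_{\card{\alphabet}}\maxRangeB{C,f}$ via Theorem~\ref{Th:codCapTree} together with Lemma~\ref{Lemma:equivalenceOfCuts}, then combine the per-tree codes by fractional time-sharing according to an optimal rational solution of \eqref{Eq:FracSteiner}. The only difference is that you spell out the superposition/block-length bookkeeping that the paper dismisses as ``standard arguments for time-sharing,'' and you bound $\mbox{\cut{\Network_i,f}}$ using arbitrary cuts $C'$ with $\card{C'}\ge 1$ rather than the node out-edge formula, which is an immaterial variation.
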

\begin{proof}
Suppose $\Network = (G, \sources, \receiver)$. 
Consider a Steiner tree $G' = \left(\nodes', \edges'\right)$ of $\Network$, 
and let $\Network'=(G',\sources,\receiver)$. 
From Lemma \ref{Lemma:equivalenceOfCuts} 
(taking $C'$ to be $\outEdges{\node}$ in $\Network'$),
we have 
\begin{equation}
\label{Eq:NodeToCutCopy}
\forall \ \node \in \nodes' - \receiver, \ \exists \ C \in \cuts{\Network} \mbox{ such that }
I'_{\outEdges{\node}} = I_{C}.
\end{equation}
Now we lower bound the computing capacity for the network 
$\Network'$ with respect to target function $f$.  
\begin{align}
\label{eq:CodCapSteinerTreeCopy}
\codCap{\Network', f} & = \mbox{\cut{\Network',f}} & \Comment{Theorem~\ref{Th:codCapTree}} \\
\nonumber
& =  \underset{\node \ \in \ \nodes' - \receiver }{\min} \ 
     \frac{ 1 }{  \log_{\card{\alphabet}} \maxRangeB{I'_{\outEdges{\node}}, f} }  
         & \Comment{Theorem~\ref{Th:upperBoundOnCodingCapacity}, \eqref{eq:4}, \eqref{eq:CodCapTree}} \\   
& \ge \underset{C \in \cuts{\Network} }{\min} \ \frac{1}{\log_{\card{\alphabet}} \maxRangeB{I_C,f}}  & \Comment{\eqref{Eq:NodeToCutCopy}}. \label{eq:CodCapTreeLBCopy}
\end{align}   
The lower bound in \eqref{eq:CodCapTreeLBCopy} is the same for every 
Steiner tree of $\Network$. 
We will use this uniform bound to lower bound the computing capacity for 
$\Network$ with respect to $f$. 
Denote the Steiner trees of $\Network$ by $t_1,\ldots,t_T$. 
Let $\epsilon > 0$ and let $r$ denote the quantity on the right hand side of \eqref{eq:CodCapTreeLBCopy}.
On every Steiner tree $t_i$, a computing rate of at least $r-\epsilon$ is
achievable by \eqref{eq:CodCapTreeLBCopy}.
Using standard arguments for time-sharing between the different Steiner trees of the network $\Network$,  
it follows that a computing rate of at least
$(r - \epsilon) \cdot \steinerNumber{\Network}$
is achievable in $\Network$,
and by letting $\epsilon \rightarrow 0$, the result follows. 
\end{proof}
The lower bound in Theorem \ref{Th:LowerBndGeneral} can be readily computed and is sometimes tight. 
The procedure used in the proof of Theorem \ref{Th:LowerBndGeneral} may potentially be improved by maximizing the sum
\begin{equation} \label{Eq:FracSteinerModified}
\sum_{t_i \in \Tree(\Network)} \!\! u_{i} \, r_i \quad \mbox{subject to} \quad  
\begin{cases}
& \!\!\!\displaystyle u_i \ge 0 \ \ \ \forall \ t_i \in \Tree(\Network) \ ,\\ 
& \!\!\!\displaystyle \sum_{i \in \treeIndex{e}} u_{i} \le 1 \ \ \ \forall \ e \in \edges(G)
\end{cases}
\end{equation}
where $r_i$ is any achievable rate%
\footnote{From Theorem \ref{Th:codCapTree}, 
$r_i$ can be arbitrarily close to \cut{t_i,f}.} 
for computing $f$ in the Steiner tree network 
$\Network_i = (t_i, \sources, \receiver)$.

We now obtain a different lower bound on the computing capacity 
in the special case when the target function is the arithmetic sum.
This lower bound is then used 
to give an alternative lower bound (in Theorem~\ref{Th:SymmFuncCodCap}) 
on the computing  capacity for the class of symmetric target functions.
The bound obtained in Theorem~\ref{Th:SymmFuncCodCap} is sometimes better than that of 
Theorem~\ref{Th:LowerBndGeneral},
and sometimes worse (Example~\ref{Ex:boundCompare} illustrates instances of 
both cases).
\begin{theorem}\label{Th:computingSum}
If $\Network$ is a network with alphabet $\alphabet = \{0, 1,\ldots,q-1\}$ and 
the arithmetic sum target function $f$, then
\begin{align*}
\codCap{\Network, f} 
&\ge
\displaystyle \underset{ C \in \cuts{\Network} }{\min} \ \frac{ \card{C}}{\log_{q} P_{q,s} } 
\end{align*}
where $P_{q,s}$ denotes the smallest prime number greater than $s(q-1)$. 
%
\end{theorem}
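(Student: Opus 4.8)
The plan is to reduce the computation of the arithmetic sum over the alphabet $\alphabet=\{0,1,\ldots,q-1\}$ to the computation of a \emph{linear} target function over the finite field $\field{P_{q,s}}$, and then to invoke Theorem~\ref{Th:ModuloSumCodCap}. The key observation is that the arithmetic sum of $\cardsources$ symbols, each lying in $\{0,\ldots,q-1\}$, always falls in $\{0,1,\ldots,\cardsources(q-1)\}$, a set of integers strictly smaller than the prime $P_{q,s}$. Hence, embedding $\{0,\ldots,q-1\}$ into $\field{P_{q,s}}$ in the natural way, the arithmetic sum coincides with the field sum $x_1+\cdots+x_{\cardsources}$ reduced modulo $P_{q,s}$: on inputs drawn from the embedded subset, the arithmetic sum target function factors through the linear target function $f_{\mathrm{lin}}(x_1,\ldots,x_{\cardsources})=x_1+\cdots+x_{\cardsources}$ over $\field{P_{q,s}}$.

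First I would analyze the auxiliary network $\Network$ viewed over the field alphabet $\field{P_{q,s}}$ with target function $f_{\mathrm{lin}}$. For any cut $C$, the equivalence relation of Definition~\ref{Defn:EquivRelation} identifies $a,b\in\field{P_{q,s}}^{\card{I_C}}$ exactly when their coordinate sums agree in $\field{P_{q,s}}$; since $\card{I_C}\ge 1$, the partial sum ranges over all of $\field{P_{q,s}}$, so $\maxRangeB{C,f_{\mathrm{lin}}}=P_{q,s}$ and $\log_{P_{q,s}}\maxRangeB{C,f_{\mathrm{lin}}}=1$. Consequently $\mbox{\cut{\Network,f_{\mathrm{lin}}}}=\min_{C\in\cuts{\Network}}\card{C}$, and Theorem~\ref{Th:ModuloSumCodCap} furnishes a sequence of $(k,n)$ solutions over $\field{P_{q,s}}$ for $f_{\mathrm{lin}}$ with $k/n\to\min_{C\in\cuts{\Network}}\card{C}$.

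Next I would convert each field solution into a solution over the original alphabet $\alphabet$ by an alphabet-simulation argument. Fix a block length $N=\lceil n\log_q P_{q,s}\rceil$; since $q^{N}\ge P_{q,s}^{\,n}$, there is an injection $\phi:\field{P_{q,s}}^{n}\to\alphabet^{N}$ with a partial inverse $\phi^{-1}$. Each non-receiver node decodes its incoming $\alphabet^{N}$-vectors via $\phi^{-1}$, applies the original $\field{P_{q,s}}$-encoding function (incorporating its own message vector, which lies in $\alphabet^{k}\subseteq\field{P_{q,s}}^{k}$, if it is a source), and re-encodes the output through $\phi$; the receiver decodes analogously. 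Because the embedded source symbols never produce a field sum exceeding $\cardsources(q-1)<P_{q,s}$, the recovered field sums equal the desired arithmetic sums, yielding a $(k,N)$ solution over $\alphabet$ of rate $k/\lceil n\log_q P_{q,s}\rceil$.

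Finally, letting the field solutions approach capacity and scaling $(k,n)\mapsto(mk,mn)$ so that the ceiling in $N$ becomes negligible, the achievable rate over $\alphabet$ tends to $\bigl(\min_{C\in\cuts{\Network}}\card{C}\bigr)/\log_q P_{q,s}=\min_{C\in\cuts{\Network}}\card{C}/\log_q P_{q,s}$, which is the claimed bound. The main obstacle I anticipate is the alphabet-simulation step: one must verify that the node-by-node re-encoding through $\phi$ and $\phi^{-1}$ preserves correctness of the field code on every network topology and that the rounding in $N$ washes out in the limit, so that the supremum defining $\codCap{\Network,f}$ indeed dominates the stated quantity.
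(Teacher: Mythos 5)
Your proposal is correct and follows essentially the same route as the paper's proof: compute the mod-$P_{q,s}$ sum over the field $\field{P_{q,s}}$ via Theorem~\ref{Th:ModuloSumCodCap} (whose min-cut equals $\min_{C}\card{C}$), observe that since $P_{q,s}>s(q-1)$ the field sum of the embedded source symbols equals the arithmetic sum, and simulate the field code over the original alphabet with block length $\lceil cn\log_q P_{q,s}\rceil$, letting the scaling parameter grow so the ceiling washes out. The alphabet-simulation step you flag as the main obstacle is exactly the point the paper handles by citation, so your argument matches theirs in both structure and detail.
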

\begin{proof}
Let $p = P_{q,s}$
and let $\Network'$ denote the same network as $\Network$
but whose alphabet is $\field{p}$, the finite field  of order $p$.

Let $\epsilon > 0$. 
From Theorem~\ref{Th:ModuloSumCodCap}, 
there exists a $(k,n)$ solution for computing the $\field{p}$-sum of the source messages 
in $\Network'$ with an achievable computing rate satisfying
$$
\frac{k}{n} \ge \displaystyle \underset{ C \in \cuts{\Network} }{\min} \card{C} - \epsilon.
$$
%
This $(k,n)$ solution can be repeated to 
derive a $(ck,cn)$ solution for any integer $c \ge 1$
(note that edges in the network $\Network$ carry symbols from the alphabet 
$\alphabet = \{0,1,\ldots,q-1\}$, 
while those in the network $\Network'$ carry symbols from a larger alphabet $\field{p}$).
Any $(ck,cn)$ solution for computing the $\field{p}$-sum in 
$\Network'$ can be `simulated' 
in the network $\Network$ by a $\left(ck,\lceil cn \log_{q}p \rceil\right)$ code
(e.g. see \cite{ReverseButterfly}).
Furthermore, 
since $p \ge s(q-1)+1$ and the source alphabet is $\{0,1,\ldots,q-1\}$, 
the $\field{p}$-sum of the source messages in 
network $\Network$ is equal to their arithmetic sum. 
Thus, by choosing $c$ large enough, 
the arithmetic sum target function is computed in 
$\Network$ with an achievable computing rate of at least 
$$
\displaystyle \frac{\underset{ C \in \cuts{\Network} }{\min} \card{C}}{\log_{q}p } - 2\epsilon.
$$
Since $\epsilon$ is arbitrary, the result follows. 
\end{proof}
\begin{theorem} \label{Th:SymmFuncCodCap}
If $\Network$ is a network with alphabet 
$\alphabet = \{0, 1, \ldots, q-1 \}$ and a symmetric target function 
$f$, then
$$
\codCap{\Network, f} \ge 
\frac{ \displaystyle\underset{ C \in \cuts{\Network}}{\min} \card{C}}{(q - 1) \cdot 
\log_{q}P(s)} 
$$
where $P(s)$ is the smallest prime number%
\footnote{ From Bertrand's Postulate \cite[p.343]{Hardy79}, we have $P(s) \le 2s$.} 
greater than $s$. 
\end{theorem}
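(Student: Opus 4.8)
The plan is to reduce the computation of a symmetric target function to the computation of its histogram, and then to obtain each histogram entry as an arithmetic sum of binary indicators, which is exactly the situation handled by Theorem~\ref{Th:computingSum} (via Theorem~\ref{Th:ModuloSumCodCap}).

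First I would use symmetry: since $f$ is invariant under permutations of its arguments, its value depends on $(x_1,\ldots,x_s)$ only through the histogram $(c_0,c_1,\ldots,c_{q-1})$ with $c_i = \card{\{j : x_j = i\}}$ (the histogram target function of Definition~\ref{Defn:ExTargetFns}). Because $c_0 = s - \sum_{i=1}^{q-1} c_i$, it suffices for $\receiver$ to compute the vector $(c_1,\ldots,c_{q-1})$ in order to evaluate $f$. For each fixed $i \in \{1,\ldots,q-1\}$, every source $\source_j$ can locally form the binary indicator $\mathbf{1}[x_j = i] \in \{0,1\}$, and then $c_i$ is precisely the arithmetic sum $\sum_{j=1}^{s}\mathbf{1}[x_j = i]$ of these $s$ binary values. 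In particular $0 \le c_i \le s$.

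Next I would compute a single $c_i$ exactly as in the proof of Theorem~\ref{Th:computingSum}, but with the indicators as inputs so that the relevant prime is $P(s)$ rather than $P_{q,s}$. Fixing $p = P(s)$ and passing to the network $\Network'$ with finite-field alphabet $\field{p}$, Theorem~\ref{Th:ModuloSumCodCap} yields a solution for the $\field{p}$-sum of the indicators at a rate approaching its min-cut value; since the $\field{p}$-sum is linear and its \footprintsize on every cut $C$ equals $p$ (all partial sums in $\field{p}$ are realizable, so $\log_p \maxRangeB{C,f} = 1$), that value is $\underset{C \in \cuts{\Network}}{\min}\,\card{C}$. Because $p > s \ge c_i$, the $\field{p}$-sum of the indicators coincides with their arithmetic sum $c_i$. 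Simulating this $\field{p}$-valued solution over the actual alphabet $\{0,1,\ldots,q-1\}$, exactly as in Theorem~\ref{Th:computingSum}, rescales the rate by $1/\log_q p$, so each individual $c_i$ is computable at a rate approaching $\big(\underset{C \in \cuts{\Network}}{\min}\,\card{C}\big)\big/\log_q P(s)$.

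Finally I would combine the $q-1$ separate computations by time-sharing: devoting a $1/(q-1)$ fraction of the network uses to each index $i$, the receiver recovers the whole vector $(c_1,\ldots,c_{q-1})$, and hence $f$, at a rate approaching $\frac{1}{q-1}\cdot\big(\underset{C \in \cuts{\Network}}{\min}\,\card{C}\big)\big/\log_q P(s)$, which is the claimed bound; letting the approximation error tend to $0$ finishes the proof. The steps needing the most care are verifying that the $\field{p}$-sum has \footprintsize exactly $p$ across every cut (so that the single-index rate is $\min_C \card{C}/\log_q P(s)$ and not larger), and the bookkeeping that interleaves the alphabet simulation, the time-sharing over the $q-1$ indices, and the local indicator preprocessing at each source; the conceptual crux is simply that a symmetric function is determined by the $q-1$ arithmetic counts, each of which is an arithmetic sum already reachable through Theorem~\ref{Th:computingSum}.
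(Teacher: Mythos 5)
Your proposal is correct and follows essentially the same route as the paper: reduce the symmetric function to the histogram, obtain each count $c_i$ as an arithmetic sum of binary indicators, realize that sum as an $\field{p}$-sum with $p=P(s)$ via Theorem~\ref{Th:ModuloSumCodCap}, simulate over the alphabet of size $q$, and time-share over the $q-1$ indices. The only cosmetic difference is that you inline the argument of Theorem~\ref{Th:computingSum} (invoking the linear-function result directly on the indicator inputs) rather than citing that theorem for the binary network as a black box; the resulting bound is identical.
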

%
%
\begin{proof}
From Definition~\ref{Defn:SymmetricFunc}, it suffices to evaluate the histogram target function $\hat{f}$ for computing $f$.
For any set of source messages $(x_1,x_2,\ldots,x_{\cardsources}) \in \alphabet^{\cardsources}$, we have
$$
\hat{f}\left(x_1,\ldots,x_{\cardsources}\right) = \left(c_0,c_1,\ldots,c_{q - 1}\right)
$$
where $c_i = \card{\left\{j : x_j = i\right\}}$ for each $i \in \alphabet$. 
Consider the network $\Network'= (G, \sources, \receiver)$ with alphabet $\alphabet' = \{0, 1\}$. 
Then for each $i \in \alphabet$, $c_i$ can be evaluated by computing the arithmetic sum target function in 
$\Network'$ where every source node $\source_j$ is assigned the message $1$ if $x_j = i$, and $0$ otherwise. 
Since we know that 
$$
\sum_{i = 0}^{q-1} c_i = \cardsources
$$
the histogram target function $\hat{f}$ can be evaluated by computing the arithmetic sum target function $q-1$ 
times in the network $\Network'$ with alphabet $\alphabet' = \{0, 1\}$. 
Let $\epsilon > 0$. 
From Theorem~\ref{Th:computingSum} in the Appendix,
there exists a $(k,n)$ solution for computing the arithmetic sum target function in 
$\Network'$ with an achievable computing rate of at least 
$$
\frac{k}{n} \ge \frac{ \displaystyle\underset{ C \in \cuts{\Network}}{\min} \card{C}}{\log_{2}P(s)} - \epsilon .
$$
The above $(k,n)$ solution can be repeated to 
derive a $(ck,cn)$ solution for any integer $c \ge 1$. 
Note that edges in the network $\Network$ carry symbols from the alphabet 
$\alphabet = \{0,1,\ldots,q-1\}$, while those in the network $\Network'$ carry symbols from $\alphabet' = \{0, 1\}$.
Any $(ck,cn)$ code for computing the arithmetic sum function in $\Network'$ 
can be simulated in the network $\Network$ by a 
$(ck, \lceil cn \log_{q} 2\rceil)$ code%
\footnote{To see details of such a simulation, 
we refer the interested reader to  \cite{ReverseButterfly}.
}.
Thus by choosing $c$ large enough, 
the above-mentioned code can be simulated in the network $\Network$ to derive a solution for 
computing the histogram target function $\hat{f}$ with an achievable computing rate%
\footnote{Theorem~\ref{Th:SymmFuncCodCap} provides a uniform lower bound on the 
achievable computing rate for any symmetric function. 
Better lower bounds can be found by considering specific functions;
for example Theorem~\ref{Th:computingSum} 
gives a better bound for the arithmetic sum target function.} 
of at least 
$$
\frac{1}{(q-1)} \cdot \frac{1}{\log_{q}2} \cdot 
\frac{ \displaystyle\underset{ C \in \cuts{\Network}}{\min} \card{C}}{\log_{2}P(s)} - 2\epsilon .
$$
Since $\epsilon$ is arbitrary, the result follows. 
\end{proof}
\remove{
\begin{figure}[ht]  
    \centering
    \psfrag{s1}{\mbox{\Large $\source_1$}}
    \psfrag{s2}{\mbox{\Large$\source_2$}}
    \psfrag{T}{\mbox{\Large$\receiver$}}
    \psfrag{N_1}{\mbox{\Large$\Network_2$}}
    \psfrag{s3}{\mbox{\Large$\source_3$}}
    \psfrag{T}{\mbox{\Large$\receiver$}}
    \psfrag{N_2}{\mbox{\Large$\Network_3$}}
\scalebox{.5}{ \includegraphics{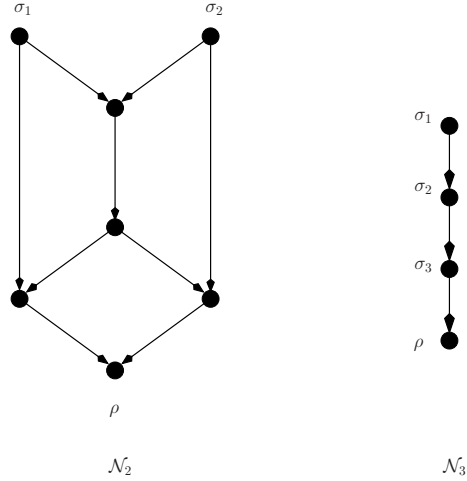}}
%
\caption{The Reverse Butterfly Network $\Network_2$ has two binary sources $\{\source_1, \source_2\}$ 
and network $\Network_3$ has three binary sources $\{\source_1, \source_2, \source_3\}$, each with $\alphabet = \{0,1\}$. 
Each network's receiver $\receiver$ computes the arithmetic sum of the source messages.}
\label{Fig:ButterflyLine}
\end{figure}
}
\begin{example} \label{Ex:boundCompare}
Consider networks $\Network_2$ and $\Network_3$ 
in Figure~\ref{Fig:ButterflyLine},
each with alphabet $\alphabet = \{0, 1\}$ 
and the (symmetric) arithmetic sum target function $f$.
Theorem~\ref{Th:SymmFuncCodCap} provides a larger lower bound on the computing capacity 
$\codCap{\Network_2, f}$ than Theorem~\ref{Th:LowerBndGeneral}, 
but a smaller lower bound on $\codCap{\Network_3, f}$.
\begin{itemize}
\item
For network $\Network_2$ (in Figure~\ref{Fig:ButterflyLine}),
we have
$\underset{C \in \cuts{\Network}}{\max} \maxRangeB{C,f}=3$
and
$\underset{C \in \cuts{\Network}}{\min} |C|=2$, 
both of which occur, for example,
when $C$ consists of the two in-edges to the receiver $\receiver$.
Also, $(q-1)\log_q P(s,q) = \log_2 3$
and $\steinerNumber{\Network} = 3/2$, 
so
\begin{align}
\codCap{\Network_2, f} &\ge (3/2)/\log_2 3 & \Comment{Theorem~\ref{Th:LowerBndGeneral}} \nonumber \\ 
\codCap{\Network_2, f} &\ge 2/\log_2 3     & \Comment{Theorem~\ref{Th:SymmFuncCodCap}}. \label{Eq:lbN2}
\end{align}
In fact, we get the upper bound $\codCap{\Network_2,f} \le 2/\log_2 3$ from Theorem~\ref{Th:upperBoundOnCodingCapacity}, and thus from \eqref{Eq:lbN2}, 
$\codCap{\Network_2,f} = 2/\log_2 3$.
\item
For network $\Network_3$,
we have
$\underset{C \in \cuts{\Network}}{\max} \maxRangeB{C,f}=4$
and
$\underset{C \in \cuts{\Network}}{\min} |C|=1$, 
both of which occur when $C = \{(\source_3,\receiver)\}$.
Also, $(q-1)\log_q P(s,q) = \log_2 5$
and $\steinerNumber{\Network} = 1$, 
so
\begin{align*}
\codCap{\Network_3, f} &\ge  1/\log_2 4 & \Comment{Theorem~\ref{Th:LowerBndGeneral}} \\
\codCap{\Network_3, f} &\ge  1/\log_2 5 & \Comment{Theorem~\ref{Th:SymmFuncCodCap}}. 
\end{align*}
From Theorem~\ref{Th:codCapTree}, we have $\codCap{\Network_3, f} = 1/\log_2 4$.
\end{itemize}
\end{example}
\begin{remark}
An open question, 
pointed out in \cite{Cannons-Dougherty-Freiling-Zeger05}, 
is whether the coding capacity of a network
can be irrational.
Like the coding capacity,
the computing capacity is the supremum of 
ratios $k/n$ for which a $(k,n)$ solution exists.
Example~\ref{Ex:boundCompare}
demonstrates that the computing capacity
of a network 
(e.g. $\Network_2$)
with unit capacity links can be 
irrational when the target function 
is the arithmetic sum function. 
\end{remark}
%
%
%
\section{On the tightness of the min-cut upper bound}
\label{Sec:Tightness}
In the previous section, 
Theorems~\ref{Th:AchievabilityMinCutIdentity} - \ref{Th:codCapTree} 
demonstrated three special instances for which the \textup{\cut{\Network,f}} upper bound is tight. 
In this section, 
we use Theorem~\ref{Th:LowerBndGeneral} and Theorem~\ref{Th:SymmFuncCodCap} 
to establish further results on the tightness of the \textup{\cut{\Network,f}} upper bound for different classes of target functions.

The following lemma provides a bound on the \footprintsize $\maxRangeB{I,f}$ for any divisible target function $f$.
\begin{lemma}
\label{Lemma:DivisibleFnMaxRange}
For any divisible target function 
$f  :  \alphabet^{\cardsources} \longrightarrow \decodeAlphabet$
and any index set $I \subseteq \{1, 2,\ldots, \cardsources\}$, 
the \footprintsize satisfies
$$
\maxRangeB{I,f} \le \card{ f\left(\alphabet^{\cardsources} \right) }.
$$
\end{lemma}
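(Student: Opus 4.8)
The plan is to dominate the footprint size by the size of the image of the divisible decomposition function $f^{I}$, and then finish using condition (2) of Definition~\ref{Defn:DivisibleFunc}. Precisely, I would prove the two inequalities
$$
\maxRangeB{I,f} \ \le \ \card{f^{I}\!\left(\alphabet^{\card{I}}\right)} \ \le \ \card{f\!\left(\alphabet^{\cardsources}\right)},
$$
where the second is exactly property (2), so all the work is in the first.

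For the first inequality, the key claim I would establish is: for all $a,b \in \alphabet^{\card{I}}$, if $f^{I}(a) = f^{I}(b)$ then $a \equiv b$. Granting this, every level set of $f^{I}$ is contained in a single equivalence class of $\equiv$, so the number of equivalence classes $\maxRangeB{I,f}$ cannot exceed the number of distinct values taken by $f^{I}$, which is $\card{f^{I}(\alphabet^{\card{I}})}$.

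To prove the claim, I would invoke property (3) of divisibility applied to the \emph{full} index set $\{1,\ldots,\cardsources\}$ with the two-block partition $\{I,\, I^{c}\}$, where $I^{c} = \{1,\ldots,\cardsources\} - I$. Since $f^{\{1,\ldots,\cardsources\}} = f$ by property (1), this produces a function $g$ with $f(x) = g\!\left(f^{I}(\VecComp{x}{I}),\, f^{I^{c}}(\VecComp{x}{I^{c}})\right)$ for every $x \in \alphabet^{\cardsources}$. Now fix $a,b$ with $f^{I}(a)=f^{I}(b)$ and take any $x,y \in \alphabet^{\cardsources}$ with $\VecComp{x}{I}=a$, $\VecComp{y}{I}=b$, and $\VecComp{x}{j}=\VecComp{y}{j}$ for all $j \notin I$. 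Then $\VecComp{x}{I^{c}} = \VecComp{y}{I^{c}}$, so both arguments of $g$ agree for $x$ and $y$, giving $f(x)=f(y)$; by Definition~\ref{Defn:EquivRelation} this is precisely $a \equiv b$.

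I expect the main (really the only) obstacle to be the correct choice of where to apply the decomposition: property (3) must be used on $\{1,\ldots,\cardsources\}$ rather than on $I$, and the complement $I^{c}$ must be isolated as its own block, so that the ``hold the coordinates outside $I$ fixed'' quantifier in the definition of $\equiv$ lines up with the fact that the $I^{c}$-argument of $g$ is unchanged. Once the partition is chosen this way, the rest is a one-line substitution. The degenerate cases are immediate: if $I^{c}=\emptyset$ the single-block decomposition $f = g\circ f^{I}$ gives the claim directly, and if $I=\emptyset$ then $\maxRangeB{I,f}=1 \le \card{f(\alphabet^{\cardsources})}$.
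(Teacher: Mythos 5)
Your proposal is correct and follows essentially the same route as the paper's proof: both apply the divisibility decomposition with the two-block partition $\{I, I^{c}\}$ of the full index set to write $f(x) = g\left(f^{I}(\VecComp{x}{I}), f^{I^{c}}(\VecComp{x}{I^{c}})\right)$, conclude that $f^{I}(a) = f^{I}(b)$ forces $a \equiv b$ so that $\maxRangeB{I,f} \le \card{f^{I}(\alphabet^{\card{I}})}$, and finish with Definition~\ref{Defn:DivisibleFunc}(2). Your write-up merely spells out the quantifier-matching step and the degenerate cases in more detail than the paper does.
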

\begin{proof}
From the definition of a divisible target function, 
for any $I \subseteq \{1, 2,\ldots, \cardsources\}$, 
there exist maps $f^{I}$, $f^{I^{c}}$, and $g$ such that 
$$
f(x) = g\left( f^{I}(\VecComp{x}{I}), f^{I^{c}}(\VecComp{x}{I^{c}}) \right) \ \ \ \  \forall \ \setOfMessages \in \alphabet^{\cardsources}
$$
where $I^{c} = \{1,2,\ldots,\cardsources\} - I$. 
From the definition of the equivalence relation $\equiv$ (see Definition~\ref{Defn:EquivRelation}), 
it follows that $a, b \in \alphabet^{\card{I}}$ belong to the same equivalence class whenever
$f^{I}(a) = f^{I}(b)$. 
This fact implies that $\maxRangeB{I,f} \le \card{ f^{I}\left(\alphabet^{\card{I}}\right) }$.
We need $\card{ f^{I}\left(\alphabet^{\card{I}}\right) }  \le  \card{ f\left(\alphabet^{\cardsources} \right) }$ to complete the proof 
which follows from Definition~\ref{Defn:DivisibleFunc}(2).
\end{proof}
\begin{theorem}\label{Th:DivFuncCodCap}
If $\Network$ is a network with a divisible target function 
$f$, then
$$
\codCap{\Network,f} \ge \frac{\steinerNumber{\Network}}{\card{ \inEdges{\receiver} }} \cdot \mbox{\textup{\cut{\Network,f}}} 
$$
where $\inEdges{\receiver}$ denotes the set of in-edges of the receiver $\receiver$.
\label{thm:divisible}
\end{theorem}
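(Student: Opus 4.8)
The plan is to combine the general Steiner-tree lower bound (Theorem~\ref{Th:LowerBndGeneral}) with the special structure of divisible functions, in particular the \footprintsize estimate of Lemma~\ref{Lemma:DivisibleFnMaxRange}. The target inequality rewrites $\codCap{\Network,f}$ from below in terms of $\steinerNumber{\Network}$, $\card{\inEdges{\receiver}}$, and $\mbox{\cut{\Network,f}}$, so I expect the proof to start from Theorem~\ref{Th:LowerBndGeneral} and then bound the quantity $\min_{C}1/\log_{\card{\alphabet}}\maxRangeB{C,f}$ below by $(1/\card{\inEdges{\receiver}})\cdot\mbox{\cut{\Network,f}}$.

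First I would recall Theorem~\ref{Th:LowerBndGeneral}, which gives
$$
\codCap{\Network,f} \ge \steinerNumber{\Network}\cdot\underset{C\in\cuts{\Network}}{\min}\ \frac{1}{\log_{\card{\alphabet}}\maxRangeB{C,f}}.
$$
The entire task is then to show that this right-hand side is at least $\frac{\steinerNumber{\Network}}{\card{\inEdges{\receiver}}}\cdot\mbox{\cut{\Network,f}}$, i.e. that
$$
\underset{C\in\cuts{\Network}}{\min}\ \frac{1}{\log_{\card{\alphabet}}\maxRangeB{C,f}} \ \ge\ \frac{1}{\card{\inEdges{\receiver}}}\cdot\underset{C\in\cuts{\Network}}{\min}\ \frac{\card{C}}{\log_{\card{\alphabet}}\maxRangeB{C,f}}.
$$
Here is where Lemma~\ref{Lemma:DivisibleFnMaxRange} enters: for a divisible $f$ every cut $C$ satisfies $\maxRangeB{C,f}\le\card{f(\alphabet^{\cardsources})}$, so the numerator $\log_{\card{\alphabet}}\maxRangeB{C,f}$ is bounded above by a constant independent of the cut. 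I expect the proof to exploit that the cut $\inEdges{\receiver}$ is itself a cut separating all sources, so $\maxRangeB{\inEdges{\receiver},f}=\card{f(\alphabet^{\cardsources})}$, and that every $C$ has $\card{C}\le\card{\inEdges{\receiver}}$ is \emph{not} automatic — rather, the correct mechanism is that the min-cut $\mbox{\cut{\Network,f}}$ picks out the cut minimizing $\card{C}/\log_{\card{\alphabet}}\maxRangeB{C,f}$, and for \emph{that} minimizing cut one bounds $\card{C}$ from above by $\card{\inEdges{\receiver}}$ after replacing $C$ by the receiver cut.

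Concretely, the cleanest route is: let $C^\ast$ achieve $\mbox{\cut{\Network,f}}=\card{C^\ast}/\log_{\card{\alphabet}}\maxRangeB{C^\ast,f}$. Since $\card{C^\ast}$ counts distinct edges of a cut and $\inEdges{\receiver}$ is a cut through which all source information must pass, one argues $\card{C^\ast}\le\card{\inEdges{\receiver}}$, or more robustly compares against the receiver cut directly, giving $\mbox{\cut{\Network,f}}\le\card{\inEdges{\receiver}}/\log_{\card{\alphabet}}\maxRangeB{\inEdges{\receiver},f}$. Combining with Lemma~\ref{Lemma:DivisibleFnMaxRange} (which forces $\log_{\card{\alphabet}}\maxRangeB{C,f}\le\log_{\card{\alphabet}}\card{f(\alphabet^{\cardsources})}=\log_{\card{\alphabet}}\maxRangeB{\inEdges{\receiver},f}$ for \emph{every} $C$), one gets the uniform bound $\min_C 1/\log_{\card{\alphabet}}\maxRangeB{C,f}\ge 1/\log_{\card{\alphabet}}\maxRangeB{\inEdges{\receiver},f}\ge\mbox{\cut{\Network,f}}/\card{\inEdges{\receiver}}$, and substituting into Theorem~\ref{Th:LowerBndGeneral} finishes the proof.

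\textbf{The main obstacle} I anticipate is justifying the divisor $\card{\inEdges{\receiver}}$ cleanly, that is, pinning down exactly which inequality relates an arbitrary minimizing cut $C^\ast$ to the receiver's in-edge cut. The divisibility hypothesis is doing the essential work through Lemma~\ref{Lemma:DivisibleFnMaxRange}: it is precisely the property $\maxRangeB{I,f}\le\card{f(\alphabet^{\cardsources})}$ for \emph{all} $I$ that makes all the numerators $\log_{\card{\alphabet}}\maxRangeB{C,f}$ comparable to a single global constant, collapsing the two otherwise-different minima. I would be careful that the ratio in $\mbox{\cut{\Network,f}}$ and the ratio $1/\log_{\card{\alphabet}}\maxRangeB{C,f}$ are minimized over possibly different cuts, so the argument must hold for the cut realizing $\mbox{\cut{\Network,f}}$ rather than cut-by-cut; handling this mismatch correctly is the crux of the verification.
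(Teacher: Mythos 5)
Your proposal is correct and follows essentially the same route as the paper: invoke Theorem~\ref{Th:LowerBndGeneral}, use Lemma~\ref{Lemma:DivisibleFnMaxRange} to bound $\maxRangeB{C,f}\le\card{f(\alphabet^{\cardsources})}=\maxRangeB{\inEdges{\receiver},f}$ uniformly over cuts, and compare \cut{\Network,f} against the receiver cut $\inEdges{\receiver}$ to obtain $\mbox{\cut{\Network,f}}\le\card{\inEdges{\receiver}}/\log_{\card{\alphabet}}\card{f(\alphabet^{\cardsources})}$. You correctly discard the unnecessary (and generally false) claim $\card{C^\ast}\le\card{\inEdges{\receiver}}$ in favor of the direct comparison, which is exactly what the paper does.
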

\begin{proof}
Let $\alphabet$ be the network alphabet. 
From Theorem~\ref{Th:LowerBndGeneral}, 
\begin{align}
\codCap{\Network,f} & \ge \steinerNumber{\Network} \cdot \underset{C \in \cuts{\Network} }{\min} \ 
 \frac{1}{\log_{\card{\alphabet}} \maxRangeB{C,f}} \nonumber \\
\label{Eq:LowerBndCodCap}
&  \ge \steinerNumber{\Network} \cdot \frac{1}{\log_{\card{\alphabet}} \card{ f\left(\alphabet^{\cardsources}\right) }} 
          & \Comment{Lemma~\ref{Lemma:DivisibleFnMaxRange}}.
\end{align}
On the other hand, 
for any network $\Network$, 
the set of edges $\inEdges{\receiver}$ 
is a cut that separates the set of sources $\sources$ from $\receiver$.
Thus,
\begin{align}
\nonumber
\mbox{\cut{\Network,f}} &\le \frac{\card{\inEdges{\receiver}}}{\log_{\card{\alphabet}}\maxRangeB{\inEdges{\receiver},f}} 
       & \Comment{\eqref{Eq:min-f-cut}} \\
\label{Eq:UpperBndMinCut}
& = \frac{\card{\inEdges{\receiver}}}{\log_{\card{\alphabet}} \card{f\left(\alphabet^{\cardsources}\right)} } 
       & \Comment{$I_{ \inEdges{\receiver} } = \sources$ and Definition~\ref{Defn:NoOfEquivClasses} }.
\end{align} 
%
%
Combining \eqref{Eq:LowerBndCodCap} and \eqref{Eq:UpperBndMinCut} completes the proof.
\end{proof}
\begin{theorem}\label{Th:SymmFuncGap}
If $\Network$ is a network with alphabet $\alphabet = \{0, 1, \ldots, q-1 \}$ and symmetric target function $f$, then
$$
\codCap{\Network,f} \ge \frac{\log_q \hat{\maxRangeB{}}_f}{(q - 1) \cdot \log_{q}P(s)} \cdot \mbox{\textup{\cut{\Network,f}}} 
$$
where $P(s)$ is the smallest prime number greater than $s$ and%
\footnote{From our assumption, $\hat{\maxRangeB{}}_f \ge 2$ for any target function $f$.} 
$$
\hat{\maxRangeB{}}_f = \underset{ I \subseteq \{1,\ldots,\cardsources\} }{\min} \maxRangeB{I, f} .
$$
\end{theorem}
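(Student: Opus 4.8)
The plan is to reduce the general symmetric case to the arithmetic sum bound already established in Theorem~\ref{Th:SymmFuncCodCap}, and then convert the absolute lower bound there into a bound expressed as a fraction of \cut{\Network,f}. First I would recall from Theorem~\ref{Th:SymmFuncCodCap} that for any symmetric target function $f$ over $\alphabet = \{0,1,\ldots,q-1\}$ we already have
\begin{align*}
\codCap{\Network,f} \ge \frac{ \underset{C \in \cuts{\Network}}{\min} \card{C} }{(q-1)\cdot \log_q P(s)}.
\end{align*}
So the entire task is to show that the right-hand side is at least $\dfrac{\log_q \hat{\maxRangeB{}}_f}{(q-1)\cdot\log_q P(s)} \cdot \mbox{\cut{\Network,f}}$, which amounts to proving the single inequality
\begin{align*}
\underset{C \in \cuts{\Network}}{\min}\card{C} \ \ge\ \log_q\hat{\maxRangeB{}}_f \cdot \mbox{\cut{\Network,f}}.
\end{align*}

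Next I would unfold the definition of \cut{\Network,f} from \eqref{Eq:min-f-cut}. For any cut $C$ we have, by the definition of $\hat{\maxRangeB{}}_f$ as the minimum of $\maxRangeB{I,f}$ over all index sets, that $\maxRangeB{C,f} = \maxRangeB{I_C,f} \ge \hat{\maxRangeB{}}_f$, hence $\log_{\card{\alphabet}}\maxRangeB{C,f} \ge \log_q \hat{\maxRangeB{}}_f$ (here $\card{\alphabet}=q$). Therefore for every cut $C$,
\begin{align*}
\frac{\card{C}}{\log_q \maxRangeB{C,f}} \le \frac{\card{C}}{\log_q \hat{\maxRangeB{}}_f},
\end{align*}
and taking the minimum over all $C \in \cuts{\Network}$ gives $\mbox{\cut{\Network,f}} \le \dfrac{\underset{C}{\min}\card{C}}{\log_q \hat{\maxRangeB{}}_f}$, which rearranges exactly to the displayed inequality above. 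Combining this with the Theorem~\ref{Th:SymmFuncCodCap} bound yields the claim.

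The argument is essentially a bookkeeping manipulation rather than a construction, since the real coding work is already done inside Theorem~\ref{Th:SymmFuncCodCap} (which itself relies on Theorem~\ref{Th:computingSum} and the linear-coding achievability of Theorem~\ref{Th:ModuloSumCodCap}). The only subtlety I anticipate is making sure the two minimizations are over the same object: the numerator $\min_C \card{C}$ in Theorem~\ref{Th:SymmFuncCodCap} ranges over all cuts, while $\hat{\maxRangeB{}}_f$ is a minimum over \emph{all} index sets $I \subseteq \{1,\ldots,\cardsources\}$, not just those arising as $I_C$ for some cut $C$; one must check the inequality $\maxRangeB{C,f}\ge\hat{\maxRangeB{}}_f$ still holds, which it does because $\{I_C : C \in \cuts{\Network}\}$ is a subset of all index sets and $\hat{\maxRangeB{}}_f$ minimizes over the larger family. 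I would therefore state this containment explicitly so the bound $\log_q\maxRangeB{C,f}\ge\log_q\hat{\maxRangeB{}}_f$ is justified for every cut, and then the chain of inequalities closes with no further effort.
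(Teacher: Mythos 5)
Your proposal is correct and is essentially identical to the paper's own proof: the paper likewise combines the bound of Theorem~\ref{Th:SymmFuncCodCap} with the single observation that $\mbox{\cut{\Network,f}} \le \frac{1}{\log_q \hat{\maxRangeB{}}_f}\,\underset{C \in \cuts{\Network}}{\min}\card{C}$, which follows from \eqref{Eq:min-f-cut} and the definition of $\hat{\maxRangeB{}}_f$. Your explicit remark that $\{I_C : C \in \cuts{\Network}\}$ is contained in the family of all index sets, so $\maxRangeB{C,f} \ge \hat{\maxRangeB{}}_f$ for every cut, is exactly the justification the paper leaves implicit.
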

\begin{proof}
The result follows immediately from Theorem~\ref{Th:SymmFuncCodCap} 
and since for any network $\Network$ and any target function $f$, 
\begin{align*}
\mbox{\cut{\Network,f}} \le \frac{1}{\log_q \hat{\maxRangeB{}}_f} \ \displaystyle\underset{ C \in \cuts{\Network}}{\min} \card{C}  &\Comment{\eqref{Eq:min-f-cut} and the definition of $\hat{\maxRangeB{}}_f$}.
\end{align*}
\end{proof}
The following results provide bounds on the gap between the computing capacity and the min-cut for $\lambda$-exponential and $\lambda$-bounded functions 
(see Definition~\ref{Defn:LambdaExpFunc}).
\begin{theorem}\label{Th:GapLambdaExp}
If $\lambda \in (0, 1]$ 
and $\Network$ is a network with a $\lambda$-exponential target function 
$f$, then
$$
\codCap{\Network,f} \ge \lambda \cdot \mbox{\textup{\cut{\Network,f}}} .
$$
\end{theorem}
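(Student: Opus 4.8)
The plan is to bound the computing capacity below by the min-cut of the \emph{identity} target function, and then use the $\lambda$-exponential hypothesis to compare that quantity with \cut{\Network,f}. Two elementary observations drive the argument: reconstructing every source message at the receiver is enough to compute \emph{any} target function, and the $\lambda$-exponential condition forces every \footprintsize to be large.

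First I would show that $\codCap{\Network,f}\ge\codCap{\Network,\mathrm{id}}$, where $\mathrm{id}$ denotes the identity target function. Given any $(k,n)$ solution that reproduces all the source message vectors at $\receiver$, we obtain a $(k,n)$ solution for $f$ by leaving all edge functions unchanged and post-composing the decoder with the componentwise application of $f$; hence every rate achievable for the identity is achievable for $f$. By Theorem~\ref{Th:AchievabilityMinCutIdentity}, $\codCap{\Network,\mathrm{id}}=\mbox{\textup{\minCut{\Network}}}=\min_{C\in\cuts{\Network}}\card{C}/\card{I_C}$, the last equality being the definition \eqref{Eq:min-cut}.

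Next I would convert the hypothesis into an inequality between the two cut quantities. Fix a cut $C\in\cuts{\Network}$. By Definition~\ref{Defn:LambdaExpFunc} and \eqref{Eq:maxRangeDef}, $\maxRangeB{C,f}=\maxRangeB{I_C,f}\ge\card{\alphabet}^{\lambda\card{I_C}}$, so $\log_{\card{\alphabet}}\maxRangeB{C,f}\ge\lambda\card{I_C}$. Because $C$ separates at least one source we have $\card{I_C}\ge1$, and $\maxRangeB{C,f}\ge2$, so both denominators below are strictly positive and we may divide to obtain
\[
\frac{\card{C}}{\log_{\card{\alphabet}}\maxRangeB{C,f}}\;\le\;\frac{1}{\lambda}\cdot\frac{\card{C}}{\card{I_C}}.
\]
Since this holds for every cut, taking the minimum over $C\in\cuts{\Network}$ of both sides and invoking \eqref{Eq:min-f-cut} gives $\mbox{\textup{\cut{\Network,f}}}\le(1/\lambda)\,\mbox{\textup{\minCut{\Network}}}$, i.e. $\mbox{\textup{\minCut{\Network}}}\ge\lambda\cdot\mbox{\textup{\cut{\Network,f}}}$. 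Chaining this with the first step yields $\codCap{\Network,f}\ge\mbox{\textup{\minCut{\Network}}}\ge\lambda\cdot\mbox{\textup{\cut{\Network,f}}}$, which is the claim.

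I do not expect a genuine obstacle here. The only points needing care are the elementary monotonicity of the minimum (if $a_C\le b_C$ for every $C$ then $\min_C a_C\le\min_C b_C$) and the check that all denominators are strictly positive so the per-cut division is legitimate. It is worth noting that this route bypasses the Steiner-tree packing bound of Theorem~\ref{Th:LowerBndGeneral} entirely: the coarse estimate $\codCap{\Network,f}\ge\mbox{\textup{\minCut{\Network}}}$ coming from the identity function already suffices once it is combined with the $\lambda$-exponential comparison of the two cut quantities.
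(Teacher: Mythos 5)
Your proof is correct and follows essentially the same route as the paper: the paper likewise combines $\codCap{\Network,f} \ge \mbox{\minCut{\Network}}$ (via the identity target function and Theorem~\ref{Th:AchievabilityMinCutIdentity}) with the per-cut estimate $\log_{\card{\alphabet}}\maxRangeB{C,f} \ge \lambda\card{I_C}$ to get $\mbox{\cut{\Network,f}} \le (1/\lambda)\,\mbox{\minCut{\Network}}$. Your write-up is, if anything, slightly more careful about the positivity of denominators and the monotonicity of the minimum.
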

\begin{proof}
We have %
\begin{align*}
\mbox{\cut{\Network,f}} & = \underset{C \in \cuts{\Network}}{\min} 
 \ \frac{\card{C}}{\log_{\card{\alphabet}} \maxRangeB{C,f}} \\
                        &  \le \underset{C \in \cuts{\Network}}{\min} \ \frac{\card{C}}{\lambda  \card{I_C}} 
                           & \Comment{$f$ being $\lambda$-exponential} \\
                 & = \frac{1}{\lambda} \cdot \mbox{\minCut{\Network}} & \Comment{\eqref{Eq:min-cut}} .
\end{align*}
Therefore, 
\begin{align*}
\hspace{-1in}
\frac{\mbox{\cut{\Network,f}}}{\codCap{\Network,f}} &\le  \frac{1}{\lambda} \cdot \frac{\mbox{\minCut{\Network}}}{\codCap{\Network,f}} \hspace{1.55in} \\
&\le \frac{1}{\lambda} 
\end{align*}
where the last inequality follows because a computing rate of
\minCut{\Network} is achievable for the identity target function
from Theorem~\ref{Th:AchievabilityMinCutIdentity}, 
and the computing capacity for any target function $f$ is lower bounded by the computing capacity for the identity target function
(since any target function can be computed from the identity function), i.e., $\codCap{\Network,f} \ge \mbox{\minCut{\Network}}$.
\end{proof}
\begin{theorem}\label{Th:GapLambdaBded}
Let $\lambda > 0$. If $\Network$ is a network with alphabet $\alphabet$ and a $\lambda$-bounded target function $f$,
and all non-receiver nodes in the network $\Network$ are sources, then
$$
\codCap{\Network,f} \ge \frac{\log_{\card{\alphabet}} \hat{\maxRangeB{}}_f }{\lambda} \cdot
\mbox{\textup{\cut{\Network,f}}} 
$$
where
$$
\hat{\maxRangeB{}}_f = \underset{ I \subseteq \{1,\ldots,\cardsources\} }{\min} \maxRangeB{I, f} .
$$
\end{theorem}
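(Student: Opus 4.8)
The plan is to combine the general Steiner-tree lower bound of Theorem~\ref{Th:LowerBndGeneral} with a classical arborescence-packing result, the latter being where the hypothesis that every non-receiver node is a source does its work. First I would feed $\lambda$-boundedness into Theorem~\ref{Th:LowerBndGeneral}. Since $\maxRangeB{C,f} \le \card{\alphabet}^{\lambda}$ for every cut $C$, we have $\log_{\card{\alphabet}}\maxRangeB{C,f} \le \lambda$, so that $\min_{C \in \cuts{\Network}} 1/\log_{\card{\alphabet}}\maxRangeB{C,f} \ge 1/\lambda$, and hence Theorem~\ref{Th:LowerBndGeneral} yields $\codCap{\Network,f} \ge \steinerNumber{\Network}/\lambda$.

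Next I would bound the right-hand side of the claimed inequality from above. Let $C^{*}$ be a cut of minimum cardinality. Because $\maxRangeB{C^{*},f} \ge \hat{\maxRangeB{}}_f$ by definition of $\hat{\maxRangeB{}}_f$, the same step used in the proof of Theorem~\ref{Th:SymmFuncGap} gives $\mbox{\cut{\Network,f}} \le |C^{*}|/\log_{\card{\alphabet}}\hat{\maxRangeB{}}_f$, and therefore $(\log_{\card{\alphabet}}\hat{\maxRangeB{}}_f/\lambda)\cdot\mbox{\cut{\Network,f}} \le \min_{C \in \cuts{\Network}}|C|/\lambda$. Thus the whole theorem reduces to the purely graph-theoretic inequality $\steinerNumber{\Network} \ge \min_{C\in\cuts{\Network}}|C|$, after which combining the three estimates gives $\codCap{\Network,f} \ge \steinerNumber{\Network}/\lambda \ge \min_{C}|C|/\lambda \ge (\log_{\card{\alphabet}}\hat{\maxRangeB{}}_f/\lambda)\cdot\mbox{\cut{\Network,f}}$, as required.

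The heart of the argument, and the step I expect to be the main obstacle, is establishing $\steinerNumber{\Network} \ge \min_{C}|C|$; this is exactly where the hypothesis is used. Since every non-receiver node is a source, a Steiner tree must span all of $\nodes$, and because every non-receiver node in a Steiner tree has exactly one out-edge, the Steiner trees of $\Network$ are precisely the spanning in-arborescences rooted at $\receiver$. Edmonds' arborescence-packing theorem (applied after reversing every edge) then asserts that the maximum number of edge-disjoint spanning in-arborescences rooted at $\receiver$ equals the minimum, over all non-receiver nodes $\node$, of the size of a minimum edge-cut separating $\node$ from $\receiver$. I would then observe that, since every cut separates at least one source and all non-receiver nodes are sources, this quantity is exactly $\min_{C\in\cuts{\Network}}|C|$. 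An integral family of that many edge-disjoint arborescences is a feasible point of the linear program \eqref{Eq:FracSteiner} (assign weight $1$ to each tree in the family and $0$ to the rest), so $\steinerNumber{\Network}$ is at least this integral packing number, giving $\steinerNumber{\Network} \ge \min_{C}|C|$.

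The two points that require careful verification are the identification of Steiner trees with spanning in-arborescences rooted at $\receiver$, and the translation of the $\min_{\node}$ edge-cut quantity appearing in Edmonds' theorem into $\min_{C\in\cuts{\Network}}|C|$; both follow directly from the assumption that all non-receiver nodes are sources, but I would state them explicitly rather than treat them as obvious. Everything else is the routine chaining of inequalities described above.
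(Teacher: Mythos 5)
Your proposal is correct and follows essentially the same route as the paper: apply Theorem~\ref{Th:LowerBndGeneral} together with $\lambda$-boundedness, bound $\mbox{\cut{\Network,f}}$ above by $\min_{C}\card{C}/\log_{\card{\alphabet}}\hat{\maxRangeB{}}_f$, and invoke Edmonds' theorem (via the all-nonreceiver-nodes-are-sources hypothesis) to relate $\steinerNumber{\Network}$ to $\min_{C\in\cuts{\Network}}\card{C}$. The only difference is cosmetic: the paper cites Edmonds' theorem for the equality $\steinerNumber{\Network}=\min_{C}\card{C}$ without elaboration, whereas you spell out the arborescence-packing argument and correctly note that only the inequality $\steinerNumber{\Network}\ge\min_{C}\card{C}$ is needed.
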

\begin{proof}
For any network $\Network$ such that all non-receiver nodes are sources, 
it follows from Edmond's Theorem \cite[p.405, Theorem 8.4.20]{West01} that   
$$
\steinerNumber{\Network} = \underset{C \in \cuts{\Network}}{\min} \ \card{C} .
$$
Then, 
\begin{align} 
\codCap{\Network,f} &\ge \underset{C \in \cuts{\Network}}{\min} \ \card{C} \cdot 
 \underset{C \in \cuts{\Network} }{\min} \ \frac{1}{\log_{\card{\alphabet}} \maxRangeB{C,f}} & \Comment{Theorem~\ref{Th:LowerBndGeneral}} \nonumber \\
\label{eq:cc}
& \ge \underset{C \in \cuts{\Network}}{\min} \ \frac{\card{C}}{\lambda} & \Comment{$f$ being $\lambda$-bounded}.
\end{align}
On the other hand, 
\begin{align} \label{eq:cn}
\mbox{\cut{\Network,f}} & = \underset{C \in \cuts{\Network}}{\min} 
\ \frac{\card{C}}{\log_{\card{\alphabet}} \maxRangeB{C,f}}  \nonumber \\
& \le \underset{C \in \cuts{\Network}}{\min} 
 \ \frac{\card{C}}{\log_{\card{\alphabet}} \hat{\maxRangeB{}}_f } & \Comment{the definition of $\hat{\maxRangeB{}}_f$}. 
 \end{align}
Combining \eqref{eq:cc} and \eqref{eq:cn} gives
\begin{align*}
\frac{ \mbox{ \cut{\Network,f} } }{\codCap{\Network,f}} 
                              & \le  \underset{C \in \cuts{\Network}}{\min} 
\ \frac{\card{C}}{\log_{\card{\alphabet}} \hat{\maxRangeB{}}_f } \cdot \frac{1}{\underset{C \in \cuts{\Network}}{\min}  \frac{\card{C}}{\lambda} } \\
                              & =  \frac{\lambda}{\log_{\card{\alphabet}}\hat{\maxRangeB{}}_f } .
\end{align*}
\end{proof}
Since the maximum and minimum functions
are $1$-bounded,
and $\hat{\maxRangeB{}}_f = \card{\alphabet}$ for each,
we get the following corollary.
\begin{corollary}
Let $\alphabet$ be any ordered alphabet and
let $\Network$ be any network such that all non-receiver
nodes in the network  are sources.
If the target function $f$ is either the maximum or the minimum function,
then
$$
\codCap{\Network,f} = \mbox{ \textup{\cut{\Network,f}}} .
$$
\label{cor:maxmin}
\end{corollary}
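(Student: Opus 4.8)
The plan is to derive this corollary directly from Theorem~\ref{Th:GapLambdaBded}, since the maximum and minimum functions satisfy exactly the two hypotheses needed to make that theorem's lower bound collapse onto the upper bound from Theorem~\ref{Th:upperBoundOnCodingCapacity}. First I would verify the two quantitative facts about the target function $f$ (either maximum or minimum) over an ordered alphabet $\alphabet$. The $\lambda$-bounded claim with $\lambda = 1$ requires $\maxRangeB{I,f} \le \card{\alphabet}$ for every index set $I$; this holds because knowing $\max\VecComp{x}{I}$ (respectively $\min\VecComp{x}{I}$) is all that the sources in $I$ must convey to determine the overall maximum given the remaining coordinates, so there are at most $\card{\alphabet}$ equivalence classes under $\equiv$. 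Second, I would check that $\hat{\maxRangeB{}}_f = \card{\alphabet}$: the minimum of $\maxRangeB{I,f}$ over nonempty $I$ is achieved and equals $\card{\alphabet}$, because even a single-source index set $I$ already distinguishes all $\card{\alphabet}$ possible values of that source's symbol (each distinct value can change the output for a suitable choice of the other coordinates, using that $f$ depends on all sources and the alphabet is ordered).

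With these two facts in hand, the substitution is immediate. Applying Theorem~\ref{Th:GapLambdaBded} with $\lambda = 1$ gives
\[
\codCap{\Network,f} \ge \frac{\log_{\card{\alphabet}} \hat{\maxRangeB{}}_f}{\lambda} \cdot \mbox{\cut{\Network,f}} = \frac{\log_{\card{\alphabet}} \card{\alphabet}}{1} \cdot \mbox{\cut{\Network,f}} = \mbox{\cut{\Network,f}},
\]
where I have used $\hat{\maxRangeB{}}_f = \card{\alphabet}$ and $\log_{\card{\alphabet}}\card{\alphabet} = 1$. The hypothesis that all non-receiver nodes are sources is exactly what Theorem~\ref{Th:GapLambdaBded} demands, so it carries over verbatim. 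The reverse inequality $\codCap{\Network,f} \le \mbox{\cut{\Network,f}}$ is the universal min-cut upper bound of Theorem~\ref{Th:upperBoundOnCodingCapacity}, valid for any network and any target function with no extra hypotheses. Combining the two inequalities yields the claimed equality.

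I do not anticipate a genuine obstacle here, since the corollary is essentially a specialization; the only real work is the two footprint-size computations. Of these, the slightly more delicate one is confirming $\hat{\maxRangeB{}}_f = \card{\alphabet}$ rather than something smaller. I would argue that for a singleton $I = \{i\}$, distinct symbols $a \neq b \in \alphabet$ are inequivalent under $\equiv$: choosing the other sources' values all equal to $\min(a,b)$ (for the maximum function) forces $f$ to take the two distinct values $a$ and $b$, so the single source already induces $\card{\alphabet}$ classes, and by $\lambda$-boundedness no larger index set can induce fewer than this minimum can be. Since $\maxRangeB{I,f} \le \card{\alphabet}$ for all $I$ and equals $\card{\alphabet}$ for singletons, the minimum over all $I$ is exactly $\card{\alphabet}$. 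The symmetric argument with $\max$ replaced by $\min$ and the extremal value taken as $\max(a,b)$ handles the minimum function. Everything else is the mechanical sandwiching described above.
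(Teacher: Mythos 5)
Your proposal is correct and follows exactly the paper's route: the corollary is obtained by specializing Theorem~\ref{Th:GapLambdaBded} with $\lambda=1$ and $\hat{\maxRangeB{}}_f=\card{\alphabet}$ and sandwiching against Theorem~\ref{Th:upperBoundOnCodingCapacity}, which is precisely what the paper does (it merely asserts the two footprint facts as ``easy to verify''). One small slip in your justification of $\hat{\maxRangeB{}}_f=\card{\alphabet}$: establishing the lower bound $\maxRangeB{I,f}\ge\card{\alphabet}$ only for singleton $I$ together with the upper bound for all $I$ does not by itself pin down the minimum over \emph{all} $I$ --- you need the lower bound for every nonempty $I$, which follows immediately by applying your same argument to the $\card{\alphabet}$ constant vectors in $\alphabet^{\card{I}}$ with the remaining coordinates set to the extremal alphabet element.
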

Theorems~\ref{Th:DivFuncCodCap} - \ref{Th:GapLambdaBded} 
provide bounds on the tightness of the \textup{\cut{\Network,f}} upper bound for different classes of target functions. 
In particular, we show that for $\lambda$-exponential 
(respectively, $\lambda$-bounded) target functions, the computing capacity $\codCap{\Network,f}$ 
is at least a constant fraction of the \textup{\cut{\Network,f}} for any constant $\lambda$ and any network 
$\Network$ (respectively, any network $\Network$ where all non-receiver nodes are sources). 
The following theorem shows by means of an example target function $f$ and a network 
$\Network$, that the \textup{\cut{\Network,f}} upper bound cannot always 
approximate the computing capacity $\codCap{\Network,f}$ up to a constant fraction. 
Similar results are known in network coding as well as in multicommodity flow. 
It was shown in~\cite{LeightonRao99} that
when $s$ source nodes communicate
independently with the same number of receiver nodes, 
there exist networks whose maximum multicommodity flow is 
$O(1/ \log s)$ times a well known cut-based upper bound. 
It was shown in~\cite{Harvey} that 
with network coding
there exist networks whose maximum throughput is $O(1/ \log s)$ times the best known cut
bound (i.e. "meagerness"). 
Whereas these results do not hold for single-receiver networks 
(by Theorem~\ref{Th:AchievabilityMinCutIdentity}),
the following similar bound holds for network computing in single-receiver networks.
The proof of Theorem~\ref{Th:UnboundedGap} uses Lemma~\ref{lem:capacityLimitOf_G_N_L} which is presented 
in the Appendix. 
\begin{theorem}\label{Th:UnboundedGap} \label{Th:HarveyGap}
For any $\epsilon > 0$, there exist networks $\Network$ such that for the arithmetic sum target function 
$f$,
$$
\codCap{\Network,f} = O \! \left( \frac{1}{(\log s )^{1-\epsilon}} \right)  \cdot \mbox{\textup{\cut{\Network,f}}}.
$$
\end{theorem}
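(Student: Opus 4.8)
The plan is to construct an explicit family of networks, parameterized by two integers $N$ and $L$, on which the min-cut value stays bounded below by an absolute constant while the computing capacity is driven down to $O((\log s)^{-(1-\epsilon)})$; since $\codCap{\Network,f}\le\mbox{\cut{\Network,f}}$ always (Theorem~\ref{Th:upperBoundOnCodingCapacity}), the whole argument reduces to (i) a uniform constant lower bound on the min-cut and (ii) an upper bound on the capacity, the latter being the content of Lemma~\ref{lem:capacityLimitOf_G_N_L}. I would then choose $N$ and $L$ so that $s\to\infty$ at the rate that turns the capacity bound into the stated gap.

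For the construction $G_{N,L}$ I would use a combinatorial incidence design with binary alphabet $\alphabet=\{0,1\}$: take $N$ relay nodes $w_1,\dots,w_N$, each joined to the receiver $\receiver$ by a single edge, and create one source $\source_T$ for every $L$-subset $T\subseteq\{1,\dots,N\}$, joining $\source_T$ to exactly the relays $\{w_i:i\in T\}$. Then $s=\binom{N}{L}$, and for the arithmetic sum the footprint size of any cut $C$ is $\maxRangeB{C,f}=\card{I_C}+1$, so that $\mbox{\cut{\Network,f}}=\min_C \card{C}/\log_2(\card{I_C}+1)$. The key min-cut estimate is that a cut severing the receiver-edges of a set $J$ of relays disconnects exactly the sources $\source_T$ with $T\subseteq J$, i.e. $\binom{\card{J}}{L}$ of them, giving ratio $\card{J}/\log_2(\binom{\card{J}}{L}+1)$, which is $\Theta(1)$ and is minimized near $\card{J}\approx 2L$ (where both numerator and denominator are $\approx 2L$). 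I would then show that no cut can do asymptotically better — that separating $d$ sources always costs $\Omega(\log d)$ edges, since an added source-to-relay edge belongs to a unique source and so separates at most one extra source — thereby establishing $\mbox{\cut{\Network,f}}\ge c_0$ for an absolute constant $c_0>0$, uniformly in $N,L$.

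For the upper bound I would invoke Lemma~\ref{lem:capacityLimitOf_G_N_L}, which I expect to bound $\codCap{G_{N,L},f}$ by a quantity decaying like $1/L$ (up to lower-order factors): the $L$-fold overlap in the design prevents the relays from carrying disjoint partial sums, so the receiver cannot exploit its many in-edges. Granting this, the gap follows by balancing parameters. Writing $\log_2 s=\log_2\binom{N}{L}=\Theta(L\log N)$ for $L\ll N$, I would set $L=\Theta((\log N)^{(1-\epsilon)/\epsilon})$ and let $N\to\infty$; then $\log s=\Theta((\log N)^{1/\epsilon})$, so $1/L=\Theta((\log s)^{-(1-\epsilon)})$, and dividing the capacity bound by $c_0$ gives $\codCap{\Network,f}=O((\log s)^{-(1-\epsilon)})\cdot\mbox{\cut{\Network,f}}$. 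This balancing is exactly where the exponent $1-\epsilon$ (rather than $1$) is forced: pushing $L$ larger to sharpen the gap shrinks $N$ relative to $s$ and weakens the relation $\log s\approx L\log N$.

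The main obstacle is the capacity upper bound packaged in Lemma~\ref{lem:capacityLimitOf_G_N_L}. Unlike the identity target function, where max-flow/min-cut makes the cut bound tight (Theorem~\ref{Th:AchievabilityMinCutIdentity}), and unlike multi-edge trees where the divide-and-conquer scheme is optimal (Theorem~\ref{Th:codCapTree}), here one must prove a genuine impossibility: no code — linear or not — can compute the sum at rate more than about $1/L$, even though the min-cut is $\Theta(1)$. This cannot be a pure flow argument (single-receiver identity networks exhibit no such gap) and instead requires an entropy/counting argument tailored to the overlapping incidence structure, which is why it is isolated as a separate lemma. A secondary but still delicate point is the uniform constant lower bound on the min-cut over \emph{all} cuts, for which the crux is showing that mixing source-edges into a cut never beats the structured relay-edge cuts by more than a constant factor.
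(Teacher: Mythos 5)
There is a genuine gap, and it sits exactly where you flag ``the main obstacle'': the capacity upper bound for your incidence network is never established, and the lemma you lean on cannot supply it. Lemma~\ref{lem:capacityLimitOf_G_N_L} is a statement about the paper's specific network $\Network_{\mbox{\tiny{$M$,$L$}}}$ --- $M$ sources each having one direct edge to $\receiver$ and $L$ parallel edges into a single common relay $\source_0$, which has $L$ parallel edges to $\receiver$ --- and what it proves is that $\codCap{\Network_{\mbox{\tiny{$M$,$L$}}},f}\to 1$ as $M\to\infty$ when $L=o(\log M)$, i.e.\ a capacity that stays \emph{bounded}, not one that decays like $1/L$. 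Its proof (via Lemmas~\ref{Lemma1}--\ref{Lemma2}) is a counting argument tied to that topology: the $M$ direct edges pin down large fibers $g_i^{-1}(z_i)$, and the sum-set of a product of large subsets of $\{0,1\}^k$ is shown to be too big to squeeze through the $L$-edge bottleneck at rate above $1+o(1)$. Nothing in it transfers to your design, where sources are $L$-subsets of $N$ relays and the relevant question is whether the heavily overlapping relay views force the rate down to $O(1/L)$ despite a min-cut of $\Theta(1)$. That impossibility claim is the entire content of the theorem in your architecture, and you give no argument for it; I do not see one, and it is not obviously true (the min-cut bound at the critical cut of $\approx 2L$ relay edges permits rate $\approx 1$, and ruling out clever aggregation across overlapping relays would need a bespoke entropy/counting argument you have not supplied). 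A secondary, fixable issue is the uniform constant lower bound on your min-cut over mixed cuts containing source out-edges, which is only sketched.

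For contrast, the paper runs the construction the other way around: it keeps the computing capacity at $\Theta(1)$ (that is what Lemma~\ref{lem:capacityLimitOf_G_N_L} actually delivers) and makes the min-cut \emph{grow}. In $\Network_{\mbox{\tiny{$M$,$L$}}}$ a cut separating $m$ sources must contain their $m$ direct edges plus $L$ further edges, so $\mbox{\cut{\Network_{\mbox{\tiny{$M$,$L$}}},f}}=\min_{1\le m\le M}(L+m)/\log_2(m+1)$; choosing $L=\lceil(\log M)^{1-\epsilon/2}\rceil$ makes this $\Omega((\log M)^{1-\epsilon})$ while the capacity remains $O(1)$, which yields the stated gap with $s=\Theta(M)$. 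If you want to salvage your route, you must prove the $O(1/L)$ capacity bound for the incidence network from scratch; you cannot inherit it from the paper's lemma.
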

\begin{proof}
Note that for the network $\Network_{\mbox{\tiny{$M$,$L$}}}$ and the arithmetic sum target function $f$, %
\begin{align*}
\mbox{\cut{\Network_{\mbox{\tiny{$M$,$L$}}},f}} 
  &= \underset{C \in \cuts{\Network_{\mbox{\tiny{$M$,$L$}}}}}{\min} \ \frac{\card{C}}{\log_{2}\left(\card{I_C} + 1\right)} 
     & \Comment{\eqref{Eq:min-cutSum}}. 
\end{align*}
Let $m$ be the number of sources disconnected from the receiver $\receiver$ by a cut $C$ in the network $\Network_{\mbox{\tiny{$M$,$L$}}}$. 
For each such source $\source$, 
the cut $C$ must contain the edge $(\source, \receiver)$ as well as either the $L$ parallel 
edges $(\source, \source_0)$ or the $L$ parallel edges $(\source_0, \receiver)$. 
Thus, 
\begin{align}
\label{Eq:MinCutNML}
\mbox{\cut{\Network_{\mbox{\tiny{$M$,$L$}}},f}}   
&= \min_{1 \leq m \leq M} \left\{ \frac{L + m}{\log_2(m+1)} \right\}.
\end{align}
Let $m^{*}$ attain the minimum in \eqref{Eq:MinCutNML}
and define $c^{*}$= \cut{\Network_{\mbox{\tiny{$M$,$L$}}},f}.
Then,
\begin{align}
c^{*}/ \ln 2 
&\ge \min_{1 \le m \le M} \left\{ \frac{m+1}{\ln (m+1)} \right\} 
\ge \min_{x \ge 2} \left\{ \frac{x}{\ln x} \right\} 
> \min_{x \ge 2} \left\{ \frac{x}{x-1}\right\} 
> 1 \nonumber \\ 
%
\nonumber
L &= c^{*} \log_{2}\left(m^{*}+1\right) - m^{*} & \Comment{\eqref{Eq:MinCutNML}}\\
   &\le c^{*} \log_{2}\left( \frac{c^{*}}{\ln 2} \right) -  \left(\frac{c^{*}}{\ln 2} - 1\right) \label{Eq:SecIneq}
\NOPROCESS{
   \\
   & \le c^{*} \log_{2}\left( \frac{c^{*}}{\ln 2} \right) & \Comment{\eqref{eq:100},\eqref{Eq:SecIneq}} \label{eq:101}}
\end{align}
where \eqref{Eq:SecIneq} follows since the function $c^{*}\log_{2}\left(x+1\right) - x$ 
attains its maximum value over $(0, \infty)$ at $x = (c^{*} / \ln 2) - 1$. 
Let us choose $L = \lceil (\log M)^{1-(\epsilon/2)} \rceil $.
We have
\begin{align} 
L & = O \! \left( \mbox{\cut{\Network_{\mbox{\tiny{$M$,$L$}}},f}}  
\log_{2}(\mbox{\cut{\Network_{\mbox{\tiny{$M$,$L$}}},f}} )  \right) & \Comment{\eqref{Eq:SecIneq}} \label{Eq:newMinCutBnd}\\
\mbox{\cut{\Network_{\mbox{\tiny{$M$,$L$}}},f}} & = \Omega((\log M)^{1-\epsilon}) \label{Eq:newMinCutBnd1}
&\Comment{\eqref{Eq:newMinCutBnd}}\\
\codCap{\Network_{M,L},f} 
 & = O(1)  & \Comment{Lemma~\ref{lem:capacityLimitOf_G_N_L}} \notag\\
 & = O \! \left( \frac{1}{(\log M)^{1-\epsilon}} \right) \cdot \mbox{\cut{\Network_{\mbox{\tiny{$M$,$L$}}},f}}  
& \Comment{\eqref{Eq:newMinCutBnd1}}.\notag
\end{align}
\end{proof}

\section{An example network} 
\label{Sec:Example}
\begin{figure}[ht]
\begin{center}
\psfrag{X}{$\source_3$}
\psfrag{Y}{$\source_1$}
\psfrag{Z}{$\source_2$}
\psfrag{T}{$\receiver$}
\scalebox{.7}{\includegraphics{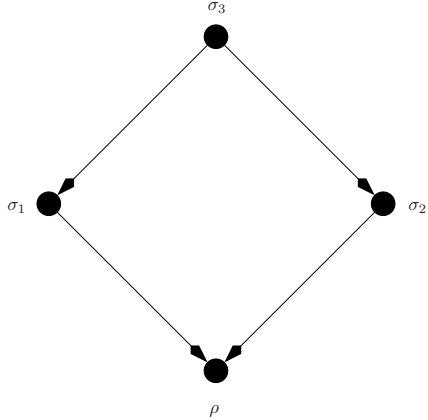}}
\end{center}
\caption{Network $\hat{\Network}$ has three binary sources, $\source_1$, $\source_2$, and $\source_3$ with $\alphabet = \{0,1\}$ 
and the receiver $\receiver$ computes the arithmetic sum of the source messages. 
} 
\label{Fig:diamondNetwork}
\end{figure}
%
%
In this section, we evaluate the computing capacity for an 
example network and a target function (which is divisible and symmetric) 
and show that the min-cut bound is not tight. 
In addition, the example demonstrates that the lower bounds
discussed in Section~\ref{Sec:LowerBounds} are not always tight and illustrates
the combinatorial nature of the computing problem.
\begin{theorem}\label{Th:DiamondNetworkCapacity}
The computing capacity of network $\hat{\Network}$ with respect to the arithmetic sum target function $f$ is
$$
\codCap{\hat{\Network},f}=\frac{2}{1+ \log_{2}3}.
$$
\end{theorem}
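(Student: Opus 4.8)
The plan is to establish the two matching bounds $\codCap{\hat{\Network},f}\le 2/(1+\log_2 3)$ and $\codCap{\hat{\Network},f}\ge 2/(1+\log_2 3)$ separately; since $1+\log_2 3=\log_2 6$, the target value is $2/\log_2 6$. Throughout I rely on the structure of $\hat{\Network}$ (Figure~\ref{Fig:diamondNetwork}): there are two interior nodes $u,w$ whose out-edges $(u,\receiver),(w,\receiver)$ are the only in-edges of $\receiver$, with $u$ having access to the messages of $\source_1,\source_3$ and $w$ to those of $\source_2,\source_3$, so that $\source_3$ is a shared ``middle'' source. Note that the ordinary min-cut here equals $1$ (e.g.\ the single edge $(\source_1,u)$ separates only $\source_1$), so the theorem asserts a capacity strictly below the bound of Theorem~\ref{Th:upperBoundOnCodingCapacity}; the entire content of the statement is therefore to beat the cut-set bound in the converse.

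For the lower bound I would exhibit an explicit family of $(k,n)$ codes obtained by time-sharing between two computation modes. Writing $\mathbf{x}_1,\mathbf{x}_2,\mathbf{x}_3\in\{0,1\}^k$ for the three message vectors and splitting the $k$ coordinates into two halves $A,B$ of size $k/2$: on the coordinates in $A$ let $u$ forward the arithmetic sums $x_1^i+x_3^i\in\{0,1,2\}$ and let $w$ forward $x_2^i\in\{0,1\}$; on the coordinates in $B$ reverse the split, with $u$ forwarding $x_1^i$ and $w$ forwarding $x_2^i+x_3^i$. The receiver recovers each coordinate of $\mathbf{x}_1+\mathbf{x}_2+\mathbf{x}_3$ as $(x_1^i+x_3^i)+x_2^i$ or $x_1^i+(x_2^i+x_3^i)$. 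Each of the two edges then carries one of $3^{k/2}\cdot 2^{k/2}=6^{k/2}$ distinct messages, which fits into $n=\lceil \tfrac{k}{2}\log_2 6\rceil$ binary symbols; letting $k\to\infty$ through even values gives achievable rates $k/n\to 2/\log_2 6$. This is the refined balancing underlying Theorem~\ref{Th:LowerBndGeneral}, and the fact that $\tfrac{k}{2}\log_2 6$ is never an integer is exactly why the capacity is an irrational supremum rather than an attained value.

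The converse is the crux. Fix any $(k,n)$ solution; the vectors crossing the cut $C=\{(u,\receiver),(w,\receiver)\}$ are $g_u(\mathbf{x}_1,\mathbf{x}_3)$ and $g_w(\mathbf{x}_2,\mathbf{x}_3)$, and the pair $(g_u,g_w)$ must determine $\mathbf{x}_1+\mathbf{x}_2+\mathbf{x}_3$. I would first record two structural facts. (a) For each fixed $\mathbf{x}_3$ the map $\mathbf{x}_1\mapsto g_u(\mathbf{x}_1,\mathbf{x}_3)$ is injective (otherwise, fixing any common $\mathbf{x}_2$ makes two inputs of different sum produce the same pair), and symmetrically for $g_w$; hence for each fixed $\mathbf{x}_3$ the realized pairs already contain a full $2^k\times 2^k$ grid. (b) Inputs with different sum vectors must produce different pairs. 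The goal is then to show that the number of distinct realized pairs is at least $6^k$: since the two binary edges can realize at most $2^{2n}$ pairs, this yields $2^{2n}\ge 6^k$, i.e.\ $k/n\le 2/\log_2 6$, matching the lower bound.

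To prove the ``$\ge 6^k$ distinct pairs'' bound I would use the fooling set $T$ consisting of all inputs whose every coordinate triple $(x_1^i,x_2^i,x_3^i)$ satisfies $x_1^i x_2^i=0$ (never both $1$); there are $6$ admissible triples per coordinate, so $\card{T}=6^k$, and it suffices to show that no two distinct members of $T$ collide. If two colliding inputs had equal $\mathbf{x}_3$-components, then (a) forces equality of $\mathbf{x}_1$ and then, via the equal sum vectors from (b), of $\mathbf{x}_2$ --- a contradiction; so the only surviving case is a collision between inputs whose $\mathbf{x}_3$-components differ, which (a) does not by itself exclude. Handling this case is the main obstacle: at each coordinate where the $\mathbf{x}_3$-components differ, the equal-sum constraint together with the restriction defining $T$ pins the local triples down to very specific forms, and I expect to convert this rigidity into a contradiction by an exchange/sum-consistency argument. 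A robust alternative is an induction on $k$ that peels off one coordinate: fixing the first coordinate triple yields a valid $(k-1)$-instance solution (hence $\ge 6^{k-1}$ distinct pairs by hypothesis); the eight first-coordinate triples split, according to their coordinate-sum, into classes whose pair-sets are mutually disjoint by (b); and one must show that each of the two middle classes contributes a factor $2$, so that the multiplicities $1+2+2+1=6$ give $N(k)\ge 6\,N(k-1)$. Either way, once the count $6^k$ is established it combines with the achievability scheme to give $\codCap{\hat{\Network},f}=2/(1+\log_2 3)$.
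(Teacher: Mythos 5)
Your achievability argument is exactly the paper's: time-share the two halves of the block so that the edge out of $\source_1$'s side carries $w^{(1)}_i+w^{(3)}_i$ on the first $k/2$ coordinates and $w^{(1)}_i$ on the rest, symmetrically for $\source_2$, giving $6^{k/2}$ values per edge and rate approaching $2/\log_2 6$. That half is fine.

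The converse has a genuine gap, and it is not a repairable detail: the central claim of your primary route is false. You assert that the $6^k$ inputs in the fooling set $T$ (all coordinate triples with $x_1^i x_2^i=0$) produce pairwise distinct edge pairs $(z_1,z_2)$. The capacity-achieving code above refutes this. Take $k=2$: the input with first-coordinate triple $(x_1^1,x_2^1,x_3^1)=(0,0,1)$ and the input with $(1,0,0)$ (all other coordinates zero) both lie in $T$, both have sum vector $(1,0)$, and both yield $y^{(1)}=(1,0)$, $y^{(2)}=(0,0)$, hence identical $(z_1,z_2)$. Quantitatively, on the fiber of the all-ones sum vector $p=(1,\dots,1)$ your set $T$ contains $3^k$ inputs while the optimal code realizes only $2^k$ distinct pairs there; so collisions between members of $T$ with different $w^{(3)}$ and equal sum are not merely unexcluded by correctness --- they are forced in any code meeting the bound. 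No ``exchange/sum-consistency'' argument can produce the contradiction you hope for, and your inductive fallback leaves exactly the contested step (``each middle class contributes a factor $2$'') unproved.

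The paper's converse avoids this trap by never comparing inputs with different third messages. It partitions the realized pairs into the disjoint sets $A_p$ indexed by the sum vector $p\in\{0,1,2,3\}^k$ and lower-bounds each $\card{A_p}$ using only your fact (a): fix one representative $\tilde w^{(3)}$ determined by $p$, note that the set of pairs $(w^{(1)},w^{(2)})$ with $w^{(1)}+w^{(2)}+\tilde w^{(3)}=p$ has size $2^{k-j}$ where $j=\card{\{i:p_i\in\{0,3\}\}}$, and use that for fixed $w^{(3)}$ the map $(w^{(1)},w^{(2)})\mapsto(z_1,z_2)$ is injective. Summing over the $\binom{k}{j}2^k$ sum vectors with parameter $j$ gives
\begin{equation*}
4^n \;\ge\; \sum_{p}\card{A_p} \;\ge\; 2^k\sum_{j=0}^{k}\binom{k}{j}2^{k-j} \;=\; 6^k,
\end{equation*}
hence $k/n\le 2/(1+\log_2 3)$. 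You would need to replace your fooling-set step with this per-fiber count (or an equivalent) to complete the proof.
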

\begin{proof}
For any $(k, n)$ solution for computing $f$,
let $w^{(1)}, w^{(2)}, w^{(3)}\in\{0,1\}^k$
denote the message vectors generated by sources
$\source_1$, $\source_2$, $\source_3$, respectively,
and let $z_1, z_2 \in \{0,1\}^n$ be the
vectors carried by edges
$(\source_1,\receiver)$ and $(\source_2,\receiver)$,
respectively.

Consider any positive integers $k, n$ such that $k$ is even and 
\begin{equation}
\label{Eq:ProposedRate}
\frac{k}{n} \le \frac{2}{1+ \log_{2}3} .
\end{equation}
Then we have 
\begin{equation}
\label{Eq:Mapping}
2^{n} \geq 3^{k/2}  2^{k/2} .
\end{equation}
We will describe
a $(k, n)$ network code for computing $f$ in the network $\hat{\Network}$.
Define vectors $y^{(1)}, y^{(2)} \in\{0,1\}^k$ by:
\begin{align*}
y^{(1)}_i &= \left\{
 \begin{array}{cc}
   \VecComp{w^{(1)}}{i} + \VecComp{w^{(3)}}{i} & \mbox{ if } 1 \le i \le k/2\\
   \VecComp{w^{(1)}}{i}                        & \mbox{ if } k/2 \le i \le k
 \end{array}
\right. \\
y^{(2)}_i &= \left\{
 \begin{array}{cc}
   \VecComp{w^{(2)}}{i}                        & \mbox{ if } 1 \le i \le k/2\\
   \VecComp{w^{(2)}}{i} + \VecComp{w^{(3)}}{i} & \mbox{ if } k/2 \le i \le k .
 \end{array}
\right. 
\end{align*}
The first $k/2$ components of $y^{(1)}$ can take on the values $0,1,2$, and
the last  $k/2$ components can take on the values $0,1$,
so there are a total of $3^{k/2}2^{k/2}$ possible values for $y^{(1)}$,
and similarly for $y^{(2)}$. 
From \eqref{Eq:Mapping}, 
there exists a mapping that assigns unique values to $z_1$
for each different possible value of $y^{(1)}$,
and similarly for $z_2$ and $y^{(2)}$.
This induces a solution for $\hat{\Network}$ as summarized below.

The source $\source_3$ sends its full message vector $w^{(3)}$ ($k < n$) to each of the two nodes
it is connected to.
Source $\source_1$ 
(respectively, $\source_2$)
computes the vector $y^{(1)}$ 
(respectively, $y^{(2)}$), 
then computes the vector $z_1$ 
(respectively, $z_2$), 
and finally sends $z_1$ 
(respectively, $z_2$) on its out-edge. 
The receiver $\receiver$ determines $y^{(1)}$ and $y^{(2)}$ from $z_1$ and $z_2$, respectively,
and then computes $y^{(1)} + y^{(2)}$,
whose $i$-th component is 
$\VecComp{w^{(1)}}{i} + \VecComp{w^{(2)}}{i} + \VecComp{w^{(3)}}{i}$, i.e., 
the arithmetic sum target function $f$. 
The above solution achieves a computing rate of $k/n$. 
From \eqref{Eq:ProposedRate}, 
it follows that 
\begin{equation}
\label{Eq:ExampleLB}
\codCap{\hat{\Network},f} \ge \frac{2}{1+ \log_{2}3} .
\end{equation}
%

We now prove a matching upper bound on the computing capacity 
$\codCap{\hat{\Network},f}$. 
Consider any $(k, n)$ solution for computing the arithmetic sum target function $f$ in network $\hat{\Network}$. 
For any $\sumVec \, \in \, \{0,1,2,3\}^{k}$, 
let  
\begin{align*}
A_p &= \{ (z_1,z_2): w^{(1)}+ w^{(2)}+ w^{(3)}= p \}.
\end{align*}
That is, each element of $A_p$ is a possible pair of input edge-vectors to the receiver when
the target function value equals $p$.

%

Let $j$ denote the number of components of $\sumVec$ that are either $0$ or $3$.
Without loss of generality, 
suppose the first $j$ components of $\sumVec$ belong to $\{0, 3\}$ and
define
$\tilde{w}^{(3)} \in \{0,1\}^k$ by
$$
\tilde{w}^{(3)}_i = \left\{
  \begin{array}{cc}
    0 & \mbox{ if } p_i \in \{0,1\} \\
    1 & \mbox{ if } p_i \in \{2,3\} .
  \end{array}
\right.
$$
Let
$$
T =  \{ (w^{(1)},w^{(2)}) \, \in \, \{0,1\}^k \times \{0,1\}^k \; : \;  w^{(1)}+ w^{(2)} + \tilde{w}^{(3)} = \sumVec  \}
$$
and notice that
\begin{align} \label{eq:seth}
\left\{ (z_1,z_2) : (w^{(1)}, w^{(2)}) \in T, w^{(3)} = \tilde{w}^{(3)} \right\} \subseteq A_p.
\end{align}
If $w^{(1)}+ w^{(2)} + \tilde{w}^{(3)}=p$,
then:
\begin{itemize}
\itemsep-0.3em
\item [(i)] $p_i - \tilde{w}^{(3)}_i =0$ implies $w^{(1)}_i = w^{(2)}_i = 0$;
\item [(ii)]  $p_i - \tilde{w}^{(3)}_i =2$ implies $w^{(1)}_i = w^{(2)}_i = 1$;
\item [(iii)] $p_i - \tilde{w}^{(3)}_i =1$ implies $(w^{(1)}_i,w^{(2)}_i) = (0,1)$ or $(1,0)$.
\end{itemize}
Thus,
the elements of $T$ consist of $k$-bit vector pairs $(w^{(1)}, w^{(2)})$
whose first $j$ components are fixed and equal
(i.e., both are $0$ when $p_i=0$ and both are $1$ when $p_i=3$),
and whose remaining $k-j$ components can each be chosen from two possibilities
(i.e., either $(0,1)$ or $(1,0)$,  when $p_i\in\{1,2\}$).
This observation implies that

\begin{align} 
\card{T} = 2^{k-j}.
\label{eq:8}
\end{align}
Notice that if only $w^{(1)}$ changes,
then the sum $w^{(1)}+ w^{(2)}+ w^{(3)}$ changes,
and so $z_1$ must change 
(since $z_2$ is not a function of $w^{(1)}$)
in order for the receiver to compute the target function.
Thus, if $w^{(1)}$ changes and $w^{(3)}$ does not change,
then $z_1$ must still change, regardless of whether $w^{(2)}$ changes or not.
More generally,
if the pair $(w^{(1)},w^{(2)})$ changes,
then the pair $(z_1,z_2)$ must change.
Thus,
\begin{align} 
\card{\left\{ (z_1,z_2) : (w^{(1)}, w^{(2)}) \in T, w^{(3)} = \tilde{w}^{(3)} \right\}} 
&\ge \card{T}
\label{eq:11}
\end{align}
and therefore
\begin{align}
\card{A_p} 
&\ge \card{\left\{ (z_1,z_2) : (w^{(1)}, w^{(2)}) \in T, w^{(3)} = \tilde{w}^{(3)} \right\}} &\Comment{\eqref{eq:seth}}\notag\\
&\ge \card{T}                                                     &\Comment{\eqref{eq:11}}\notag\\
& = 2^{k-j}.                                                      &\Comment{\eqref{eq:8}}
\label{eq:12}
\end{align}

We have the following inequalities:
\begin{align}
4^{n} 
&\geq \card{\{(z_1,z_2): w^{(1)}, w^{(2)}, w^{(3)} \in \{0,1\}^k\}} \notag\\
&= \sum_{p \in \{0,1,2,3\}^k} \card{A_p} \label{eq:7}\\
&= \sum_{j=0}^k \ 
   \sum_{\substack{{p \in \{0,1,2,3\}^k} \\ {\card{ \{i : p_i \in \{0,3\} \} } = j} }}
   \card{A_p}\notag\\
& \ge \sum_{j=0}^k \ 
   \sum_{\substack{{p \in \{0,1,2,3\}^k} \\ {\card{ \{i : p_i \in \{0,3\} \} } = j} }}
   2^{k-j} & \Comment{\eqref{eq:12}}  \notag\\
&= \sum_{j=0}^k \binom{k}{j}  2^k 2^{k-j} \notag \\
&= 6^k \label{eq:1}
\end{align}
where \eqref{eq:7} follows since the $A_p$'s must be disjoint in order for the receiver 
to compute the target function.
Taking logarithms of both sides of \eqref{eq:1}, 
gives
$$
\frac{k}{n} \leq \frac{2}{1 + \log_{2}3}
$$
which holds for all $k$ and $n$,
and therefore
\begin{equation}
\label{Eq:ExampleUB}
\codCap{\hat{\Network},f} \le \frac{2}{1+ \log_{2}3}.
\end{equation}
Combining \eqref{Eq:ExampleLB} and \eqref{Eq:ExampleUB} concludes the proof.
\end{proof}
\begin{corollary}\label{Th:PositiveGapNetwork}
For the network $\hat{\Network}$ with the arithmetic sum target function $f$, 
$$
\codCap{\hat{\Network}, f} < \mbox{\textup{\cut{\hat{\Network},f}}} .
$$
\end{corollary}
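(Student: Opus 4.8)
The plan is to combine the exact capacity value from Theorem~\ref{Th:DiamondNetworkCapacity}, namely $\codCap{\hat{\Network},f} = 2/(1+\log_2 3)$, with a direct computation of \cut{\hat{\Network},f} by enumerating the cuts of $\hat{\Network}$. Since the alphabet is binary ($q=2$) and there are $\cardsources = 3$ sources, the arithmetic-sum specialization \eqref{Eq:min-cutSum} of the min-cut quantity reads
$$
\mbox{\cut{\hat{\Network},f}} = \underset{C \in \cuts{\hat{\Network}}}{\min} \ \frac{\card{C}}{\log_2\!\left(\card{I_C}+1\right)}.
$$
First I would record the four edges of $\hat{\Network}$: the two top edges $(\source_3,\source_1)$ and $(\source_3,\source_2)$, and the two bottom edges $(\source_1,\receiver)$ and $(\source_2,\receiver)$. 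For each candidate cut $C$ I would read off the index set $I_C$ of separated sources, noting that $\source_3$ is separated only when both of its edge-disjoint paths (through $\source_1$ and through $\source_2$) are blocked.

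Next I would enumerate the cuts and minimize the ratio. The key observations are: (i) a single edge separates at most one source, since $(\source_1,\receiver)$ separates only $\source_1$, $(\source_2,\receiver)$ separates only $\source_2$, and neither top edge separates any source on its own; hence every single-edge cut has $\card{C}=\card{I_C}=1$ and ratio $1/\log_2 2 = 1$; (ii) the two-edge cut $\{(\source_1,\receiver),(\source_2,\receiver)\}$ separates all three sources, so $\card{C}=2$, $\card{I_C}=3$, and the ratio is $2/\log_2 4 = 1$; and (iii) every remaining cut has ratio strictly above $1$ (for instance $\{(\source_3,\source_1),(\source_2,\receiver)\}$ separates $\{\source_2,\source_3\}$ with ratio $2/\log_2 3 > 1$, and $\{(\source_3,\source_1),(\source_3,\source_2)\}$ separates only $\source_3$ with ratio $2$). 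Minimizing over all cuts then gives $\mbox{\cut{\hat{\Network},f}}=1$.

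Finally I would compare the two values. Because $\log_2 3 > 1$, we have $1 + \log_2 3 > 2$, and therefore
$$
\codCap{\hat{\Network},f} = \frac{2}{1+\log_2 3} < 1 = \mbox{\cut{\hat{\Network},f}},
$$
which is precisely the claimed strict inequality. The only real obstacle is the bookkeeping in the cut enumeration: I must be certain that no cut attains a ratio below $1$. This follows from a short counting argument, since $\card{I_C}\le 3$ forces $\log_2(\card{I_C}+1)\le 2$ while $\card{C}\ge 1$, so a ratio strictly less than $1$ would require $\card{C}=1$ together with $\card{I_C}\ge 3$ — impossible here, as a single edge cannot separate more than one source of $\hat{\Network}$.
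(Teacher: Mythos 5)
Your proposal is correct and follows essentially the same route as the paper: the paper likewise computes $\mbox{\cut{\hat{\Network},f}}=1$ via the arithmetic-sum specialization \eqref{Eq:min-cutSum} (asserting $\maxRangeB{C,f}=\card{I_C}+1$ and leaving the cut minimization as ``it can be shown'') and then invokes Theorem~\ref{Th:DiamondNetworkCapacity} to get $2/(1+\log_2 3)<1$; you simply carry out the cut enumeration explicitly. One tiny slip in your closing sanity check: a ratio below $1$ with $\card{C}=1$ only requires $\card{I_C}\ge 2$, not $\ge 3$, but this is harmless since your observation (i) already shows every single-edge cut has $\card{I_C}=1$.
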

\begin{proof}
Consider the network $\hat{\Network}$ depicted in Figure~\ref{Fig:diamondNetwork} 
with the arithmetic sum target function $f$. 
It can be shown that
the \footprintsize  $\maxRangeB{C,f} = \card{I_{C}} + 1$ for any cut $C$,
and thus
\begin{align*}
\mbox{\textup{\cut{\hat{\Network},f}}} &= 1  \Comment{\eqref{Eq:min-cutSum}} .
\end{align*}
The result then follows immediately from Theorem~\ref{Th:DiamondNetworkCapacity}. 
\end{proof}
\begin{remark}
In light of Theorem~\ref{Th:DiamondNetworkCapacity},
we compare the various lower bounds on the computing capacity 
of the network $\hat{\Network}$ derived in 
Section~\ref{Sec:LowerBounds} with the exact computing capacity.
It can be shown that $\steinerNumber{\hat{\Network}} = 1$.
If $f$ is the arithmetic sum target function, then
\begin{align*}
\codCap{\hat{\Network},f} & \ge 1/2  & \Comment{Theorem~\ref{Th:LowerBndGeneral}} \\
\codCap{\hat{\Network},f} & \ge 1/\log_2 5 & \Comment{Theorem~\ref{Th:SymmFuncCodCap}} \\
\codCap{\hat{\Network},f} & \ge 1/2  & \Comment{Theorem~\ref{thm:divisible}}.
\end{align*}
Thus, this example demonstrates that the lower bounds
obtained in Section~\ref{Sec:LowerBounds} are not always tight and illustrates the combinatorial nature of the problem.
\end{remark}
%
%
\section{Conclusions}\label{Sec:conclusions}
We examined the problem of network computing. 
The network coding problem is a special case when the function to be computed is the identity. 
We have focused on the case when a single receiver node computes a function of the source messages and have shown that 
while for the identity function the min-cut bound is known to be tight for all networks, 
a much richer set of cases arises when computing arbitrary functions, 
as the min-cut bound can range from being tight to arbitrarily loose. 
One key contribution of the paper is to show the theoretical breadth of the considered topic, 
which we hope will lead to further research. 
This work identifies target functions (most notably, the arithmetic sum function) for which the min-cut bound is not always tight 
(even up to a constant factor) and future work includes deriving more sophisticated bounds for these scenarios. 
Extensions to computing with multiple receiver nodes, 
each computing a (possibly different) function of the source messages, are of interest.

\clearpage
\section{Appendix}
%
%
%
Define the function 
\begin{align*}
Q \ & : \ \displaystyle \prod_{i=1}^{M} \{0,1\}^k \longrightarrow \{0,1,\ldots,M\}^k \nonumber
\end{align*}
as follows. 
For every $a = (a^{(1)},a^{(2)},\ldots,a^{(M)})$ such that each $a^{(i)} \in \{0,1\}^k$, 
\begin{align} \label{Eq:sumDefinition}
 \VecComp{\sumset{a}}{j} & = \sum_{i=1}^{M} \VecComp{a^{(i)}}{j} 
\quad \mbox{for every $j \in \left\{1,2,\ldots,k\right\}$}.
\end{align}
We extend $Q$ for $X \subseteq \displaystyle \prod_{i=1}^{M} \{0,1\}^k$ 
by defining $\sumset{X} = \{\sumset{a} \ : \ a \in X\}$.

We now present Lemma~\ref{lem:capacityLimitOf_G_N_L}. The proof uses Lemma~\ref{Lemma:CombinatorialBound}, which is presented thereafter. 
%
We define the following function which is used in the next lemma. 
Let
\begin{equation}
\label{Eq:CardConst}
\cardConst{x} = \entropy^{-1}\left( \frac{1}{2}\left(1-\frac{1}{x}\right) \right) \bigcap \left[0, \frac{1}{2}\right]
\ \ \ \ \mbox{for } x \ge 1
\end{equation}
where $\entropy^{-1}$ 
denotes the inverse of the
binary entropy function $\entropy(x) = -x\log_2 x -(1-x)\log_2(1-x)$.
Note that $\cardConst{x}$ is an increasing function of $x$.
\begin{lemma} \label{lem:capacityLimitOf_G_N_L}
If 
$
\displaystyle\lim_{M \rightarrow \infty} \frac{L}{\log_2 M} = 0,
$
then
$
\displaystyle\lim_{M \rightarrow \infty} \codCap{\Network_{\mbox{\tiny{$M$,$L$}}},f} = 1.
$
\end{lemma}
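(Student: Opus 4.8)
The plan is to prove the claim by sandwiching: I would show that $\liminf_{M\to\infty}\codCap{\Network_{M,L},f}\ge 1$ and $\limsup_{M\to\infty}\codCap{\Network_{M,L},f}\le 1$ whenever $L/\log_2 M\to 0$. The lower bound is immediate and requires no coding: taking $k=n$ and letting each source $\source_i$ transmit its entire message vector directly on the edge $(\source_i,\receiver)$, the receiver recovers every message and hence their arithmetic sum, so a computing rate of $1$ is achievable and $\codCap{\Network_{M,L},f}\ge 1$ for all $M,L$. Everything therefore reduces to the converse estimate $\codCap{\Network_{M,L},f}\le 1+L/\log_2(M+1)$, whose right-hand side tends to $1$ under the hypothesis.

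For the converse, fix any $(k,n)$ solution; we may assume $k>n$, since otherwise the rate is already at most $1$. The edge $(\source_i,\receiver)$ carries a vector $d_i\in\{0,1\}^n$ that is a function of $w^{(i)}$ alone, so by pigeonhole there is a value $\hat d_i$ whose preimage $D_i\subseteq\{0,1\}^k$ satisfies $\card{D_i}\ge 2^{k-n}\ge 2$. I would then restrict each source's message to lie in $D_i$, so that every one of the $M$ direct edges carries the frozen value $\hat d_i$. Under this restriction the only information reaching $\receiver$ that still varies travels through the bottleneck $\outEdges{\source_0}$, which consists of $L$ edges and hence carries at most $2^{Ln}$ distinct joint vectors. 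Since the receiver must output a distinct answer for every distinct value of the arithmetic sum, any two tuples in $\prod_{i=1}^M D_i$ with different sums must induce different vectors on $\outEdges{\source_0}$. Recalling the map $Q$ of \eqref{Eq:sumDefinition}, this gives
\[
\card{\sumset{\textstyle\prod_{i=1}^M D_i}}\;\le\;2^{Ln}.
\]

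The remaining, and decisive, step is a lower bound on the number of distinct componentwise sums produced by a product of $M$ large subsets of the cube; this is exactly the role of Lemma~\ref{Lemma:CombinatorialBound}. From $\card{D_i}\ge 2^{k-n}$ for every $i$ one wants to conclude that $\card{\sumset{\prod_i D_i}}$ grows like $(M+1)$ raised to a power proportional to $k$. The cleanest target is $\card{\sumset{\prod_i D_i}}\ge(M+1)^{k-n}$, which subcube choices of the $D_i$ attain with equality; for arbitrary sets the inverse-binary-entropy quantity $\cardConst{x}=\entropy^{-1}\!\big(\tfrac12(1-\tfrac1x)\big)\cap[0,\tfrac12]$ enters to certify that a set of density $2^{-n}$ still forces enough spread across coordinates to guarantee $\card{\sumset{\prod_i D_i}}\ge (M+1)^{c\,k}$ for a constant $c>0$ whenever $k>n$. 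Combining such a bound with the bottleneck inequality and taking logarithms yields $(k-n)\log_2(M+1)\le Ln$, i.e. $k/n\le 1+L/\log_2(M+1)$, and letting $M\to\infty$ with $L/\log_2 M\to 0$ drives the ratio to $1$; equivalently, if the capacity stayed bounded away from $1$ along some subsequence, the constant $c$ would stay bounded below and the estimate would force $L=\Omega(\log_2 M)$, contradicting the hypothesis.

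The main obstacle is precisely this combinatorial sumset estimate. The counting reduction of the previous paragraph is routine, but lower bounding $\card{\sumset{\prod_i D_i}}$ for \emph{arbitrary} large sets $D_i$ — rather than for the additively well-behaved subcube case, where the answer is transparent — is where the real difficulty lies, and is exactly why the function $\cardConst{\cdot}$ is introduced and why Lemma~\ref{Lemma:CombinatorialBound} is proved separately before being invoked here.
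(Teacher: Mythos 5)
Your proposal follows essentially the same route as the paper: rate $1$ is trivially achievable by direct transmission, pigeonholing freezes each direct edge on a preimage $D_i$ of size at least $2^{k-n}$, the $L$-edge bottleneck then gives $\card{\sumset{\prod_{i=1}^M D_i}}\le 2^{Ln}$, and Lemma~\ref{Lemma:CombinatorialBound} supplies the matching lower bound $(M+1)^{\cardConst{k/n}k}$, which forces the rate to $1$ as $M\to\infty$. One small caution: the intermediate inequality $(k-n)\log_2(M+1)\le Ln$ does not follow from the lemma as stated (it would require the stronger bound $(M+1)^{k-n}$, which the lemma does not provide), but your parenthetical fallback --- that a rate bounded away from $1$ keeps $\cardConst{k/n}$ bounded below and hence forces $L=\Omega(\log_2 M)$, contradicting the hypothesis --- is the correct way to close the argument and is exactly what the paper does via the monotonicity of $r\,\cardConst{r}$.
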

\begin{proof}
For any $M$ and $L$, 
a solution with computing rate $1$ is obtained by having each source $\source_i$ 
send its message directly to the receiver on the edge $(\source_i, \receiver)$.  
Hence $\codCap{\Network_{\mbox{\tiny{$M$,$L$}}},f} \ge 1$.
Now suppose that $\Network_{\mbox{\tiny{$M$,$L$}}}$ has a 
$(k,n)$ solution with computing rate $k / n > 1$ and for each $i \in \{1, 2,\ldots, M\}$, let
$$
g_{i} \; : \; \{0, 1\}^{k} \longrightarrow \{0, 1\}^{n}
$$
be the corresponding encoding function on the edge $\left(\source_{i}, \receiver\right)$. 
Then for any $A_1,A_2,\ldots,A_M \subseteq \{0,1\}^{k}$, we have 
\begin{equation}
\label{Eq:CondnSoln}
\left( \prod_{i=1}^{M} \card{g_i\left(A_i\right)} \right) \cdot 2^{nL} 
\ge \card{\sumset{\prod_{i=1}^{M} A_i}} .
\end{equation}
Each $A_i$ represents a set of possible message vectors of source $\source_i$. 
The left-hand side of \eqref{Eq:CondnSoln} is the maximum number of different possible instantiations of the information carried 
by the in-edges to the receiver $\receiver$ 
(i.e., $\card{g_i\left(A_i\right)}$ possible vectors on each edge $(\source_i, \receiver)$ and $2^{nL}$ 
possible vectors on the $L$ parallel edges $(\source_0, \receiver)$). 
The right-hand side of \eqref{Eq:CondnSoln} is the number of distinct sum vectors that the receiver needs to discriminate, 
using the information carried by its in-edges. 

For each $i \in \{1, 2,\ldots, M\}$, 
let $z_i \in \{0, 1\}^{n}$ be such that 
$\card{g_{i}^{-1}\left(z_i\right)} \ge 2^{k - n}$ 
and choose $A_i =  g_{i}^{-1}\left(z_i\right)$ for each $i$. 
Also, 
let $\NbinaryBlocks{U}=\displaystyle \prod_{i=1}^{M} A_i$.  
Then we have    
\begin{align}
\label{Eq:NeccCondn}
\card{\sumset{\NbinaryBlocks{U}}} \leq 2^{nL} & \Comment{$\card{g_i\left(A_i\right)} = 1$ and \eqref{Eq:CondnSoln}}.
\end{align}
Thus \eqref{Eq:NeccCondn} is a necessary condition for the existence of a $(k,n)$ solution for computing $f$ in the network $\Network_{\mbox{\tiny{$M$,$L$}}}$. 
Lemma~\ref{Lemma:CombinatorialBound} shows that%
\footnote{One can compare this lower bound to the upper bound 
$\card{\sumset{\NbinaryBlocks{U}}} \le (M+1)^{k}$ which follows from 
\eqref{Eq:sumDefinition}.} 
\begin{equation}
\label{Eq:NoOfSumsLB}
\card{\sumset{\NbinaryBlocks{U}}} \ge (M + 1)^{\cardConst{k/n}k }
\end{equation} 
where the function $\gamma$ is defined in \eqref{Eq:CardConst}. 
Combining \eqref{Eq:NeccCondn} and \eqref{Eq:NoOfSumsLB}, 
any $(k,n)$ solution for computing $f$ in the network $\Network_{\mbox{\tiny{$M$,$L$}}}$ with rate $r = k / n > 1$ must satisfy 
\begin{align}
\label{Eq:ExistenceSoln} 
r \ \cardConst{r} \  \log_2(M+1) \le  \frac{1}{n} \log_{2} \card{\sumset{\NbinaryBlocks{U}}} \le L .
\end{align}
From \eqref{Eq:ExistenceSoln}, we have 
\begin{align} 
r \ \cardConst{r} &  \le \frac{L}{\log_2(M+1)}. \label{Eq:rateEqn} 
\end{align}
%
The quantity $r\cardConst{r}$ is monotonic increasing from $0$ to $\infty$
on the interval $[1,\infty)$
and the right hand side of \eqref{Eq:rateEqn} goes to zero as $M \rightarrow \infty$.
Thus, the rate $r$ can be forced to be arbitrarily close to $1$ by making $M$ sufficiently large,
i.e. 
$\codCap{\Network_{\mbox{\tiny{$M$,$L$}}},f} \le 1$.
In summary,
$$
\lim_{M \longrightarrow \infty} \codCap{\Network_{\mbox{\tiny{$M$,$L$}}},f} = 1.
$$
%

\NOPROCESS{
Let $\beta>1$ and let 
$$
M_{\beta} = 2 ^{L/(\beta\cardConst{\beta})} - 1 . 
$$
Since $r\cardConst{r}$ is strictly increasing in $r$, 
we have that if $M > M_\beta$ and $r \ge \beta$, 
then
$$
r \cardConst{r} \log_2(M+1) > L.
$$
If $M > M_{\beta}$, 
then from \eqref{Eq:ExistenceSoln} no solution for computing $f$ in 
$\Network_{\mbox{\tiny{$M$,$L$}}}$ can have rate $r \ge \beta$.
In other words, if $M > M_{\epsilon}$, then  
$$
1 \le \codCap{\Network_{\mbox{\tiny{$M$,$L$}}},f} \le \beta .
$$
Since $\beta$ is arbitrary, this implies that   
$$
\lim_{M \longrightarrow \infty} \codCap{\Network_{\mbox{\tiny{$M$,$L$}}},f} = 1.
$$
}
\end{proof}
%
%
\begin{lemma} \label{Lemma:CombinatorialBound}
Let $k, n, M$ be positive integers such that $k > n$. 
For each $i \in \{1,2,\ldots,M\}$,  
let $A_i \subseteq \{0, 1\}^{k}$ 
be such that $\card{A_i} \ge 2^{k-n}$
and let $\NbinaryBlocks{U} = \displaystyle \prod_{i=1}^{M} A_i$. 
Then, 
$$
\card{\sumset{\NbinaryBlocks{U}}} \ge (M + 1)^{\cardConst{k/n}k } .
$$
\end{lemma}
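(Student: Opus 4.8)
The plan is to bound $\card{\sumset{\NbinaryBlocks U}}$ from below by a Shannon entropy and then to reduce the estimate to a one‑source‑at‑a‑time growth statement. First I would draw $a^{(i)}$ independently and uniformly from $A_i$ and set $P=\sum_{i=1}^{M}a^{(i)}$; since $P$ is supported on $\sumset{\NbinaryBlocks U}$, we get $\log_2\card{\sumset{\NbinaryBlocks U}}\ge H(P)$, where $H$ denotes Shannon entropy. Writing $q_{ij}=\Pr[a^{(i)}_j=1]$, subadditivity gives $\sum_j\entropy(q_{ij})\ge H(a^{(i)})=\log_2\card{A_i}\ge k-n$, and hence $\sum_{i,j}\entropy(q_{ij})\ge M(k-n)$. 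It therefore suffices to prove $H(P)\ge\cardConst{k/n}\,k\,\log_2(M+1)$, which is where the function $\cardConst{\cdot}=\entropy^{-1}\!\big(\tfrac12(1-1/\cdot)\big)$ must enter.

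Each coordinate sum $P_j=\sum_i a^{(i)}_j$ is a convolution of independent Bernoulli laws, hence log‑concave on $\{0,\dots,M\}$, and for integer log‑concave variables the entropy is bounded below by $\tfrac12\log_2(\mathrm{Var}\,P_j)-O(1)$; this is the structural reason a factor $\tfrac12$ appears inside $\cardConst{\cdot}$. If the coordinates were independent, feeding the budget above into this per‑coordinate estimate would already give $\sum_j H(P_j)\ge\tfrac12(k-n)\log_2(M+1)\ge\cardConst{k/n}\,k\,\log_2(M+1)$, using $\cardConst{k/n}\le\tfrac12(1-n/k)$. The difficulty is that $\sum_j H(P_j)$ only \emph{upper}‑bounds $H(P)$: the coordinates of $P$ are coupled, since a single $a^{(i)}$ feeds every coordinate at once (e.g.\ $A_i=\{0^k,1^k\}$ makes all coordinates perfectly correlated). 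Controlling this coupling is the main obstacle, and it is exactly what rules out a naive "product over coordinates'' bound, as well as the tempting shortcut of finding one coordinate set shattered by all the $A_i$ (two opposite coordinate subcubes already show no common large shattered set need exist).

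To get around the coupling I would run an induction on the number of sources, replacing the joint estimate by a sequence of single convolutions. With $U_t=\prod_{i\le t}A_i$ one has the Minkowski identity $\sumset{U_t}=\sumset{U_{t-1}}+A_t$, so it is enough to prove the multiplicative growth bound $\card{\sumset{U_{t-1}}+A_t}\ge\card{\sumset{U_{t-1}}}\cdot\big(\tfrac{t+1}{t}\big)^{\cardConst{k/n}k}$ for each $t$; telescoping these (with base case $\card{A_1}\ge2^{k-n}\ge2^{\cardConst{k/n}k}$, valid because $\cardConst{k/n}\le 1-n/k$) yields $\card{\sumset{\NbinaryBlocks U}}\ge(M+1)^{\cardConst{k/n}k}$. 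I expect this single‑step growth inequality to be the hard part: one adds a binary set $A_t$ of size at least $2^{k-n}$ to a set $X\subseteq\{0,\dots,t-1\}^k$ in the grid and must show the extra $0/1$ freedom supplied by $A_t$ cannot be absorbed. I would attempt it by a compression / Kruskal–Katona argument on the level sets of $X$ combined with the Sauer–Shelah bound $\card{A_t}>\sum_{j<d}\binom{k}{j}$ at $d=\cardConst{k/n}k$ — the threshold at which $\sum_{j<d}\binom{k}{j}\le 2^{(k-n)/2}=\sqrt{\card{A_t}}$, once more producing the $\tfrac12$ — so as to exhibit a $d$‑dimensional direction along which $A_t$ is full and hence enlarges $X$ by the claimed factor.
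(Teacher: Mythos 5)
Your argument has a genuine gap: everything is made to rest on the single-step growth inequality $\card{\sumset{U_{t-1}}+A_t}\ \ge\ \card{\sumset{U_{t-1}}}\cdot\bigl(\tfrac{t+1}{t}\bigr)^{\cardConst{k/n}k}$, which you correctly identify as the hard part but do not prove, and the mechanism you sketch for it does not work. Sauer--Shelah applied to $A_t$ with threshold $d=\cardConst{k/n}k$ gives you a $d$-set of coordinates that $A_t$ \emph{shatters}, i.e.\ for each pattern $b\in\{0,1\}^d$ some $a^b\in A_t$ restricts to $b$ on those coordinates; it does \emph{not} give a full $d$-dimensional subcube (``a direction along which $A_t$ is full''), because the remaining $k-d$ coordinates of $a^b$ vary with $b$. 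Even granting a genuine subcube, the translates $X+a^b$ are shifted differently in the other $k-d$ coordinates, so their union need not exceed $\card{X}$ by the factor $((t+1)/t)^{d}$ for an arbitrary $X\subseteq\{0,\dots,t-1\}^k$; some structural control on $X$ (e.g.\ that it is a downset) is needed, and your induction does not supply it. The entropy preamble is a dead end you yourself close off (since $H(P)\le\sum_j H(P_j)$, the per-coordinate budget cannot be fed back into $H(P)$), so nothing in the write-up actually establishes the bound.

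For comparison, the paper avoids any per-step growth claim. It first applies a global down-shift compression $\phi^{(j)}$, coordinate by coordinate, to every $A_i$ simultaneously, producing sets $B_i$ with $\card{B_i}=\card{A_i}$ that are invariant under all the shift maps, and proves separately (by a counting argument on the fibers over the last $k-1$ coordinates) that compression can only decrease $\card{\sumset{\cdot}}$. Invariance makes $\sumset{\NbinaryBlocks{V}}$ downward closed. Then the binomial-tail bound $\sum_{j\le\lfloor\delta k\rfloor}\binom{k}{j}\le 2^{k\entropy(\delta)}=2^{(k-n)/2}<\card{B_i}$ (with $\delta=\cardConst{k/n}$) guarantees each $B_i$ contains a vector of Hamming weight at least $\delta k$; summing one such vector per source gives a point $p$ with $\sum_j\VecComp{p}{j}\ge\delta Mk$, the downset below $p$ has size $\prod_j(1+\VecComp{p}{j})\ge(M+1)^{\delta k}$ by a convexity/optimization lemma, and downward closure puts that entire downset inside $\sumset{\NbinaryBlocks{V}}$. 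If you want to salvage your telescoping route, you would essentially have to import this compression step first so that each partial sumset is a downset, at which point the direct ``one heavy point plus its downset'' count already finishes the proof without any induction on $t$.
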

\begin{proof}
The result follows from Lemmas~\ref{Lemma1} and \ref{Lemma2}.
\end{proof}
%
%

The remainder of this Appendix
is devoted to the proofs of lemmas used
in the proof of Lemma~\ref{Lemma:CombinatorialBound}. 
Before we proceed, we need to define some more notation.
For every $j \in \{1, 2,\ldots, k\}$, 
define the map
%
\begin{align*}
\zeroAt{j} \ & : \ \{0, 1, \ldots, M\}^k  \; \longrightarrow \; \{0, 1, \ldots, M\}^k  
\end{align*}
%
%
by
\begin{align}
\VecComp{\left(\zeroAt{j}(p)\right)}{i} = 
\begin{cases} \label{Eq:majFunction}
\max \left\{0, \VecComp{p}{i} -1\right\} & \mbox{if} \ i = j \\
\VecComp{p}{i} & \mbox{otherwise.} 
\end{cases}
\end{align}
That is, 
the map $\zeroAt{j}$ subtracts one from the $j$-th component of the input vector 
(as long as the result is non-negative) and leaves all the other components the same. 
For every $j \in \{1,2,\ldots,k\}$, 
define the map 
$$
\hat{\phi}^{(j)} : 2^{\{0, 1\}^k} \ \times \ \{0, 1\}^k \longrightarrow \{0, 1\}^k 
$$
by
%
\begin{align}
\label{Eq:Phi-j}
\hat{\phi}^{(j)}(A,a) & =
\begin{cases}
\zeroAt{j}(a) & \mbox{if} \ \zeroAt{j}(a) \notin A \\
a      & \mbox{otherwise}
\end{cases}
\ \ \ \ \  \forall \ A \subseteq \{0, 1\}^k \mbox{ and } a \in \{0,1\}^k.
\end{align}
Define 
$$
\phi^{(j)} : 2^{\{0, 1\}^k} \longrightarrow 2^{\{0, 1\}^k}
$$
by
\begin{equation}
\label{Eq:DefnPhi}
\phi^{(j)}(A) = \left\{ \hat{\phi}^{(j)}(A,a) \ : \ a \ \in \ A \right\}.
\end{equation}
Note that 
\begin{equation}
\label{Eq:TransformedCard}
\card{\phi^{(j)}(A)} = \card{A}.
\end{equation}
A set $A$ is said to be {\it invariant} under the map $\phi^{(j)}$ if the set is 
unchanged when $\phi^{(j)}$ is applied to it, 
in which case from \eqref{Eq:Phi-j} and \eqref{Eq:DefnPhi} we would have that for each $a \in A$,  
\begin{align} \label{Eq:conseqOfInvariance}
\zeroAt{j}(a) \in A .
\end{align}
\begin{lemma} \label{Lem:InvariantSet}
For any $A \subseteq \{0, 1\}^k$ and all integers $m$ and $t$ such that $1 \le m \le t \le k$, 
the set $\phi^{(t)}(\phi^{(t-1)}(\cdots \phi^{(1)}(A)))$ is invariant under the map $\phi^{(m)}$.
\end{lemma}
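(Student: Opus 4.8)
The plan is to reduce the statement to two structural facts about a single compression $\phi^{(j)}$ and then chain them together with a short induction. Write $B_0 = A$ and $B_i = \phi^{(i)}(B_{i-1})$ for $1 \le i \le t$, so that the set in question is exactly $B_t$. By \eqref{Eq:conseqOfInvariance}, invariance of a set $D$ under $\phi^{(m)}$ is equivalent to the downward-closure condition that $\zeroAt{m}(d) \in D$ for every $d \in D$, and I would work throughout with this reformulation, which is checked directly from \eqref{Eq:majFunction}--\eqref{Eq:DefnPhi}. I would also record the elementary observation that, because $\zeroAt{j}$ alters only coordinate $j$ (flipping a $1$ to a $0$ and fixing a $0$), the shifts $\zeroAt{j}$ and $\zeroAt{m}$ commute whenever $j \ne m$.

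The two facts I would isolate are: (1) \emph{idempotency}, that $\phi^{(m)}(D)$ is always invariant under $\phi^{(m)}$; and (2) \emph{preservation}, that if $D$ is invariant under $\phi^{(m)}$ and $j \ne m$, then $\phi^{(j)}(D)$ is again invariant under $\phi^{(m)}$. Granting these, the lemma follows quickly: fix $m \le t$ and induct on $i$ from $m$ up to $t$. At $i = m$ the set $B_m = \phi^{(m)}(B_{m-1})$ is $\phi^{(m)}$-invariant by (1); and for each $m < i \le t$ the step forms $B_i = \phi^{(i)}(B_{i-1})$ by applying a compression $\phi^{(i)}$ with $i \ne m$ to the $\phi^{(m)}$-invariant set $B_{i-1}$, so (2) keeps invariance intact all the way to $B_t$. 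Since $m \le t$ was arbitrary, $B_t$ is invariant under every $\phi^{(m)}$ with $m \le t$.

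Fact (1) is short. Any $b \in \phi^{(m)}(D)$ with $b_m = 1$ cannot have come from the first branch of \eqref{Eq:Phi-j}, since that branch outputs $\zeroAt{m}(a)$, whose $m$-th coordinate is $0$; hence $b = a \in D$ and $\zeroAt{m}(b) \in D$. As $\zeroAt{m}(b)$ has $m$-th coordinate $0$ and lies in $D$, it is fixed by $\hat{\phi}^{(m)}(D,\cdot)$ and therefore belongs to $\phi^{(m)}(D)$, giving the downward-closure condition.

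The main obstacle is Fact (2), where one must trace a preimage through two coordinates at once. I would take $b \in \phi^{(j)}(D)$ with $b_m = 1$, choose $a \in D$ with $b = \hat{\phi}^{(j)}(D,a)$, and set $a' = \zeroAt{m}(a)$, which lies in $D$ by the $\phi^{(m)}$-invariance of $D$. Writing $b' = \zeroAt{m}(b)$ and using that $\zeroAt{j}$ and $\zeroAt{m}$ commute, I would show that $\hat{\phi}^{(j)}(D, a')$ equals $b'$ in the branch that produced $b$: when $b = a$ one checks $\zeroAt{j}(a') \in D$ (from $\zeroAt{j}(a) \in D$ and $\phi^{(m)}$-invariance), so $a' = b'$ is output; when $b = \zeroAt{j}(a)$ one has $\zeroAt{j}(a') = b'$, and either $b' \notin D$ so $a'$ outputs $b'$ directly, or $b' \in D$ in which case $b'$ (having $j$-th coordinate $0$) is its own image under $\hat{\phi}^{(j)}(D,\cdot)$ and so again lies in $\phi^{(j)}(D)$. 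Splitting on the branch of \eqref{Eq:Phi-j} and on whether the relevant shifted vector lies in $D$ reduces everything to a finite check; the only real care is keeping the bookkeeping of membership in $D$ straight across the two commuting shifts.
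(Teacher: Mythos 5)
Your proposal is correct and follows essentially the same route as the paper: an induction along the sequence of compressions whose two ingredients are the idempotency fact $\phi^{(m)}(\phi^{(m)}(D))=\phi^{(m)}(D)$ and the preservation of $\phi^{(m)}$-invariance under a compression $\phi^{(j)}$ with $j\ne m$, both proved via the commutativity of $\zeroAt{j}$ and $\zeroAt{m}$ and the downward-closure reformulation \eqref{Eq:conseqOfInvariance}. The paper organizes the induction on $t$ and splits cases on the value of $\VecComp{c}{\tau}$ rather than on the branch of \eqref{Eq:Phi-j}, but the content is the same.
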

\begin{proof}
For any $A' \subseteq \{0, 1\}^{k}$, we have 
%
\begin{equation}
\label{Eq:SelfInvariant}
\phi^{(i)}(\phi^{(i)}(A')) = \phi^{(i)}(A') \quad \forall \ i \in \{1,2,\ldots,k\} .
\end{equation}

The proof of the lemma is by induction on $t$. 
For the base case $t = 1$, 
the proof is clear since $\phi^{(1)}(\phi^{(1)}(A)) = \phi^{(1)}(A)$ from 
\eqref{Eq:SelfInvariant}. 
Now suppose the lemma is
true for all $t < \tau$ (where $\tau \ge 2$). 
Now suppose $t = \tau$. 
Let $B = \phi^{(\tau-1)}(\phi^{(\tau-2)}(\cdots \phi^{(1)}(A)))$.
Since $\phi^{(\tau)}(\phi^{(\tau)}(B)) = \phi^{(\tau)}(B)$ from \eqref{Eq:SelfInvariant}, 
the lemma is true when $m = t = \tau$. 
In the following arguments, we take $m < \tau$. 
From the induction hypothesis, $B$ is invariant under the map $\phi^{(m)}$, i.e., 
\begin{equation}
\label{Eq:IndHyp}
\phi^{(m)}(B) = B .
\end{equation}
Consider any vector $c \in \phi^{(\tau)}(B)$. 
From \eqref{Eq:conseqOfInvariance}, 
we need to show that $\zeroAt{m}(c)  \in  \phi^{(\tau)}(B)$. 
We have the following cases.
\begin{align}
\label{Eq:Ind1}
\VecComp{c}{\tau} = 1 \quad : \quad  & c,  \zeroAt{\tau}(c)  \in  B 
&\Comment{$\VecComp{c}{\tau} = 1$ and $c \in \phi^{(\tau)}(B)$} \ \ \\
\label{Eq:Ind2}
& \zeroAt{m}(c) \in B &\Comment{\eqref{Eq:IndHyp} and  \eqref{Eq:Ind1}} \\
\label{Eq:Ind3}
& \zeroAt{\tau}\left(\zeroAt{m}(c)\right) = \zeroAt{m}\left(\zeroAt{\tau}(c)\right) \in B &\Comment{\eqref{Eq:IndHyp} and \eqref{Eq:Ind1}} \\
\nonumber
& \zeroAt{m}(c)  \in  \phi^{(\tau)}(B) & \Comment{\eqref{Eq:Ind2} and \eqref{Eq:Ind3}}
\end{align}
\begin{align}
\label{Eq:Ind4}
\VecComp{c}{\tau} = 0 \quad : \quad & \exists \ b \in B \mbox{ such that } \zeroAt{\tau}(b) = c 
&\Comment{$\VecComp{c}{\tau} = 0$ and $c \in \phi^{(\tau)}(B)$} \\
\label{Eq:Ind5}
&\zeroAt{m}(b) \in B &\Comment{\eqref{Eq:IndHyp} and  \eqref{Eq:Ind4}} \\
\label{Eq:Ind6}
&\zeroAt{m}\left(\zeroAt{\tau}(b)\right) = \zeroAt{\tau}\left(\zeroAt{m}(b)\right) \in \phi^{(\tau)}(B) &\Comment{\eqref{Eq:Ind5}} \\
\nonumber
&\zeroAt{m}\left(c\right) \in \phi^{(\tau)}(B) 
&\Comment{\eqref{Eq:Ind4} and \eqref{Eq:Ind6}} .
\end{align}
Thus, the lemma is true for $t = \tau$ and the induction argument is complete.
\end{proof}
Let $A_1, A_2,\ldots,A_M \subseteq \{0, 1\}^{k}$ be such that $\card{A_{i}} \ge 2^{k - n}$ for each $i$. 
Let $\NbinaryBlocks{U} = \displaystyle \prod_{i=1}^{M} A_i$ 
and extend the definition of $\phi^{(j)}$ in \eqref{Eq:DefnPhi} to products by 
$$
\phi^{(j)}(\NbinaryBlocks{U})  = \displaystyle \prod_{i=1}^{M} \phi^{(j)}(A_i).
$$
$\NbinaryBlocks{U}$ is said to be \textit{invariant under $\phi^{(j)}$} if 
$$
\phi^{(j)}(\NbinaryBlocks{U}) = \NbinaryBlocks{U} .
$$
It can be verifed that $\NbinaryBlocks{U}$ is invariant under
$\phi^{(j)}$ iff each $A_i$ is invariant under $\phi^{(j)}$. 
For each $i \in \{1,2,\ldots,M\}$, let
$$
B_i = \phi^{(k)}(\phi^{(k-1)}(\cdots \phi^{(1)}(A_i))) 
$$
and from \eqref{Eq:TransformedCard} note that 
\begin{equation}
\label{Eq:CardConservation}
\card{B_{i}} = \card{A_{i}} \ge 2^{k - n}.
\end{equation}
Let 
$$
\NbinaryBlocks{V} =  \phi^{(k)}(\phi^{(k-1)}(\cdots \phi^{(1)}(\NbinaryBlocks{U}))) =  \displaystyle \prod_{i=1}^{M} B_i 
$$
and recall the definition of the function $Q$ \eqref{Eq:sumDefinition}. 
\begin{lemma}\label{Lemma1}
$$
\card{\sumset{\NbinaryBlocks{U}}} \ge \card{\sumset{\NbinaryBlocks{V}}} .
$$
\end{lemma}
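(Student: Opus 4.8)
The plan is to reduce the inequality to a single-coordinate compression step and then iterate that step over all $k$ coordinates. Recall that $\NbinaryBlocks{V}$ is obtained from $\NbinaryBlocks{U}$ by applying $\phi^{(1)},\phi^{(2)},\ldots,\phi^{(k)}$ in succession to each factor $A_i$. So first I would set $A_i^{(0)} = A_i$ and $A_i^{(j)} = \phi^{(j)}(A_i^{(j-1)})$, and let $U_j = \prod_{i=1}^{M} A_i^{(j)}$, so that $U_0 = \NbinaryBlocks{U}$ and $U_k = \NbinaryBlocks{V}$. It then suffices to prove the single-step claim: for any sets $C_1,\ldots,C_M \subseteq \{0,1\}^k$ and any coordinate $j \in \{1,\ldots,k\}$,
$$
\card{\sumset{\textstyle\prod_{i=1}^{M} C_i}} \ \ge\ \card{\sumset{\textstyle\prod_{i=1}^{M} \phi^{(j)}(C_i)}},
$$
since applying this with $C_i = A_i^{(j-1)}$ and chaining over $j = 1,\ldots,k$ yields $\card{\sumset{\NbinaryBlocks{U}}} \ge \card{\sumset{\NbinaryBlocks{V}}}$.

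To prove the single-step claim I would fiber over the coordinates other than $j$. Write each $a \in \{0,1\}^k$ as $(\VecComp{a}{j},\bar a)$ with $\bar a \in \{0,1\}^{k-1}$ the remaining coordinates, and likewise split each sum vector $p \in \{0,\ldots,M\}^k$ as $(\VecComp{p}{j},\bar p)$. For each $i$ and each pattern $w \in \{0,1\}^{k-1}$, let $S_i(w) \subseteq \{0,1\}$ be the set of $j$-th coordinate values $b$ with $(b,w) \in C_i$. The crucial observation about $\phi^{(j)}$, which only modifies coordinate $j$, is that it acts fiberwise: it replaces $S_i(w)$ by $\{0,1\}$ when $S_i(w) = \{0,1\}$ and by $\{0\}$ otherwise, so it preserves each fiber's cardinality and leaves every $\bar a$ unchanged. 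Consequently the collection of nonempty fibers, and the number of ``free'' indices $r(\mathbf w) = \#\{i : S_i(w^{(i)}) = \{0,1\}\}$, are the same before and after compression for every tuple $\mathbf w = (w^{(1)},\ldots,w^{(M)})$.

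Next I would compute, for a fixed $\bar p$, the set $T(\bar p)$ of values $v$ with $(v,\bar p) \in \sumset{\prod_i C_i}$, so that $\card{\sumset{\prod_i C_i}} = \sum_{\bar p} \card{T(\bar p)}$. Fixing a realizable tuple $\mathbf w$ with $\sum_i w^{(i)} = \bar p$, the achievable $j$-sums are $\{\sum_i b_i : b_i \in S_i(w^{(i)})\}$; since each $S_i(w^{(i)}) \subseteq \{0,1\}$, this is the integer interval $\{\ell(\mathbf w),\ldots,\ell(\mathbf w)+r(\mathbf w)\}$ of length $r(\mathbf w)+1$, where $\ell(\mathbf w) = \#\{i : S_i(w^{(i)}) = \{1\}\}$. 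Thus $T(\bar p)$ is a union of such intervals. After compression every $\ell(\mathbf w)$ becomes $0$ while $r(\mathbf w)$ is preserved, so the corresponding $T'(\bar p)$ is a union of intervals all starting at $0$, hence $T'(\bar p) = \{0,1,\ldots,r_{\max}(\bar p)\}$ with $r_{\max}(\bar p) = \max_{\mathbf w : \sum_i w^{(i)} = \bar p} r(\mathbf w)$, giving $\card{T'(\bar p)} = r_{\max}(\bar p)+1$. On the uncompressed side $T(\bar p)$ already contains the single interval coming from a maximizing tuple, which has length $r_{\max}(\bar p)+1$, so $\card{T(\bar p)} \ge r_{\max}(\bar p)+1 = \card{T'(\bar p)}$. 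Summing over $\bar p$ then proves the single-step claim.

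I expect the main obstacle to be the per-fiber analysis establishing that the achievable $j$-sums form a contiguous integer interval whose length is invariant under $\phi^{(j)}$ — in particular verifying that compression shifts each such interval down to start at $0$ without changing its length, and handling the bookkeeping of the union over tuples $\mathbf w$ (matching realizable tuples before and after via the preserved nonempty fibers) so that the fiberwise inequality $\card{T(\bar p)} \ge \card{T'(\bar p)}$ is legitimate.
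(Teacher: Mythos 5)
Your proof is correct and follows essentially the same route as the paper's: reduce to a single-coordinate compression step, fiber the sum set over the remaining $k-1$ coordinates, and observe that the compressed fiber is an initial segment $\{0,\ldots,r_{\max}\}$ while the uncompressed fiber contains a translate of an interval of the same length (your maximizing tuple $\mathbf{w}^*$ plays exactly the role of the paper's set $R$ built from the $K_p$ ``free'' indices). The only cosmetic difference is that you track the fiber structure $S_i(w)$ explicitly and compute $\card{T'(\bar p)}$ exactly, where the paper only upper-bounds it by $K_p+1$.
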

\begin{proof}
We begin by showing that 
\begin{equation}
\label{Eq:Claim}
\card{\sumset{\NbinaryBlocks{U}}} \geq \card{\sumset{\phi^{(1)}(\NbinaryBlocks{U})}} .
\end{equation}
For every $p \in \{0, 1,\ldots, M\}^{k-1}$, 
let
\begin{eqnarray*}
\varphi(p) & = & \left\{ r \in \sumset{\NbinaryBlocks{U}} : \left( \VecComp{r}{2},\cdots,\VecComp{r}{k} \right) = p \right\}  \\
\varphi_1(p) & = & \left\{ s \in \sumset{\phi^{(1)}(\NbinaryBlocks{U})} : \left( \VecComp{s}{2},\cdots,\VecComp{s}{k} \right) = p \right\}
\end{eqnarray*}
and note that
\begin{eqnarray} \label{s}
\sumset{\NbinaryBlocks{U}} &=& \bigcup_{p \in \{0, 1,\ldots, M\}^{k-1}} \varphi(p) \\
\label{s1}
\sumset{\phi^{(1)}(\NbinaryBlocks{U})} &=& \bigcup_{p \in \{0, 1,\ldots, M\}^{k-1}} \varphi_1(p)
\end{eqnarray}
where the two unions are in fact disjoint unions. 
We show that for every $p \in  \{0, 1,\ldots, M\}^{k-1}$,
\begin{equation}
\label{SuffCond}
\card{\varphi(p)} \ge \card{\varphi_{1}(p)}
\end{equation}
which by (\ref{s}) and (\ref{s1}) implies \eqref{Eq:Claim}.

If $\card{\varphi_{1}(p)} = 0$, 
then (\ref{SuffCond}) is trivial. 
Now consider any $p \in  \{0, 1,\ldots, M\}^{k-1}$ such that $\card{\varphi_{1}(p)} \ge 1$ and let
$$
K_p = \max \left\{ i \ : (i, \VecComp{p}{1},\cdots,\VecComp{p}{k-1}) \in \varphi_{1}(p) \right\}.
$$
Then we have 
\begin{equation}
\label{Eq:UbNoOfSums}
\card{\varphi_{1}(p)} \leq K_p + 1. 
\end{equation}
Since $(K_p,\VecComp{p}{1},\cdots,\VecComp{p}{k-1}) \in \varphi_{1}(p)$, 
there exists $(a^{(1)}, a^{(2)},\ldots, a^{(M)}) \in \NbinaryBlocks{U}$ such that 
\begin{equation}
\label{Eq:a^i}
\sum_{i=1}^{M} \hat{\phi}^{(1)} \left(A_i,a^{(i)}\right) = 
(K_p,\VecComp{p}{1},\cdots,\VecComp{p}{k-1}).
\end{equation}
Then from the definition of the map $\hat{\phi}^{(1)}$ in \eqref{Eq:Phi-j},
there are $K_p$ of the $a^{(i)}$'s from amongst\\
$\{a^{(1)}, a^{(2)},\ldots, a^{(M)}\}$ such that $\VecComp{a^{(i)}}{1} = 1$ and
$\hat{\phi}^{(1)}\left(A_{i}, a^{(i)}\right) = a^{(i)}$. 
Let $I = \{i_1, i_2,\ldots, i_{K_p}\} \subseteq \{1,2,\ldots,M\}$ 
be the index set for these vectors and let $\hat{a}^{(i)} = \zeroAt{1}(a^{(i)})$ for each $i \in I$.
Then for each $i \in I$, we have  
\begin{align*}
a^{(i)} &= \left(1, \VecComp{a^{(i)}}{2},\dots,\VecComp{a^{(i)}}{k} \right) \ \in \ A_i \\
\hat{a}^{(i)} &= \left(0,\VecComp{a^{(i)}}{2},\dots,\VecComp{a^{(i)}}{k} \right) \ \in \ A_i 
        &\Comment{$\hat{\phi}^{(1)}\left(A_{i}, a^{(i)}\right) = a^{(i)}$ and  \eqref{Eq:Phi-j}} .
\end{align*}
Let 
\begin{equation}
\label{Eq:DefnR}
R = \left\{ \sum_{i=1}^{M} b^{(i)} \ : \ \begin{array}{ll} b^{(i)} \in \{a^{(i)},\hat{a}^{(i)}\}  
 &\mbox{for }  i \in I, \\ b^{(i)} = a^{(i)}  &\mbox{for }  i \notin I  \end{array}\right\} 
\subseteq \varphi(p).
\end{equation}
From \eqref{Eq:a^i} and \eqref{Eq:DefnR}, for every $r \in R$ we have
\begin{align*}
\VecComp{r}{1} &\in \left\{0,1,\ldots,\card{I}\right\}, \\
\VecComp{r}{i} &= \VecComp{p}{i}  \ \ \forall \ i \in \left\{2,3,\ldots,k\right\} 
\end{align*}
and thus
\begin{equation}
\label{Eq:CardR}
\card{R} = \card{I} + 1 = K_p + 1 .
\end{equation}
Hence, we have   
\begin{align*}
\card{\varphi(p)} &\geq \card{R}  &\Comment{\eqref{Eq:DefnR}}\\
&= K_p + 1 &\Comment{\eqref{Eq:CardR}} \\
&\geq \card{\varphi_{1}(p)} & \Comment{\eqref{Eq:UbNoOfSums}}
\end{align*}
and then from (\ref{s}) and (\ref{s1}), it follows that 
$$
\card{\sumset{\NbinaryBlocks{U}}} \ge \card{\sumset{\phi^{(1)}(\NbinaryBlocks{U})}} . 
$$
For any $A \subseteq \{0,1\}^{k}$ and any $j \in \{1,2,\ldots,k\}$, 
we know that $\card{\phi^{(j)}(A)} \subseteq \{0,1\}^{k}$. 
Thus, 
the same arguments as above can be repeated to show that 
\begin{align*}
\card{\sumset{\phi^{(1)}(\NbinaryBlocks{U})}} &\ge \card{\sumset{ \phi^{(2)}(\phi^{(1)}(\NbinaryBlocks{U}))}} \\
&\ge \card{\sumset{ \phi^{(3)}(\phi^{(2)}(\phi^{(1)}(\NbinaryBlocks{U})))}} \\
& \hspace{.5in} \vdots \\
&\ge \card{\sumset{ \phi^{(k)}(\phi^{(k-1)}(\cdots \phi^{(1)}(\NbinaryBlocks{U})))}} \\
& = \card{\sumset{\NbinaryBlocks{V}}} .
\end{align*}
\end{proof}
%
For any $s,r \in \integer^k$, 
we say that $s \le r$ if $\VecComp{s}{l} \le \VecComp{r}{l}$ 
for every $l \in \{1, 2,\ldots, k\}$. 
\NOPROCESS{ 
For a set $A$ containing elements from $\{0, 1\}^{k}$, 
we denote the maximum Hamming weight
amongst its elements by $\maxWeight{A}$. 
Recall that 
$$
\NbinaryBlocks{V} = \displaystyle \prod_{i=1}^{M} B_i =  \displaystyle \prod_{i=1}^{M} \phi^{(k)}(\phi^{(k-1)}(\cdots \phi^{(1)}(A_i))) .
$$
}
\begin{lemma} \label{Lem:majorization}
Let $p \in \sumset{\NbinaryBlocks{V}}$. 
If $q \ \in \ \{0, 1,\ldots, M\}^k$ and $q \le p$, 
then $q \in \sumset{\NbinaryBlocks{V}}$.
\end{lemma}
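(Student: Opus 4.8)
The plan is to first extract from the machinery already built a clean structural property of each $B_i$: namely that $B_i$ is a \emph{downset} in the Boolean lattice $\{0,1\}^k$ ordered coordinatewise, meaning that whenever $a \in B_i$ and $b \le a$, we also have $b \in B_i$. Once this is in hand, the statement about sum-sets reduces to an elementary coordinatewise counting argument that reshapes a given representation of $p$ into one summing to $q$.

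First I would establish the downset property. By Lemma~\ref{Lem:InvariantSet} applied with $t = k$, each $B_i = \phi^{(k)}(\phi^{(k-1)}(\cdots \phi^{(1)}(A_i)))$ is invariant under $\phi^{(m)}$ for every $m \in \{1,\ldots,k\}$. By \eqref{Eq:conseqOfInvariance}, invariance under $\phi^{(m)}$ means that $\zeroAt{m}(a) \in B_i$ for every $a \in B_i$. Since $B_i \subseteq \{0,1\}^k$, the map $\zeroAt{m}$ from \eqref{Eq:majFunction} simply resets the $m$-th coordinate of a binary vector to $0$. Hence $B_i$ is closed under zeroing out any single coordinate, and by iterating $\zeroAt{m}$ over the coordinates where $a$ and $b$ differ, $a \in B_i$ and $b \le a$ together imply $b \in B_i$. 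This is exactly the downset property.

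Next, given $p \in \sumset{\NbinaryBlocks{V}}$, I would fix a representation $p = \sum_{i=1}^M b^{(i)}$ with $b^{(i)} \in B_i$, which exists by the definition of $Q$ in \eqref{Eq:sumDefinition}. For a target $q \le p$ I would build vectors $c^{(i)} \le b^{(i)}$ whose sum is exactly $q$, working one coordinate at a time. For each coordinate $l \in \{1,\ldots,k\}$, the value $\VecComp{p}{l} = \sum_i \VecComp{b^{(i)}}{l}$ counts the indices $i$ with $\VecComp{b^{(i)}}{l} = 1$; since $0 \le \VecComp{q}{l} \le \VecComp{p}{l}$, I can select exactly $\VecComp{p}{l} - \VecComp{q}{l}$ of these indices and reset their $l$-th coordinate to $0$, leaving the chosen vectors otherwise unchanged. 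Defining $c^{(i)}$ to be $b^{(i)}$ after all these coordinatewise modifications, we get $c^{(i)} \le b^{(i)}$, so $c^{(i)} \in B_i$ by the downset property, while the $l$-th component of $\sum_i c^{(i)}$ equals $\VecComp{p}{l} - (\VecComp{p}{l} - \VecComp{q}{l}) = \VecComp{q}{l}$ for every $l$. Thus $\sum_i c^{(i)} = q$, and so $q \in \sumset{\NbinaryBlocks{V}}$.

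The only point requiring care is that the reductions in distinct coordinates can be carried out independently: zeroing the $l$-th coordinate of some of the $b^{(i)}$ affects only the $l$-th component of the sum, so choices made for different coordinates never interfere, and each is feasible precisely because $\VecComp{q}{l} \le \VecComp{p}{l}$ guarantees that enough $1$'s are available to remove. I do not expect a genuine obstacle here; the real content of the argument is recognizing that $\phi^{(j)}$-invariance of $B_i$ is exactly the downset property, after which the majorization of sum-sets follows by this simple counting step.
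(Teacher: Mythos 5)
Your proof is correct and rests on the same key ingredient as the paper's: Lemma~\ref{Lem:InvariantSet} together with \eqref{Eq:conseqOfInvariance}, i.e.\ closure of each $B_i$ under the coordinate-zeroing maps $\zeroAt{j}$. The only difference is organizational --- the paper inducts on single unit decrements $\zeroAt{i}(p)$, replacing one summand $b^{(m)}$ by $\zeroAt{i}(b^{(m)})$ at each step, whereas you first package the invariance as the statement that each $B_i$ is a downset and then modify all summands in one shot; both arguments are sound and essentially identical in content.
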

\begin{proof}
Since $q \le p$, 
it can be obtained by iteratively subtracting $1$ from the components of $p$, i.e., 
there exist $t \ge 0$ and $i_1,i_2,\ldots,i_t \in \{1,2,\ldots,k\}$ such that
$$
q = \zeroAt{i_1}\left(\zeroAt{i_2}\left(\cdots\left(\zeroAt{i_t}(p)\right) \right) \right) .
$$
Consider any $i \in \{1, 2,\ldots, k\}$. 
We show that $\zeroAt{i}(p) \in \sumset{\NbinaryBlocks{V}}$, 
which implies by induction that $q \in \sumset{\NbinaryBlocks{V}}$. 
If $\VecComp{p}{i} = 0$, then
$\zeroAt{i}(p) = p$ and we are done.
Suppose that $\VecComp{p}{i} > 0$.
Since $p \in \sumset{\NbinaryBlocks{V}}$,
there exists $b^{(j)} \in B_j$ 
for every $j \in \{1, 2,\ldots, M\}$ such that
$$
p = \sum_{j=1}^{M} b^{(j)}
$$
and $\VecComp{b^{(m)}}{i} = 1$ for some $m \in \{1, 2,\ldots,M\}$.
From Lemma~\ref{Lem:InvariantSet}, 
$\NbinaryBlocks{V}$ is invariant under $\phi^{(i)}$ and thus from 
\eqref{Eq:conseqOfInvariance}, $\zeroAt{i}(b^{(m)}) \in B_m$ and 
%
$$
\zeroAt{i}(p) = \sum_{j=1}^{m-1} b^{(j)} + \zeroAt{i}(b^{(m)}) + \!\!\sum_{j=m+1}^{M}\!\! b^{(j)} 
$$
is an element of $\sumset{\NbinaryBlocks{V}}$.
\end{proof}
The lemma below is presented in \cite{Ayaso07} without proof, as the proof is straightforward.
\begin{lemma} \label{Lem:optimization}
For all positive integers $k, n, M$, 
and $\delta \in (0,1)$, 
\begin{equation} \label{eq:minimiz}
 \min_{ \substack{ 0 \ \leq \ m_{i} \ \leq \ M, \\ 
 \sum_{i=1}^{k} m_{i} \ \geq \ \delta M k }} \ 
 \prod_{i=1}^{k}\left( 1 + m_{i} \right) \ge \left(M+1\right)^{\delta k } .
\end{equation}
\end{lemma}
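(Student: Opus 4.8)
The plan is to reduce the product inequality to a one–dimensional concavity estimate applied coordinatewise, which avoids any case analysis over the extreme points of the feasible polytope. Taking logarithms, the desired bound $\prod_{i=1}^{k}(1+m_i)\ge (M+1)^{\delta k}$ is equivalent to
\[
\sum_{i=1}^{k}\ln(1+m_i)\ \ge\ \delta k\,\ln(M+1),
\]
so it suffices to bound each summand from below by a linear function of $m_i$ and then feed in the constraint $\sum_{i}m_i\ge \delta M k$.

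First I would record the elementary fact that $x\mapsto \ln(1+x)$ is concave on $[0,\infty)$ and vanishes at $x=0$. Consequently, on the interval $[0,M]$ it lies above the chord joining $(0,0)$ to $\bigl(M,\ln(1+M)\bigr)$; that is,
\[
\ln(1+m)\ \ge\ \frac{m}{M}\,\ln(1+M)\qquad\text{for all } m\in[0,M].
\]
This single inequality is the crux of the argument, and it is precisely what makes the lemma \emph{straightforward}, as claimed. The only point requiring care is the direction of the bound: because $\ln(1+x)$ is concave with value $0$ at the origin, the chord lies \emph{below} the curve, so summing the coordinatewise estimates produces a \emph{lower} bound on the product rather than an upper bound.

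Applying this estimate to each coordinate $m_i\in[0,M]$ and summing would then give
\[
\sum_{i=1}^{k}\ln(1+m_i)\ \ge\ \frac{\ln(1+M)}{M}\sum_{i=1}^{k}m_i\ \ge\ \frac{\ln(1+M)}{M}\cdot\delta M k\ =\ \delta k\,\ln(1+M),
\]
where the second inequality uses the feasibility constraint $\sum_{i}m_i\ge \delta M k$ together with the fact that the multiplier $\ln(1+M)/M$ is positive. Exponentiating yields $\prod_{i=1}^{k}(1+m_i)\ge (M+1)^{\delta k}$ at \emph{every} feasible point, and hence at the minimizer. Since the estimate holds for all real $m_i\in[0,M]$, it applies a fortiori when the $m_i$ are restricted to integers (the minimum over the smaller integer feasible set can only be larger).

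I do not anticipate a genuine obstacle here: all the content resides in the chord inequality, after which the linear sum constraint does the remaining work in one line. The main thing to verify is simply that equality holds at $m=0$ and $m=M$, confirming that the coordinatewise bound is the tightest linear lower bound available and that the overall estimate is therefore sharp when every $m_i\in\{0,M\}$ and $\sum_i m_i=\delta M k$.
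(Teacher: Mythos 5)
Your proof is correct and complete. Note that the paper itself offers no proof of this lemma: it states only that the lemma ``is presented in \cite{Ayaso07} without proof, as the proof is straightforward,'' so there is nothing in the paper to compare against. Your chord argument --- that concavity of $\ln(1+x)$ on $[0,M]$ together with $\ln(1+0)=0$ gives $\ln(1+m)\ge \frac{m}{M}\ln(1+M)$, after which the linear constraint $\sum_i m_i\ge \delta Mk$ finishes the job in one line --- is exactly the kind of short verification the authors presumably had in mind, and it cleanly sidesteps any analysis of the extreme points of the feasible polytope. The one cosmetic remark: your closing claim of sharpness when every $m_i\in\{0,M\}$ and $\sum_i m_i=\delta Mk$ implicitly requires $\delta k$ to be an integer, but since the lemma only asserts an inequality this does not affect the validity of the proof.
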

For any $a \in \{0, 1\}^{k}$, 
let $\hammingWeight{a}$  denote the Hamming weight of $a$, i.e., 
the number of non-zero components of $a$. 
The next lemma uses the function $\gamma$ defined in \eqref{Eq:CardConst}.
\begin{lemma}\label{Lemma2}
$$
\card{\sumset{\NbinaryBlocks{V}}} \ge  (M + 1)^{\cardConst{k/n} k} .
$$
\end{lemma}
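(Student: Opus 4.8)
The plan is to exploit the fact that $\sumset{\NbinaryBlocks{V}}$ is a downward-closed subset of $\{0,1,\ldots,M\}^k$, which is exactly the content of Lemma~\ref{Lem:majorization}. If I can exhibit a single vector $p^{\ast}\in\sumset{\NbinaryBlocks{V}}$ whose components are large, then every $q\le p^{\ast}$ belongs to $\sumset{\NbinaryBlocks{V}}$, so the whole ``box'' $\prod_{j=1}^{k}\{0,1,\ldots,\VecComp{p^{\ast}}{j}\}$ is contained in it, giving
\[
\card{\sumset{\NbinaryBlocks{V}}}\ \ge\ \prod_{j=1}^{k}\bigl(1+\VecComp{p^{\ast}}{j}\bigr).
\]
It then only remains to show this product is at least $(M+1)^{\cardConst{k/n}k}$, which is precisely the form handled by Lemma~\ref{Lem:optimization}. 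Thus the entire argument reduces to producing one element of the sum set with all components in $\{0,\ldots,M\}$ and total weight $\sum_{j}\VecComp{p^{\ast}}{j}\ge \cardConst{k/n}\,Mk$.

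To build $p^{\ast}$, for each $i$ I would choose a vector $b^{(i)}\in B_i$ of maximum Hamming weight $w_i:=\hammingWeight{b^{(i)}}$ and set $p^{\ast}=\sum_{i=1}^{M}b^{(i)}$; since each $b^{(i)}\in B_i$ and $\NbinaryBlocks{V}=\prod_i B_i$, we have $p^{\ast}\in\sumset{\NbinaryBlocks{V}}$. Each component $\VecComp{p^{\ast}}{j}=\sum_i\VecComp{b^{(i)}}{j}$ is a sum of $M$ bits, hence lies in $\{0,\ldots,M\}$, and $\sum_j\VecComp{p^{\ast}}{j}=\sum_{i=1}^{M}w_i$. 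The one quantitative input needed is a lower bound on each $w_i$. Because $w_i$ is the largest weight occurring in $B_i$, we have $B_i\subseteq\{a:\hammingWeight{a}\le w_i\}$, so $\card{B_i}\le\sum_{l=0}^{w_i}\binom{k}{l}$; combining this with $\card{B_i}\ge 2^{k-n}$ from \eqref{Eq:CardConservation} and the standard estimate $\sum_{l=0}^{w}\binom{k}{l}\le 2^{k\entropy(w/k)}$ (valid for $w\le k/2$) yields $\entropy(w_i/k)\ge 1-n/k$, and therefore $w_i/k\ge\entropy^{-1}(1-n/k)\ge\cardConst{k/n}$ by \eqref{Eq:CardConst}, using that $\entropy^{-1}$ is increasing on $[0,\tfrac12]$ and $1-n/k\ge\tfrac12(1-n/k)$. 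The remaining regime $w_i>k/2$ gives $w_i/k>1/2>\cardConst{k/n}$ directly.

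With $w_i\ge\cardConst{k/n}k$ for every $i$, summing gives $\sum_j\VecComp{p^{\ast}}{j}=\sum_i w_i\ge\cardConst{k/n}\,Mk$. I then apply Lemma~\ref{Lem:optimization} with $m_j=\VecComp{p^{\ast}}{j}$ and $\delta=\cardConst{k/n}$, which lies in $(0,1)$ since $k>n$ forces $1-n/k\in(0,1)$ and hence $\cardConst{k/n}\in(0,\tfrac12)$. The constraints $0\le m_j\le M$ and $\sum_j m_j\ge\delta Mk$ hold, so
\[
\prod_{j=1}^{k}\bigl(1+\VecComp{p^{\ast}}{j}\bigr)\ \ge\ (M+1)^{\cardConst{k/n}k}.
\]
Chaining this with the downward-closure bound of the first paragraph completes the proof.

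The main obstacle is the per-set weight estimate $w_i\ge\cardConst{k/n}k$; everything else is assembling cited results. The delicate points there are invoking the binomial-tail estimate with the correct monotone branch of $\entropy^{-1}$ and separating out the $w_i>k/2$ case so that the hypothesis $w\le k/2$ of $\sum_{l\le w}\binom{k}{l}\le 2^{k\entropy(w/k)}$ is respected. I note that this argument in fact delivers the stronger bound with $\entropy^{-1}(1-n/k)$ in place of $\cardConst{k/n}$; the factor $\tfrac12$ built into the definition of $\cardConst{\cdot}$ leaves the claimed inequality comfortably slack, which is all that is required downstream in Lemma~\ref{lem:capacityLimitOf_G_N_L}.
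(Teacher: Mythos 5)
Your proof is correct and takes essentially the same route as the paper's: select a high-Hamming-weight element $b^{(i)}$ of each $B_i$ via the entropy bound on binomial sums, form $p=\sum_i b^{(i)}$ with $\sum_j \VecComp{p}{j}\ge \cardConst{k/n}Mk$, and combine the downward-closure property of $\sumset{\NbinaryBlocks{V}}$ (Lemma~\ref{Lem:majorization}) with Lemma~\ref{Lem:optimization}. Your packaging of the weight-counting step (upper-bounding $\card{B_i}$ by the Hamming ball of radius $w_i$, rather than observing that the ball of radius $\lfloor\cardConst{k/n}k\rfloor$ is too small to contain $B_i$) is an equivalent variant that incidentally yields the slightly stronger exponent $\entropy^{-1}(1-n/k)$, as you note.
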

\begin{proof}
Let $\delta = \cardConst{k / n}$.
The number of distinct elements in $\{0, 1\}^{k}$ with Hamming weight at
most $\left\lfloor \delta k \right\rfloor$ equals
\begin{align*}
\sum_{j=0}^{\left\lfloor \delta k \right\rfloor } \binom{k}{j} &\leq 2^{k\entropy(\delta)} 
&\Comment{\cite[p.15, Theorem 1]{Hoeff63}}  \\
&= 2^{(k-n)/2} &\Comment{\eqref{Eq:CardConst}} .
\end{align*}
For each $i \in \{1, 2,\ldots, M\}$, $\card{B_i} \ge 2^{k-n}$ from 
\eqref{Eq:CardConservation} and hence there exists $b^{(i)} \in B_i$ such that
$\hammingWeight{b^{(i)}} \ge \delta k$. Let 
$$
p = \sum_{i=1}^{M} b^{(i)} \ \in \sumset{\NbinaryBlocks{V}} .
$$
It follows that $\VecComp{p}{j} \in \{0,1, 2,\ldots, M\}$ for every $j \in \{1, 2,\ldots, k\}$, and
\begin{equation}
\label{Eq:SumOfComp}
\sum_{j=1}^{k} \VecComp{p}{j} = \sum_{i=1}^{M} \hammingWeight{b^{(i)}} \ge 
\delta M k.
\end{equation}
%
The number of vectors $q$ in $\{0, 1,\ldots, M\}^k$ such that $q \preceq p$ equals 
$\displaystyle \prod_{j=1}^{k} \left( 1 + \VecComp{p}{j} \right)$, and from Lemma~\ref{Lem:majorization}, 
each such vector is also in $\sumset{\NbinaryBlocks{V}}$. 
Therefore, 
\begin{align*}
\card{\sumset{\NbinaryBlocks{V}}} &\ge \prod_{j=1}^{k} \left( 1 +
\VecComp{p}{j} \right) \\
&\ge \left(M+1\right)^{\delta k} &\Comment{\eqref{Eq:SumOfComp}
and Lemma~\ref{Lem:optimization}} .
\end{align*}
Since $\delta = \cardConst{k / n}$, 
the result follows. 
\end{proof}
\clearpage
\bibliographystyle{ieeeS}
\small
\bibliography{topology} 

\end{document}